\theoremstyle{definition}
\newtheorem{definition}{Definition}[section]
\newtheorem{theorem}{Theorem}[section]
\newtheorem{lemma}[theorem]{Lemma}
\begin{document}
\title{General framework for constructing fast and near-optimal machine-learning-based decoder of the topological stabilizer codes}
\author{Amarsanaa Davaasuren}
%\thanks{D. Amarsanaa and Y. Suzuki contributed equally}
\email[]{amarsanaa137@qc.rcast.u-tokyo.ac.jp}
\affiliation{Department of Applied Physics, Graduate School of Engineering, The University of Tokyo, 7-3-1 Hongo, Bunkyo-ku, Tokyo 113-8656, Japan}
\author{Yasunari Suzuki}
%\thanks{D. Amarsanaa and Y. Suzuki contributed equally}
\email[]{suzuki@qi.t.u-tokyo.ac.jp}
\affiliation{Department of Applied Physics, Graduate School of Engineering, The University of Tokyo, 7-3-1 Hongo, Bunkyo-ku, Tokyo 113-8656, Japan}
\affiliation{Photon Science Center, Graduate School of Engineering, The University of Tokyo, 7-3-1 Hongo, Bunkyo-ku, Tokyo 113-8656, Japan}
\author{Keisuke Fujii}
\email[]{fujii@qi.t.u-tokyo.ac.jp}
\affiliation{JST, PRESTO, 4-1-8 Honcho, Kawaguchi, Saitama, 332-0012, Japan}
\affiliation{Department of Physics, Graduate School of Science, Kyoto University, Kitashirakawa-Oiwakecho, Sakyo, Kyoto 606-8502, Japan}
\author{Masato Koashi}
\email[]{koashi@qi.t.u-tokyo.ac.jp}
\affiliation{Department of Applied Physics, Graduate School of Engineering, The University of Tokyo, 7-3-1 Hongo, Bunkyo-ku, Tokyo 113-8656, Japan}
\affiliation{Photon Science Center, Graduate School of Engineering, The University of Tokyo, 7-3-1 Hongo, Bunkyo-ku, Tokyo 113-8656, Japan}
\date{\today}
\begin{abstract}
Quantum  error correction is an essential technique for constructing a scalable quantum computer. 
In order to implement quantum error correction with near-term quantum devices, a fast and near-optimal decoding method is demanded. 
A decoder based on machine learning is considered as one of the most viable solutions for this purpose, since its prediction is fast once training has been done, and it is applicable to any quantum error correcting code and any noise model. 
So far, various formulations of the decoding problem as the task of machine learning have been proposed. 
Here, we discuss general constructions of machine-learning-based decoders. 
We found several conditions to achieve near-optimal performance, and proposed a criterion which should be optimized when a size of training data set is limited. 
We also discuss preferable constructions of neural networks, and proposed a decoder using spatial structures of topological codes using a convolutional neural network.
We numerically show that our method can improve the performance of machine-learning-based decoders in various topological codes and noise models.
\end{abstract}
\pacs{}

\maketitle

\section{Introduction}
In order to build a scalable quantum computer, quantum error correction (QEC) \cite{kitaev1997quantum, aharonov1997fault, knill1998resilient} is a vital technique for achieving reliable computation. According to the theory of QEC, if the noise strength is smaller than a certain threshold value, we can protect logical qubits encoded in physical qubits from the noise. Supported by extensive experimental efforts, the noise level of the quantum operations on arrays of qubits is now approaching and meets the threshold value. Therefore, a demonstration of QEC in a fully fault-tolerant settings is considered to be a milestone for the near-term quantum devices \cite{kelly2015state,corcoles2015demonstration,riste2015detecting}. 
Topological codes \cite{kitaev2003fault,dennis2002topological,lidar2013quantum} are a family of quantum error correcting codes inspired by topological nature in the condensed matter physics \cite{kitaev2003fault}. Since the topological codes such as surface codes \cite{dennis2002topological, bravyi1998quantum, fowler2012surface} have both high experimental feasibility and high performance \cite{wang2003confinement,wang2011surface,fowler2012towards,stephens2014fault}, they are considered as the most promising candidate of quantum error correcting codes. 

In QEC, information on occurrence of physical errors is measured as a syndrome value. A suitable recovery operation is estimated from the syndrome so that the original state of the logical qubits is decoded with high success probability. 
Unfortunately, constructing an optimal decoder is computationally hard in general.
Thus, massive efforts have been paid for developing efficient and near-optimal decoders.
One approach is to use the most likely physical errors that are consistent with the observed syndrome value as a recovery operation. This scheme is called the {\it minimum-distance (MD) decoder}. Though this decoding method is not necessarily optimal, it shows almost optimal performance \cite{wang2011surface,fowler2012towards,stephens2014fault}. 
In the case of the surface codes, if we can assume that bit-flip (Pauli $X$) and phase-flip (Pauli $Z$) errors are uncorrelated, we can construct an efficient MD decoder using minimum-weight perfect matching. 
However, if bit-flip and phase-flip errors are correlated or if we use other codes, even MD decoding is not efficiently implementable \cite{hsieh2011np}. 
Some of these problems can be avoided by the use of geometrically local features of the topological codes. 
For example, as for color codes \cite{bombin2007homological}, we can perform decoding by projecting color code to a surface code \cite{delfosse2014decoding}. 
Another approach is to use renormalization group method \cite{duclos2010fast}, which is applicable to any topological codes including the surface and color codes. 
While these approaches have been improved, there is unavoidable trade-off between the performance and time efficiency of the decoder. 
For the first experimental realization of QEC on near-term devices, more efficient and near-optimal decoders are demanded.

In this article, we discuss a general construction of machine-learning-based decoders.
Recently, the technology of machine learning has been applied to various theoretical and experimental researches of quantum physics, such as classification of readout signals in experiments \cite{magesan2015machine}, simulation of a quantum system \cite{carleo2017solving}, classification of the phase of matter \cite{carrasquilla2017machine}, data compression of the quantum state \cite{romero2017quantum}, and decoding in QEC \cite{torlai2017neural,varsamopoulos2017decoding,baireuther2018machine,krastanov2017deep,breuckmann2018scalable}. 
In the machine-learning-based decoder, we construct a prediction model which outputs a recovery operator from a given syndrome value. 
The prediction model is trained with many correct pairs of syndrome values and correct recovery operations before prediction.
While the training task may take a long time, it is required only once before many runs of prediction, and each prediction is expected to be performed fast. 
Thus, the machine-learning-based decoder is one of the best solutions for demonstrating experimental QEC in near-term quantum devices. 

As a prediction model, artificial neural network is believed to have large representation power, and is suitable for constructing machine-learning-based decoder. 
Recently, the performances of machine-learning-based decoders with various neural networks have been numerically studied, such as restricted Boltzmann machine \cite{torlai2017neural}, multi-layer perceptron \cite{varsamopoulos2017decoding}, recurrent neural network \cite{baireuther2018machine}, and deep neural network \cite{krastanov2017deep}.
The machine-learning-based decoder using a neural network is called {\it neural decoders} \cite{torlai2017neural}. 
All these existing methods numerically showed that the performance of the neural decoder is superior to the known efficient decoders when sufficiently large amount of the training data set is supplied. 
However, the following three points have yet to be understood.
The first one is how the decoding problem should be translated to the task of machine learning in order to obtain faster learning and better prediction. So far, each of the previous studies introduces its own construction of the data set and neural network with little consideration on this point.
Second, the spatial feature of the topological codes has not been considered in the construction of the neural decoder, except a very recent study \cite{breuckmann2018scalable} that was carried out independently of this work. 
While it is expected that the performance of the neural decoder is improved by explicitly considering the spatial arrangement of the syndrome, the spatial information has not been given to the neural network explicitly. 
Finally, the applicability of the neural decoder to various topological codes is not known. The neural decoder is benchmarked only with surface codes \cite{torlai2017neural,varsamopoulos2017decoding,baireuther2018machine,krastanov2017deep,breuckmann2018scalable}. Therefore, it has not been known whether the neural decoder is applicable to other codes, such as color codes. 

We have addressed all of these points in this paper. 
First, we discuss how the decoding problem should be formulated as the task of machine learning. 
We propose a general framework for constructing a neural decoder, {\it linear prediction framework}, to elucidate the factors that determine the performance of the decoders. 
We propose a criterion called {\it normalized sensitivity} which should be optimized for constructing a near-optimal neural decoder.
Then, we propose specific construction of a training data set which minimizes the normalized sensitivity. We call these constructions as {\it uniform data construction}.
We also propose the use of construction of neural networks, which explicitly utilize spatial structure of the topological codes. 
We show that the performance of the neural decoder is improved with these techniques, and it shows better performance than that of a decoder using minimum-weight perfect matching with $10^6$ data set at distance $d=11$ in the surface code under a depolarizing noise. 
We show that the neural decoder is also applicable to the color codes. 
The performance of the neural decoder for the color codes also reaches that of the MD decoder in small distances.

\subsection*{Organization of the article}
In Sec II, we overview preliminary topics. 
We review a scheme of QEC in the case of stabilizer codes. We explain specific constructions of the topological codes, the surface and color codes. We also review the basics of the supervised machine learning with neural networks in this section.
In Sec III, we address the question of how the neural decoder should be constructed. We propose a general framework, linear prediction framework, in this section. 
We introduce a quantity called the normalized sensitivity, and argue that it serves as a criterion for better performance of decoders for topological stabilizer codes.
We also propose uniform data construction, which consists of specific instructions to optimize the normalized sensitivity for surface codes and color codes.
We numerically confirm that the performance of the neural decoder is improved with this construction in the case of the surface and color codes.
In Sec IV, we propose a network construction which explicitly utilize the spatial information of the topological codes. We confirm that this construction also improves the performance of the neural decoder.
Finally, we summarize this paper in Sec V.

\section{Preliminary} 
In this section, we review the basic concepts and introduce notations used in this paper. We first review a scheme of QEC. We also introduce well-known topological codes and decoders.
The scheme of supervised machine learning with neural network and its terminologies are also explained in this section. 

\subsection{Quantum error correction}
We consider the case where $k$ logical qubits are encoded in $n$ physical qubits. 
We assume that any noise can be represented as a probabilistic Pauli operation on the $n$ physical qubits.
We denote Pauli operators on a single qubit as $\{I,X,Y,Z\}$, and the Pauli operator $A$ on the $i$-th physical qubit as $A_i$. 
When we consider operations on the $n$ physical qubits, we ignore the global phase of the state and operator. 
Then, we can represent any physical error as $E \in \{I,X,Y,Z\}^{\otimes n}$.
A weight $w(E)$ is defined for a Pauli operator $E$ on the $n$ physical qubits as the number of the physical qubits to which the Pauli operator $E$ is non-trivially applied. 

In the framework of stabilizer codes \cite{gottesman1997stabilizer}, the code is defined by $2^{n-k}$ stabilizer operators $\mathcal{L}_I$ generated by $n-k$ Pauli operators $\mathcal{L}_I := \langle \{ S_i \} \rangle$ ($1 \leq i \leq n-k$), where $S_i \in \pm \{I,X,Y,Z\}^{\otimes n}$, $-I \notin \langle \{ S_i \} \rangle$, and they commute with each other. 
The logical space of the code is defined as the subspace which has eigenvalue $+1$ for all the stabilizer operators, i.e., $S_i \ket{\psi} = \ket{\psi}$ for all $i$.
We denote the normalizer of the stabilizer operators as $\mathcal{L}$. We call elements in $\mathcal{L} \setminus \mathcal{L}_I$ as logical operators. Each stabilizer operator acts on the logical space trivially, and each logical operator acts on the logical space non-trivially. 
A distance $d$ of the code is defined as $d := \min_{L \in \mathcal{L}\setminus \mathcal{L}_I} w(L)$. 
The code which encodes $k$ logical qubits in $n$ physical qubit with distance $d$ is called [[$n,k,d$]] code. 

The occurrence of a physical error is detected as the outcome of stabilizer measurement $\bm{s}$, where $\bm{s}^{\rm T} \in \{0,1\}^{n-k}$ and the $i$-th element $s_i$ is the measurement outcome of the $i$-th stabilizer operator $S_i$. 
We call $\bm{s}$ the syndrome vector.
To recover the original state of the logical qubits, we estimate a recovery Pauli operator $\hat{T}(\bm{s}) \in \{I,X,Y,Z\}^{\otimes n}$ from the observed syndrome vector $\bm{s}$ so that the total operation including the physical error acts on the logical space trivially with high probability. 
The mapping from the syndrome $\bm{s}$ to the recovery operator $\hat{T}(\bm{s})$ is called decoder $\hat{T}$. 
The logical error probability $p_{\rm L}$ is defined as the probability with which the total operation becomes logically non-trivial.
Our purpose is to construct efficient decoder $\hat{T}$ which minimizes the logical error probability $p_{\rm L}$.

\subsection{Binary representation of stabilizer code}
It is convenient to translate the calculation in the stabilizer codes into a binary calculation in GF(2). 
In GF(2), addition $\oplus$ is performed with modulo 2. 
We relate the Pauli operators on the $i$-th physical qubit to another representation 
\begin{eqnarray}
I_i \mapsto \sigma_{00}^{(i)}, X_i \mapsto \sigma_{10}^{(i)}, Y_i \mapsto \sigma_{11}^{(i)}, Z_i \mapsto \sigma_{01}^{(i)}.
\end{eqnarray}
Then, a Pauli operator $P$ on the $n$ physical qubits can be described as 
\begin{eqnarray}
P = \alpha \bigotimes_{i=1}^{n} \sigma_{v_i v_{n+i}}^{(i)},
\end{eqnarray}
where $\alpha \in \{\pm 1, \pm i\}$ and $v_i \in \{0,1\}$ ($1 \leq i \leq 2n$).
We define a binary mapping 
\begin{eqnarray}
b(P) := \bm{v},
\end{eqnarray}
where $\bm{v} := (v_1,v_2,\cdots , v_{2n-1}, v_{2n}) \in \{0,1\}^{2n}$ is a row vector, for the Pauli operator $P = \alpha \bigotimes_{i=1}^{n} \sigma_{v_i v_{n+i}}$. 
For arbitrary two Pauli operators $P$ and $P'$, $b(P)=b(P')$ means that the two Pauli operators are equivalent up to a global phase.
The product of two Pauli operators $P$ and $P'$ is represented by the sum $b(PP') = b(P) \oplus b(P')$.
With $2n \times 2n$ matrix $\Lambda = \left( \begin{matrix} 0 & I \\ I & 0 \end{matrix} \right)$, the commutation relation of two Pauli operators $P$ and $P'$ is given by $b(P) \Lambda b(P')^{\rm T}$, which is 0 if $P$ and $P'$ commute, and 1 if anti-commute. 
We denote this commutation relation in terms of the binary representation $\bm{v},\bm{v}' \in \{0,1\}^{2n}$ as $c(\bm{v},\bm{v}') := \bm{v} \Lambda \bm{v}^{\prime {\rm T}}$.
The weight of the binary representation of a Pauli operator $w(\bm{v})$ is defined so as to be $w(b(P)) = w(P)$, which is equivalent to define the weight as the number of indices $i$ ($1\leq i \leq n$) such that 
\begin{eqnarray}
\label{eq:weightdef}
v_i \oplus v_{i+n} \oplus v_i v_{i+n} = 1.
\end{eqnarray}
We use $h(\bm{v})$ for the hamming weight of $\bm{v}$ as a binary string, namely, the number of indices $i$ ($1\leq i \leq 2n$) such that $v_i = 1$.
We denote the $i$-th row vector of the matrix $M$ as $(M)_i$. 
The length of the vector $\bm{v}$ is represented as $|\bm{v}|$.
With this definition, the normalizer of the stabilizer operators $\mathcal{L}$ is defined as 
\begin{eqnarray}
b(\mathcal{L}) = \{\bm{v} | \bm{v} \in \{0,1\}^{2n}, c(\bm{v}, \bm{v}') = 0, \forall \bm{v}' \in b(\mathcal{L}_I) \} 
\end{eqnarray}
since the normalizer of the stabilizer operators is equivalent to the centralizer of that in the current formalism.
Note that the stabilizer group can be defined with the normalizer $\mathcal{L}$ as 
\begin{eqnarray}
b(\mathcal{L}_I) = \{\bm{v} | \bm{v} \in \{0,1\}^{2n}, c(\bm{v}, \bm{v}') = 0, \forall \bm{v}' \in b(\mathcal{L}) \}.
\end{eqnarray}

With this formalism, QEC is translated as follows. 
The physical error $E$ can be represented as a row binary vector $\bm{e} := b(E) \in \{0,1\}^{2n}$ which occurs with a certain probability $p_{\bm{e}}$. 
The syndrome vector $\bm{s}$ is given by a column vector $\bm{s}(\bm{e}) := H_c \Lambda \bm{e}^{\rm T}$, where $H_c$ is an $(n-k) \times 2n$ matrix of which the $i$-th row vector $(H_c)_i$ is $b(S_i)$. 
The matrix $H_c$ is called check matrix.
In binary representation, we denote a decoder as $\bm{r}$ which maps a given syndrome vector $\bm{s}^{\rm T} \in \{0,1\}^{n-k}$ to a binary representation of a recovery operator $\bm{r}(\bm{s}) \in \{0,1\}^{2n}$. 
It is convenient to define {\it pure error} $\bm{t}(\bm{s})$ \cite{duclos2010renormalization} to represent various vectors succinctly. 
The pure error is a function which maps a syndrome vector $\bm{s}^{\rm T} \in \{0,1\}^{n-k}$ to a vector $\bm{t}(\bm{s}) \in \{0,1\}^{2n}$, and satisfies $\bm{t}(\bm{s}(\bm{e})) \oplus \bm{e} \in b(\mathcal{L})$ for an arbitrary $\bm{e} \in \{0,1\}^{2n}$. 
We also introduce a $2k \times 2n$ generator matrix $G$ such that the elements of $\mathcal{L}$ is uniquely represented as follows:
\begin{eqnarray}
\label{binary_unique_rep}
b(\mathcal{L}) = \{ \bm{l}_0 \oplus \bm{w} G | \bm{l}_0 \in b(\mathcal{L}_I), \bm{w} \in \{0,1\}^{2k} \}.
\end{eqnarray}
Note that the generator matrix $G$ satisfies $H_c \Lambda G^{\rm T} = 0$. 
We define the cosets $\mathcal{L}_{\bm{w}}$ with $\bm{w} \in \{0,1\}^{2k}$ as 
\begin{eqnarray}
\mathcal{L}_{\bm{w}} = \{ \bm{l}_0 \oplus \bm{w} G | \bm{l}_0 \in \mathcal{L}_{0}\}.
\end{eqnarray}
Note that $\mathcal{L}_0 = b(\mathcal{L}_I)$.
Given $\bm{t}(\bm{s})$ and $G$, an arbitrary physical error $\bm{e} \in \{0,1\}^{2n}$ is uniquely decomposed as 
\begin{eqnarray}
\bm{e} = \bm{l}(\bm{e}) \oplus \bm{w}(\bm{e}) G \oplus \bm{t}(\bm{s}(\bm{e}))
\end{eqnarray}
with $\bm{l}(\bm{e}) \in \mathcal{L}_0$ and $\bm{w}(\bm{e}) \in \{0,1\}^{2k}$. 
We say $\bm{w}(\bm{e})$ as the class of $\bm{e}$.

A logical decoder with a recovery operation $\bm{r}(\bm{s})$ can correct an error $\bm{e}$ if and only if $\bm{e} \oplus \bm{r}(\bm{s}(\bm{e})) \in \mathcal{L}_0$. Under an error model $\{p_{\bm{e}}\}$, the logical error probability is given by 
\begin{multline}
p_{\rm L} = {\rm Pr}_{\bm{e} \sim \{p_{\bm{e}}\}} [\bm{e} \oplus \bm{r}(\bm{s}(\bm{e})) \notin \mathcal{L}_{0} \}] \\
= {\rm Pr}_{\bm{e} \sim \{p_{\bm{e}}\}} \left[\bm{r}(\bm{s}(\bm{e})) \oplus \bm{w}(\bm{e})G \oplus \bm{t}(\bm{s}(\bm{e})) \notin \mathcal{L}_0 \right] \}].
\end{multline}

\subsection{Optimal and near-optimal decoders}
An optimal decoder is defined as the decoder which minimizes the logical error probability. 
Let us write the conditional probability of $\bm{w}(\bm{e}) \in \{0,1\}^{2n}$ for a given syndrome vector $\bm{s}$ as
\begin{eqnarray}
    \label{condition:optimaldecoder}
	q_{\bm{s}} (\bm{w}) := {\rm Pr}_{\bm{e} \sim \{p_{\bm{e}}\}} \left[ \bm{w}(\bm{e}) = \bm{w} | \bm{s}(\bm{e}) = \bm{s}  \right].
\end{eqnarray}
Since the decoder is only provided with $\bm{s}$ and distinct recovery operators are needed for correcting errors with different values of $\bm{w}(\bm{e})$, the maximum probability of successful correction given $\bm{s}$ is $\max_{\rm \bm{w} \in \{0,1\}^{2k}} q_{\bm{s}(\bm{w})}$. We thus say a decoder is optimal if $\bm{r}(\bm{s})$ satisfies 
\begin{eqnarray}
\label{cond:optimal}
%q_{\bm{s}}(\bm{w}(\bm{r}(\bm{s}))) = \max_{\bm{w} \in \{0,1\}^{2k}} q_{\bm{s}}(\bm{w})
{\rm Pr}_{\bm{e} \sim \{p_{\bm{e}}\}}\left[ \bm{e} \oplus \bm{r}(\bm{s}) \in \mathcal{L}_0 | \bm{s}(\bm{e}) = \bm{s} \right] = \max_{\bm{w} \in \{0,1\}^{2k}} q_{\bm{s}}(\bm{w}) \nonumber  \\
\end{eqnarray}
for any $\bm{s}$ with 
\begin{eqnarray}
	{\rm Pr}_{\bm{e} \sim \{p_{\bm{e}}\}}\left[ \bm{s}(\bm{e}) = \bm{s} \right] > 0.
\end{eqnarray}
Though the definition of $\bm{w}(\bm{e})$ is dependent on the choice of $\bm{t}(\bm{s})$ and $G$, the optimality of a decoder $\bm{r}(\bm{s})$ is independent of the choice.

Another important definition of a near-optimal decoder is the minimum-distance (MD) decoder. An MD decoder chooses the most probable physical error $\bm{e}^*(\bm{s})$ which satisfies
\begin{eqnarray}
p_{\bm{e}^*(\bm{s})} \geq p_{\bm{e}} \forall \bm{e} \in \{\bm{e} | \bm{s}(\bm{e}) = \bm{e}\}
\end{eqnarray}
as a recovery operation.
Though the maximally likelihood physical error $\bm{e}^*(\bm{s})$ does not necessarily satisfy the condition Eq.\,(\ref{cond:optimal}), it is empirically known that the MD decoder achieves near-optimal performance. 

It is known that the MD decoder can be constructed efficiently in limited cases of the code and the error model. 
For example, we can construct an efficient MD decoder for the surface code under independent bit-flip and phase-flip errors. In this case, we can reduce the decoding problem into minimum-weight perfect matching (MWPM), which can be efficiently solved with blossom algorithm \cite{edmonds1965paths}. 
When bit-flip and phase-flip errors are correlated, we can still construct a decoder with MWPM by ignoring the correlation, resulting in an sub-optimal decoder.
We call such a decoder as a MWPM decoder.

\subsection{Topological code}
We consider two types of the topological codes in this article: surface codes and color codes.
The qubit allocation of the surface code is shown in Fig.\,\ref{fig:sc}.
The $[[2d^2-2d+1,1,d]]$ code and the $[[d^2,1,d]]$ code are shown in Fig.\,\ref{fig:sc}\subref{fig:sc1} and \subref{fig:sc2}, respectively. 
In both figures, the physical qubits are located on the vertices of the colored faces. Each red face represents a stabilizer operator which is a product of Pauli $X$ operators on the physical qubits of its vertices. Each blue face represents one with Pauli $Z$ operators.
\begin{figure*}[tp]
 \centering
  \begin{minipage}[]{0.49\hsize}
   \includegraphics[clip,width=7.5cm]{./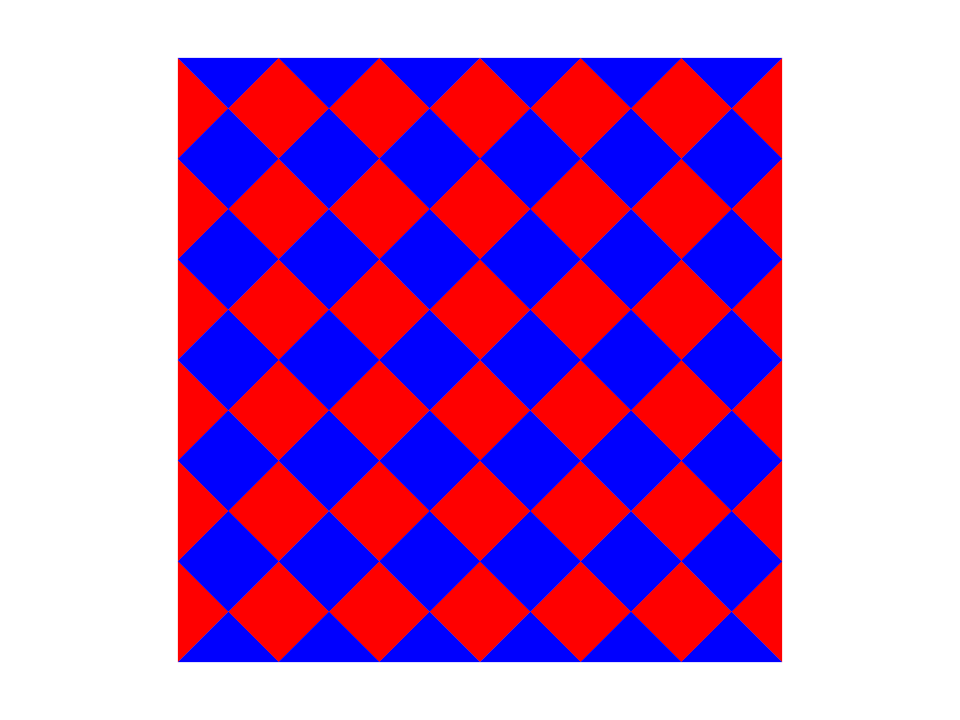}
   \subcaption{}
   \label{fig:sc1}
  \end{minipage}
  \begin{minipage}[]{0.49\hsize}
   \includegraphics[clip,width=7.5cm]{./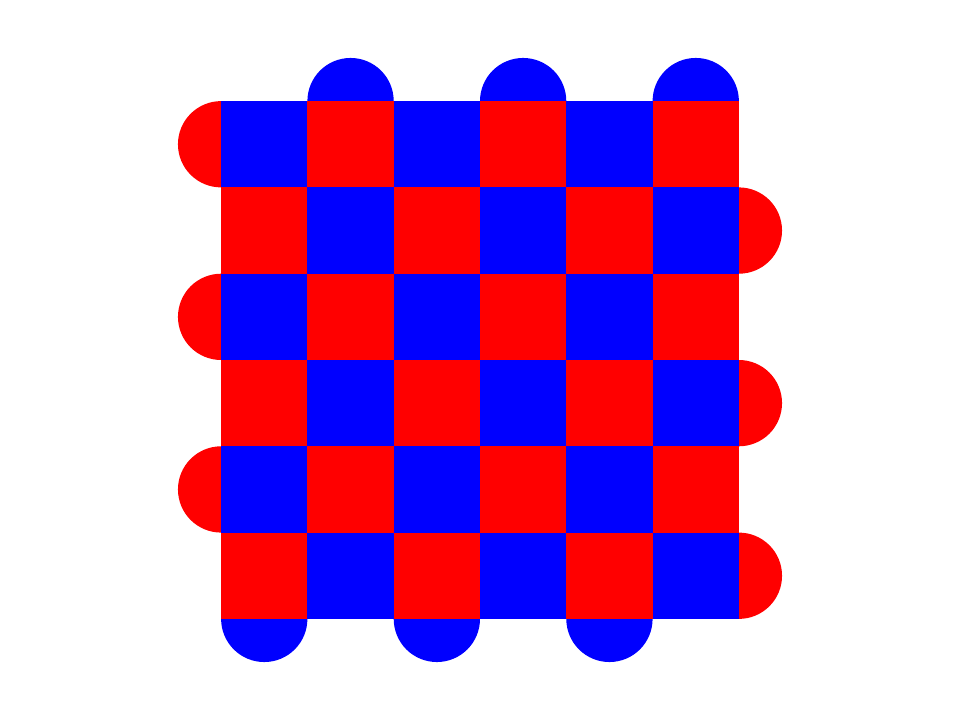}
   \subcaption{}
   \label{fig:sc2}
  \end{minipage}
  \caption{The qubit allocation of the surface codes with (a) [[$2d^2 -2d + 1,1,d$]] code and (b) [[$d^2,1,d$]] code. Each vertex corresponds to a physical qubit. Red and blue faces correspond to stabilizer measurements with $X$ and $Z$ Pauli operators, respectively.}
  \label{fig:sc}
\end{figure*}

The color codes consist of the lattice which has 3-colored faces: red, green, and blue. 
Two types of codes, the [4,8,8]-color code and the [6,6,6]-color code, are shown in Fig.\,\ref{fig:cc}\subref{fig:cc_layout1} and \subref{fig:cc_layout2}, respectively.
The physical qubits are also located on each vertex of the faces. Each colored face represents a stabilizer operator, including nontrivial Pauli operators for its vertices.
The $[4,8,8]$-color code is a [[$\frac{1}{2}d^2 + d - \frac{1}{2}, 1 , d$]] code, and the $[6,6,6]$-color code is a [[$\frac{3}{4}d^2 + \frac{1}{4},1,d$]] code.
\begin{figure*}[tp]
 \centering
  \begin{minipage}[]{0.49\hsize}
   \includegraphics[clip,width=7.5cm]{./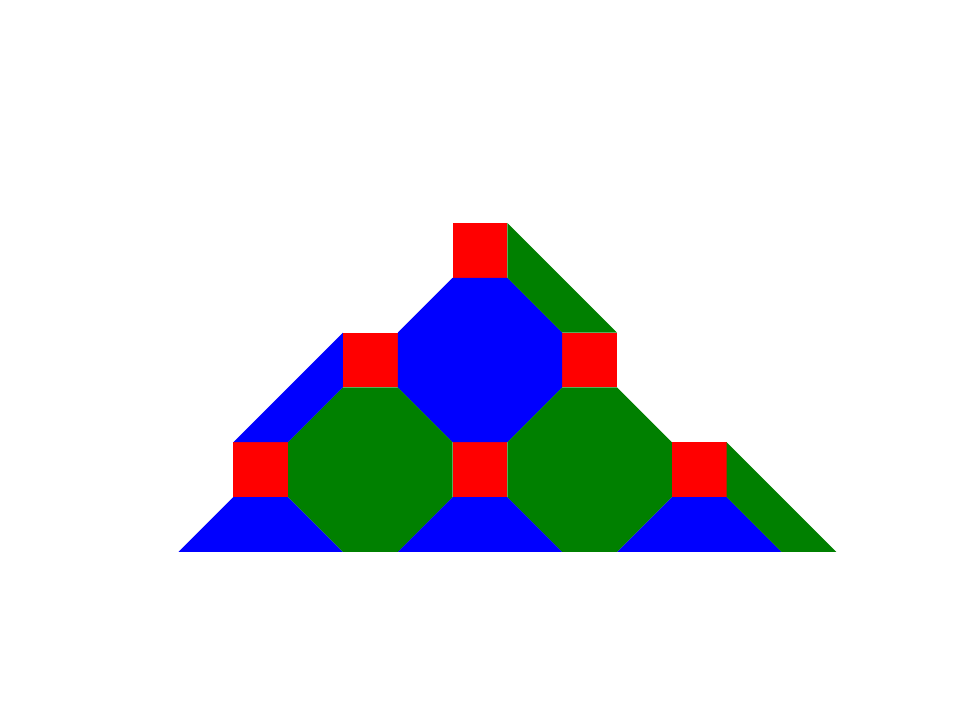}
   \subcaption{}
   \label{fig:cc_layout1}
  \end{minipage}
  \begin{minipage}[]{0.49\hsize}
   \includegraphics[clip,width=7.5cm]{./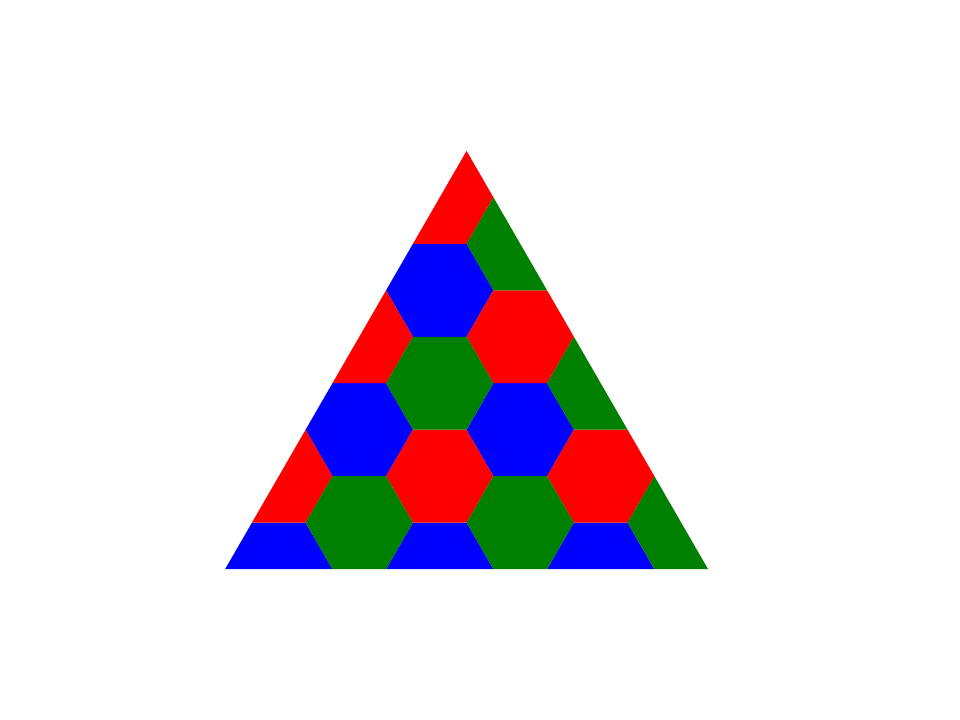}
   \subcaption{}
   \label{fig:cc_layout2}
  \end{minipage}
 \caption{The qubit allocation of the [4,8,8]-color code and the [6,6,6]-color code. Each vertex corresponds to a physical qubit, and each face corresponds to a stabilizer operator. }
 \label{fig:cc}
\end{figure*}

\subsection{Supervised machine learning}

Supervised machine learning is a branch of artificial intelligence that requires a training data set $\{(\bm{x}_1,\bm{y}_1),\ldots,(\bm{x}_N,\bm{y}_N)\}$ which consists of feature data $\bm{x}_i$ and its corresponding label data $\bm{y}_i$. Its aim is to prepare a model that takes the feature data as input and outputs an inferred label for it. The model has a predetermined structure and {\it trainable} parameters $\bm{\theta}$. 

Unlike a simple dictionary, the model is expected to infer a label even for an unseen feature data. 
This is achieved by optimizing the model parameters $\bm{\theta}$ for the training data set. This process is commonly called {\it training}.
Specifically, during its training, the difference between the output of the model $\bm{y}'$ to a feature and the correct label $\bm{y}$ is evaluated with real-valued loss function $L(\bm{y},\bm{y}')$. 
The loss is minimized if and only if the prediction is exactly the same as the correct label. 
The training data is used to optimize the model parameters $\bm{\theta}$ to reduce the loss. This can be done with standard optimization methods such as stochastic gradient descent:
\begin{eqnarray}
	\bm{\theta} \leftarrow \bm{\theta} - \gamma \nabla_{\bm{\theta}} L ,
\end{eqnarray}
where $\gamma\in \mathbb{R}$ is a learning rate and $L$ is calculated for a randomly chosen subset, called a {\it batch}, of the training data set. 
As we can see here, it is required that the loss function should be differentiable, such as L2 distance $||\bm{y}-\bm{y}'||^2_2$.
Once trained, we can apply the model to an unseen feature data, and obtain its predicted label with simple calculations of the network parameters and the input feature data.

Artificial Neural Network (ANN) is a machine learning model inspired by neural structure found in nature. Here, we assume that neurons are real-valued functions and a layer $\textit{\textbf{h}}$ is a vector of the neurons.
Multilayer perceptron (MLP) is one of the simplest ANN  which, as its name suggests, consists of multiple layers of neurons including the input and output layers. Here each neuron in a layer is connected to all neurons in the neighboring layers with trainable weights and biases, and yet completely independent of the other neurons in its own layer. Mathematically, this can be described as
\begin{eqnarray}
\textit{{h}}_{i}^{(n)}=A(\sum_iW_{ij}^{(n,n-1)} \textit{h}_j^{(n-1)} +b_i^{(n)})
\end{eqnarray}
where $A$ is a nonlinear activation function, $\textit{h}_{i}^{(n)}$ is the $i$-th neuron in the $n$-th layer, $b_i^{(n)}$ is the bias added to the $i$-th neuron in the $n$-th layer, and $W_{ij}^{(n,n-1)}$ is the weight connecting the $i$-th neuron in the $n$-th layer to the $j$-th neuron in the ($n-1$)-th layer.
We illustrate this in Fig.\,\ref{fig:mlp}.
\begin{figure}[tp]
	\centering
	\includegraphics[clip, width=7.5cm]{./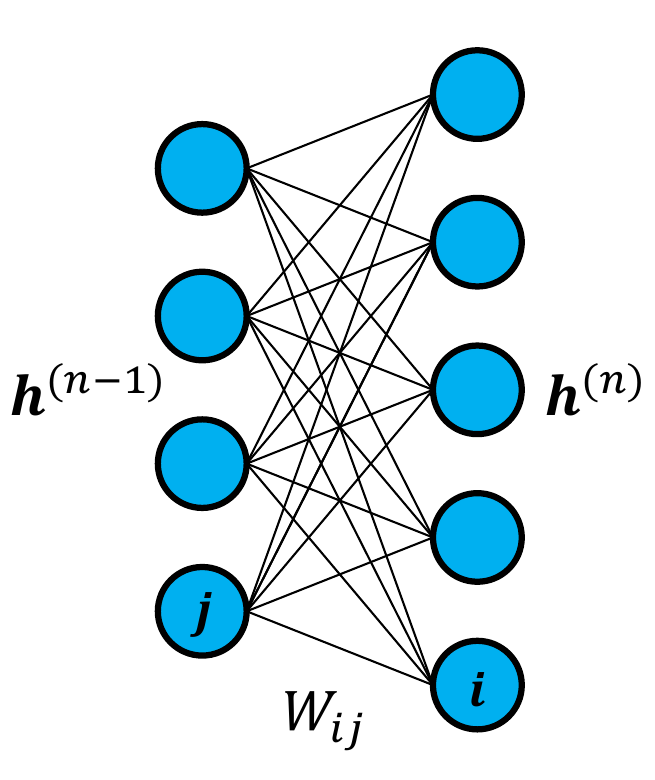}
	\caption{Feed forward network. The $j$-th neuron in the ($n-1$)-th layer is connected to the $\textit{i}$-th neuron in the $\textit{n}$ via weight $W_{ij}$.} \label{fig:mlp}
\end{figure}
Here, the model parameters are the weights and biases. In this model, the input information propagates in forward direction, i.e., from the input nodes to the output nodes. At the output nodes, the loss value is calculated from the model output and the correct label. 
In order to update the model parameters, the gradient of the loss function $\nabla_{\bm{\theta}} L$ is evaluated with the back-propagation method. 
According to the universal approximation theorem \cite{hornik1989multilayer}, any continuous function can be approximated by an MLP model of a finite size, though its structure is simple and compact.
Thus, we expect a neural decoder with a MLP model can achieve near-optimal performance under an appropriate training process.

%This universal approximation allows high performance implementation of a decoder without being explicitly programmed by human.  
%Therefore, for near-term quantum devices, it is a promising approach to experimentally demonstrate QEC. 

\section{ Construction of tasks of machine-learning-based decoders}
In general, achievable accuracy in machine learning with a given size of training data depends on the formulation of the prediction task. In order to construct a near-optimal neural decoder, it is vital to consider what is a preferable formulation of the prediction task. However, this point has not been discussed in a unified view in the existing methods \cite{torlai2017neural,varsamopoulos2017decoding,baireuther2018machine,krastanov2017deep}.
In this section, we discuss how the decoding problem should be formulated as a task of machine learning in order to achieve near-optimal performance. 
To this end, we propose a general framework, which we call {\it linear prediction framework}. 
In this framework, we can analytically study the behavior of the neural decoder, and can discuss requirements for achieving near-optimal performance. 
Based on the discussion, we propose a criterion, {\it normalized sensitivity}, which should be optimized in defining the label for constructing a good decoder.
We show specific constructions which minimize normalized sensitivity for the surface codes and the color codes, which we call {\it uniform data construction}.
Then, we numerically confirm that the performance of the neural decoder is improved with the construction. We also confirm that this construction is also applicable to the color codes.

\subsection{Linear prediction framework}
\label{LPF_sec}
In order to discuss the behavior of the neural decoder in a unified view, we consider a neural decoder with the following two specifications.
First, the neural decoder uses the syndrome vector $\bm{s}$ as the feature data to be fed to the trainable model. 
Second, the label data is a binary vector, and the correct label is linearly generated from the physical error vector $\bm{e}$ in GF(2). 
We call a linearly generated label vector $\bm{g}$ as a {\it diagnosis}, and a matrix $H_g$ which generates the diagnosis $\bm{g}:=H_g \Lambda \bm{e}^{\rm T}$ as a diagnosis matrix. We denote the length of the diagnosis vector $\bm{g}$ as $L_g$.
The recovery operator $\bm{r}$ is calculated from the predicted diagnosis $\bm{g}$ and the syndrome $\bm{s}$.
We use an assumed physical error distribution $\{p_{\bm{e}}\}$ only for generating a training data set $\{(\bm{s}_i, \bm{g}_i)\}$, and do not use it for constructing $H_g$ or in the calculation of the recovery operator $\bm{r}$ from $\bm{g}$ and $\bm{s}$.
Though this framework restricts the label to be linearly generated from the physical error, this is general enough to formulate all the constructions described in the existing methods as special cases \cite{torlai2017neural,varsamopoulos2017decoding,baireuther2018machine,krastanov2017deep,breuckmann2018scalable} with small technical exceptions.

Since the actual performance of the neural decoder depends on many factors such as configurations of the training process, the size of the training data set, and details of the network construction, we start with considering the problem under an ideal limit. 
We first consider the problem under the simple 0-1 loss function with an unlimited size of the training data set. 
Then, we relax these impractical assumptions to practical ones.
Though we numerically investigate the case of a single logical qubit ($k=1$) later, we present the formalism for a general value of $k$.

\subsubsection{The neural decoder with the 0-1 loss function and an unlimited training data set}
\label{optimal_delta_sec}
We first consider a hypothetical decoder that can minimize any loss function with an unlimited number of the training data set. Though such an assumption is not practical, it is convenient to reveal the conditions for performing optimal decoding with machine learning in the ideal limit. 
We choose the 0-1 delta function $\delta(\bm{g},\bm{g}')$ as the loss function, which is zero if the predicted and the correct diagnosis are the same, and unity otherwise. 
Let us consider the portion of training data set with a specific value of $\bm{s}$ with ${\rm Pr}_{\bm{e} \sim \{p_{\bm{e}}\}}\left[ \bm{s}(\bm{e}) = \bm{s} \right] > 0$.
If the neural decoder returns diagnosis $\bm{g}$ for the input $\bm{s}$, the total loss for this portion is proportional to the following value,
\begin{eqnarray}
\label{eq:delta_loss}
	L^{(\delta)}_{\bm{s}}(\bm{g}) &:=& \mathbb{E}_{\bm{e} \sim \{p_{\bm{e}}\}} \left[ \delta(\bm{g}, H_g \Lambda \bm{e}^{\rm T}) \middle| \bm{s}(\bm{e}) = \bm{s} \right] \nonumber \\
	&=& 1 - {\rm Pr}_{\bm{e} \sim \{p_{\bm{e}}\}} \left[ H_g\Lambda \bm{e}^{\rm T} = \bm{g} \middle| \bm{s}(\bm{e}) = \bm{s} \right].
\end{eqnarray} 
Let $\bm{g}^{(\delta)}(\bm{s})$ be the output of the ideally trained neural decoder. 
Since it should minimize the total loss for every $\bm{s}$, it satisfies 
\begin{eqnarray}
\label{eq:delta_minimized_loss}
L^{(\delta)}_{\bm{s}} (\bm{g}^{(\delta)}(\bm{s})) &=& \min_{\bm{g}} L^{(\delta)}_{\bm{s}} (\bm{g}).
\end{eqnarray} 
We call this ideal decoder a {\it delta diagnosis decoder} and $\bm{g}^{(\delta)}(\bm{s})$ a {\it delta diagnosis vector}.

We show the condition for a diagnosis matrix $H_g$ to guarantee that we can perform the optimal decoding with the delta diagnosis decoder. 
To this end, we define a property of the diagnosis matrix and introduce a set of diagnosis vectors as follows.

\begin{definition} {\it faithful diagnosis matrix} ---
Given a check matrix $H_c$, we say diagnosis matrix $H_g$ is {\it faithful} if 
\begin{eqnarray}
\label{eq:faithfuldef}
{\rm span}(\{(H_{cg})_i\}) = b(\mathcal{L}),
\end{eqnarray}
or equivalently, 
\begin{eqnarray}
H_{cg} \Lambda \bm{e}^{\rm T} = 0 \leftrightarrow \bm{e} \in \mathcal{L}_{0},
\end{eqnarray}
where 
\begin{eqnarray}
H_{cg} := \left( \begin{matrix} H_c \\ H_g \end{matrix}\right).
\end{eqnarray}
\end{definition}

\begin{definition} {\it faithful diagnosis vectors} ---
Given a check matrix $H_c$, a pure error $\bm{t}(\bm{s})$, and a faithful diagnosis matrix $H_g$, we define $2^{2k}$ faithful diagnosis vectors $\{ \bm{g}_{\bm{s}}(\bm{w}) \}$ ($\bm{w} \in \{0,1\}^{2k}$) associated with a syndrome vector $\bm{s}$ by 
\begin{eqnarray}
\bm{g}_{\bm{s}}(\bm{w}) := H_g \Lambda (\bm{w} G \oplus \bm{t}(\bm{s}))^{\rm T}.
\end{eqnarray}
\end{definition}
Note that the faithful condition of $H_g$ implies that 
\begin{eqnarray}
\bm{w} \mapsto \bm{g}_{\bm{s}}(\bm{w})
\end{eqnarray}
is injective and 
\begin{eqnarray}
\label{eq:faithful_prob}
H_g \Lambda \bm{e}^{\rm T} = \bm{g}_{\bm{s}}(\bm{w}(\bm{e})),
\end{eqnarray}
with $\bm{s} = H_c \Lambda \bm{e}^{\rm T}$.
As a result, when $H_g$ is faithful, we have 
\begin{eqnarray}
\label{eq:loss_faithful}
1- L^{(\delta)}_{\bm{s}}(\bm{g}) = {\rm Pr}_{\bm{e}\sim \{p_{\bm{e}}\}} \left[ \bm{g}_{\bm{s}}(\bm{w}(\bm{e})) = \bm{g} | \bm{s}(\bm{e}) = \bm{s} \right]
\end{eqnarray}
from Eqs.\,(\ref{eq:delta_loss}) and (\ref{eq:faithful_prob}).
Then the injective property of $\bm{g}_{\bm{s}}(\bm{w})$ leads to 
\begin{eqnarray}
\label{eq:loss_and_faithprob}
1- L^{(\delta)}_{\bm{s}}(\bm{g}_{\bm{s}}(\bm{w})) =  q_{\bm{s}}(\bm{w}),
\end{eqnarray}
where $q_{\bm{s}}(\bm{w})$ is defined in Eq.\,(\ref{condition:optimaldecoder}).

When the diagnosis matrix is faithful, we can construct an optimal decoder as follows.
From Eqs.\,(\ref{eq:delta_minimized_loss}) and (\ref{eq:loss_and_faithprob}), we see that the delta diagnosis vector $\bm{g}^{(\delta)}(\bm{s})$ is one of the faithful diagnosis vectors.
We can thus write it in the form 
\begin{eqnarray}
\label{eq:faithful_to_faithprob}
\bm{g}^{(\delta)}(\bm{s}) = \bm{g}_{\bm{s}}(\bm{w}^* (\bm{s})).
\end{eqnarray}
Eqs.\,(\ref{eq:delta_minimized_loss}), (\ref{eq:loss_and_faithprob}), and (\ref{eq:faithful_to_faithprob}) imply that 
\begin{eqnarray}
1-q_{\bm{s}}(\bm{w}^*(\bm{s})) &=& L_{\bm{s}}^{(\delta)} (\bm{g}^{(\delta)}(\bm{s})) \nonumber \\
&=& \min_{\bm{w} \in \{0,1\}^{2k}} (1-q_{\bm{s}}(\bm{w})) \nonumber \\
&=& 1-\max_{\bm{w} \in \{0,1\}^{2k}} q_{\bm{s}}(\bm{w}).
\end{eqnarray}
Since $\bm{g}_{\bm{s}}(\bm{w})$ is injective, one can calculate $\bm{w}^*(\bm{s})$ from the diagnosis $\bm{g}^{(\delta)} (\bm{s})$ and syndrome $\bm{s}$. 
The recovery operator is then chosen as 
\begin{eqnarray}
\bm{r}(\bm{s}) = \bm{w}^*(\bm{s}) G \oplus \bm{t}(\bm{s}).
\end{eqnarray}
For the optimality, we have 
\begin{eqnarray}
{\rm Pr}_{\bm{e} \sim \{p_{\bm{e}}\}}\left[ \bm{e} \oplus \bm{r}(\bm{s}) \in \mathcal{L}_0 | \bm{s}(\bm{e}) = \bm{s} \right] &=& q_{\bm{s}}(\bm{w}^*(\bm{s})) \nonumber \\
&=& \max q_{\bm{s}}(\bm{w})
\end{eqnarray}
for any $\bm{s}$ with ${\rm Pr}_{\bm{e} \sim \{p_{\bm{e}}\}}\left[ \bm{s}(\bm{e}) = \bm{s} \right] > 0$, which satisfies Eq.\,(\ref{cond:optimal}).

We can also prove a converse statement for the cases where $H_g$ is not faithful (see Appendix A), arriving at the following lemma.
\begin{lemma}
\label{lemma_logical_decoder}
If the diagnosis matrix $H_g$ is faithful, there exists a map $\bm{r}^*(\bm{g},\bm{s})$ such that the decoder with $\bm{r}(\bm{s}) = \bm{r}^*(\bm{g}^{(\delta)}(\bm{s}),\bm{s})$ is optimal for arbitrary distribution $\{p_{\bm{e}}\}$.
If the diagnosis matrix $H_g$ is not faithful, no such map exists.
\end{lemma}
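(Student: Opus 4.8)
The plan is to prove the two implications separately and to observe that the forward direction is essentially already contained in the derivation preceding the lemma, so that the substance of the argument lies in the converse.

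For the forward direction I would simply name the map whose existence is claimed. Faithfulness makes the assignment $\bm{w} \mapsto \bm{g}_{\bm{s}}(\bm{w})$ injective, so whenever a diagnosis $\bm{g}$ lies in its image there is a unique $\bm{w}$ with $\bm{g}_{\bm{s}}(\bm{w}) = \bm{g}$; I define $\bm{r}^*(\bm{g},\bm{s}) := \bm{w} G \oplus \bm{t}(\bm{s})$ for such $\bm{g}$ and arbitrarily otherwise. Eqs.\,(\ref{eq:delta_minimized_loss}), (\ref{eq:loss_and_faithprob}), and (\ref{eq:faithful_to_faithprob}) already establish that the delta diagnosis vector has the form $\bm{g}^{(\delta)}(\bm{s}) = \bm{g}_{\bm{s}}(\bm{w}^*(\bm{s}))$ with $q_{\bm{s}}(\bm{w}^*(\bm{s})) = \max_{\bm{w}} q_{\bm{s}}(\bm{w})$, so feeding $\bm{g}^{(\delta)}(\bm{s})$ into this map returns $\bm{w}^*(\bm{s}) G \oplus \bm{t}(\bm{s})$, which the same derivation shows to satisfy Eq.\,(\ref{cond:optimal}) for every $\{p_{\bm{e}}\}$. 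Hence the forward half reduces to this bookkeeping.

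For the converse I would argue by contradiction and exploit the failure of injectivity. Non-faithfulness of $H_g$ furnishes an error $\bm{e}^\dagger$ with $H_{cg}\Lambda (\bm{e}^\dagger)^{\rm T} = \bm{0}$ but $\bm{e}^\dagger \notin \mathcal{L}_0$; thus $\bm{e}^\dagger$ is invisible both to the syndrome ($H_c \Lambda (\bm{e}^\dagger)^{\rm T} = \bm{0}$) and to the diagnosis ($H_g \Lambda (\bm{e}^\dagger)^{\rm T} = \bm{0}$), yet it is logically nontrivial. The device is a pair of error distributions that the decoder cannot tell apart but that demand incompatible recoveries: distribution $A$ placing all weight on the identity error $\bm{0}$, and distribution $B$ placing all weight on $\bm{e}^\dagger$. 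Both generate only the syndrome $\bm{s} = \bm{0}$, and since the conditional diagnosis at $\bm{s}=\bm{0}$ is a point mass at $\bm{0}$ in either case, the ideally trained decoder returns the same delta diagnosis $\bm{g}^{(\delta)}(\bm{0}) = \bm{0}$. Any candidate map is therefore forced to emit the single recovery $\bm{r}^*(\bm{0},\bm{0})$ for both distributions.

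I would then close the contradiction intrinsically, avoiding class-label bookkeeping. Optimality under $A$, via Eq.\,(\ref{cond:optimal}), requires $\bm{r}^*(\bm{0},\bm{0}) \in \mathcal{L}_0$, whereas optimality under $B$ requires $\bm{e}^\dagger \oplus \bm{r}^*(\bm{0},\bm{0}) \in \mathcal{L}_0$, i.e.\ $\bm{r}^*(\bm{0},\bm{0}) \in \bm{e}^\dagger \oplus \mathcal{L}_0$. Because $\bm{e}^\dagger \notin \mathcal{L}_0$, the cosets $\mathcal{L}_0$ and $\bm{e}^\dagger \oplus \mathcal{L}_0$ are disjoint, so no single value can satisfy both and at least one of the two decoders fails to be optimal; consequently no fixed $\bm{r}^*$ works for arbitrary $\{p_{\bm{e}}\}$. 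The main obstacle I anticipate is not the final coset comparison but the setup that makes it bite: one must verify that non-faithfulness genuinely yields a syndrome- and diagnosis-silent but logically nontrivial error, and that $A$ and $B$ produce the identical delta diagnosis so the decoder is truly blind to the difference. Restricting attention to $\bm{s}=\bm{0}$ is legitimate, since optimality is constrained only on syndromes of positive probability, so leaving $\bm{r}^*$ unspecified elsewhere does no harm.
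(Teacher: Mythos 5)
Your forward direction and your treatment of the ``silent nontrivial error'' case are fine and match the paper. The gap is in the very first step of your converse: you assert that non-faithfulness of $H_g$ ``furnishes an error $\bm{e}^\dagger$ with $H_{cg}\Lambda (\bm{e}^\dagger)^{\rm T} = \bm{0}$ but $\bm{e}^\dagger \notin \mathcal{L}_0$.'' That is not true in general. Faithfulness is the equality ${\rm span}(\{(H_{cg})_i\}) = b(\mathcal{L})$, and it can fail in two distinct ways: the span may fail to contain $b(\mathcal{L})$, or it may fail to be contained in $b(\mathcal{L})$. Writing $V$ for the span and $C(\cdot)$ for the symplectic centralizer, your $\bm{e}^\dagger$ exists precisely when $C(V)\not\subseteq \mathcal{L}_0$, i.e.\ when $V\not\supseteq b(\mathcal{L})$. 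But if $V \supsetneq b(\mathcal{L})$ --- for instance if $H_g$ contains rows representing all logical classes plus one extra row that anticommutes with some stabilizer --- then $C(V)\subsetneq \mathcal{L}_0$, no such $\bm{e}^\dagger$ exists, and your construction cannot start, even though $H_g$ is not faithful.

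The paper splits the converse into exactly these two cases. Its case (i) is your argument (with $0.75/0.25$ mixtures instead of point masses, an inessential difference). Its case (ii) handles the missing situation: there one has $\bm{e}_1 \in \mathcal{L}_0$ with $H_g\Lambda\bm{e}_1^{\rm T}\neq 0$, so two errors in the \emph{same} logical class carry \emph{different} diagnoses, and the paper builds two three-point distributions supported on $\{0,\bm{e}_1,\bm{e}_2\}$ (with $\bm{e}_2$ in a nontrivial class, chosen so its diagnosis $\bm{g}_2\neq 0$) for which the delta diagnosis is $\bm{g}_2$ in both cases while the most probable class differs --- the probability mass of the trivial class gets split between the diagnoses $0$ and $\bm{g}_1$, so the most likely diagnosis no longer tracks the most likely class. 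Your proof needs this second branch (or an argument that reduces to it) to be complete.
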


This lemma implies that we can perform optimal decoding with the delta diagnosis decoder only when the diagnosis matrix $H_g$ is faithful. 
Note that the set of the faithful vectors $\{\bm{g}_{\bm{s}}(\bm{w})| \bm{w} \in \{0,1\}^{2k}\}$ is independent of the choice of the generator $G$ and the pure error $\bm{t}(\bm{s})$. Whether we can perform the optimal decoding or not is dependent only on the construction of $H_g$.

\subsubsection{The neural decoder with the L2 loss function and an unlimited training data set}
\label{L2dec}
In this subsection, we replace the 0-1 loss function with a more practical one, which is the squared L2 distance.
We still consider the limit of an infinite size of the training data set and the perfect loss minimization.
In this case, the total loss for a fixed $\bm{s}$  under an unlimited training data set is proportional to the following value.
\begin{eqnarray}
	\label{def:l2loss}
	L^{(\rm L2)}_{\bm{s}} (\bm{g}) = \mathbb{E}_{\bm{e} \sim \{p_{\bm{e}}\}} \left[ || \bm{g} - H_g \Lambda \bm{e}^{\rm T} ||^2_2 \middle| \bm{s}(\bm{e}) = \bm{s}\right] 
\end{eqnarray}
We define a decoder which is ideally trained with the L2 loss function as an {\it L2 diagnosis decoder}. 
We also call the output of the L2 diagnosis decoder as an {\it L2 diagnosis vector} $\bm{g}^{(\rm L2)}(\bm{s})$.
The L2 diagnosis vector satisfies the following equation.
\begin{eqnarray}
	\label{def:l2diagloss}
	L^{(L2)}_{\bm{s}} (\bm{g}^{(\rm L2)}(\bm{s})) &=& \min_{\bm{g} \in \{0,1\}^{L_g}} L^{(\rm L2)}_{\bm{s}} (\bm{g}).
\end{eqnarray} 
When the chosen diagnosis matrix is faithful, we can analytically solve $\bm{g}^{({\rm L2})}(\bm{s})$ by differentiating Eq.\,(\ref{def:l2loss}), and the L2 diagnosis vector can be written as follows.
\begin{eqnarray}
\label{def:l2diagnosis}
\bm{g}^{({\rm L2})}(\bm{s}) &:=& \sum_{\bm{w} \in \{0,1\}^{2k}} q_{\bm{s}}(\bm{w}) \bm{g}_{\bm{s}}(\bm{w})
\end{eqnarray}
Let us define a column vector of order $2^{2k}$ as 
\begin{eqnarray}
\label{eq:probarray}
\bm{q}_{\bm{s}} := (q_{\bm{s}}(0^{2k}), \ldots q_{\bm{s}}(1^{2k}))^{\rm T}.
\end{eqnarray}
It satisfies the following matrix equation:
\begin{eqnarray}
\label{cond:L2inverse}
\left( \begin{matrix}
	\hat{\bm{g}}^{({\rm L2})}(\bm{s}) \\ 1 
\end{matrix} \right)
 = D_{\bm{s}} \bm{q}_{\bm{s}},
\end{eqnarray}
where
\begin{eqnarray}
	D_{\bm{s}} = \left( \begin{matrix} 
		\bm{g}_{\bm{s}}(0^{2k}) & \cdots & \bm{g}_{\bm{s}}(1^{2k}) \\ 1 & \cdots & 1 
	\end{matrix} \right).
\end{eqnarray}
We can solve it for $\bm{q}_{\bm{s}}$ if $D_{\bm{s}}$ has a left inverse $D_{\bm{s}}^{-1}$ such that $D_{\bm{s}}^{-1}D_{\bm{s}}=I$ in the real-valued calculation, namely, if the rank of $D_{\bm{s}}$ as a real-valued matrix is $2^{2k}$. 
If the rank is smaller, solution $\bm{q}_{\bm{s}}$ is not unique, and hence it is not always possible to determine $\bm{w}$ that maximizes $q_{\bm{s}}(\bm{w})$, which implies we cannot perform the optimal decoding.

Though the rank condition depends apparently on the syndrome $\bm{s}$, we can formulate it as a condition which is independent of $\bm{s}$.
Any faithful diagnosis $\bm{g}_{\bm{s}}(\bm{w})$ can be written as  
\begin{eqnarray}
\bm{g}_{\bm{s}}(\bm{w}) = H_g \Lambda (\bm{w}G)^{\rm T} \oplus \bm{\delta}(\bm{s})
\end{eqnarray}
with
\begin{eqnarray}
\bm{\delta}(\bm{s}) := H_g \Lambda \bm{t}(\bm{s})^{\rm T} \in \left( \{0,1\}^{L_g} \right)^{\rm T}.
\end{eqnarray}
We define a transformation $\sigma_{\bm{\delta}}$ by 
\begin{eqnarray}
(\sigma_{\bm{\delta}} (\bm{v}) )_i &:=& \delta_i + (-1)^{\delta_i} v_i
\end{eqnarray}
for $\bm{\delta} \in \{0,1\}^{2k}$ and $\bm{v} \in \mathbb{R}^{2k}$.
It is affine, isometric, and involutory.
Since $\bm{g}_{\bm{s}}(\bm{w}) = \sigma_{\bm{\delta}(\bm{s})}(H_g \Lambda (\bm{w}G)^{\rm T})$, we have 
\begin{widetext}	
\begin{eqnarray}
	D_{\bm{s}} &=&	
	\left( \begin{matrix} 
	\sigma_{\bm{\delta}(\bm{s})}(H_g \Lambda ( (0^{2k}) G)^{\rm T}) &
	\cdots &
	 \sigma_{\bm{\delta}(\bm{s})}(H_g \Lambda ( (1^{2k}) G)^{\rm T}) \\
	  1 & \cdots & 1 \end{matrix} \right).
\end{eqnarray}
\end{widetext}
We see that a transformation $\sigma_{\bm{\delta}}$ is an affine transformation, and this transformation satisfies
\begin{eqnarray}
\sigma_{\bm{\delta}}(\sigma_{\bm{\delta}}(\bm{v})) &=& \bm{v} \\
\sigma_{0}(\bm{v}) &=& \bm{v}.
\end{eqnarray}
Thus, when we apply the transformation $\sigma_{\bm{\delta}(\bm{s})}$ to Eq.\,(\ref{cond:L2inverse}), we obtain
\begin{eqnarray}
\left( \begin{matrix}
\sigma_{\bm{\delta}} ( \bm{g}^{({\rm L2})}(\bm{s})) \\ 1 
\end{matrix} \right)
= D \bm{q}_{\bm{s}},
\end{eqnarray}
where
\begin{widetext}	
	\begin{eqnarray}
	\label{matD}
	D &:=&	
	\left( \begin{matrix} 
	H_g \Lambda ( (0,\ldots,0) G)^{\rm T} &
	\cdots &
	H_g \Lambda ( (1,\ldots,1) G)^{\rm T} \\
	1 & \cdots & 1 \end{matrix} \right).
	\end{eqnarray}
\end{widetext}
Thus, we can uniquely calculate $\bm{q}_{\bm{s}}$ for an arbitrary $\bm{s}$ if a matrix $D$ has a left inverse, which is equivalent to the condition that $\{H_g \Lambda (\bm{w}G)^{\rm T} | \bm{w} \in \{0,1\}^{2k}\}$ is affinely independent. 
We will call a diagnosis matrix satisfying this condition to be decomposable:
\begin{definition} {\it decomposable diagnosis matrix} ---
	Given a generator matrix $G$, we say a diagnosis matrix $H_g$ is decomposable if a set of real vectors $\{H_g \Lambda (\bm{w}G)^{\rm T} | \bm{w} \in \{0,1\}^{2k}\}$ is affinely independent, namely, the rank of a matrix $D$ defined in Eq.\,(\ref{matD}) is $2^{2k}$ when we consider $D$ as a real-valued matrix.
\end{definition}
When $H_g$ is faithful, the above definition is independent of $G$, because the set $\{H_g \Lambda (\bm{w}G)^{\rm T} | \bm{w} \in \{0,1\}^{2k}\}$ is independent of $G$ then.

We show a scheme to perform the optimal decoding using L2 diagnosis decoder when a diagnosis matrix is faithful and decomposable.
When $H_g$ is decomposable, there exists a left inverse $D^{-1}$ such that $D^{-1} D = I$ in real vector space.
When we observe a syndrome vector $\bm{s}$, we obtain the L2 diagnosis $\bm{g}^{(\rm L2)}(\bm{s})$ using the trained L2 diagnosis decoder, and calculate $\bm{\delta}(\bm{s}) = H_g \Lambda \bm{t}(\bm{s})$. 
Since the diagnosis matrix is faithful, the probabilities of the faithful diagnosis vectors are given by 
\begin{eqnarray}
	\bm{q}_{\bm{s}} = D^{-1} \left( \begin{matrix} \sigma_{\bm{\delta}(\bm{s})}( \bm{g}^{(\rm L2)}(\bm{s})) \\ 1 \end{matrix} \right).
\end{eqnarray}
Then, we construct a recovery operator as 
\begin{eqnarray}
	\bm{r}(\bm{s}) = \bm{w}^*(\bm{s}) G \oplus \bm{t}(\bm{s}),
\end{eqnarray}
where $\bm{w}^*(\bm{s})$ satisfies 
\begin{eqnarray}
	q_{\bm{s}}(\bm{w}^*(\bm{s})) = \max_{\bm{w}} q_{\bm{s}}(\bm{w}).
\end{eqnarray}
With this recovery operator, we obtain 
\begin{eqnarray}
{\rm Pr}_{\bm{e} \sim \{p_{\bm{e}}\}} \left[ \bm{e} \oplus \bm{r}(\bm{s}) \in \mathcal{L}_0  | \bm{s}(\bm{e}) = \bm{s} \right] = q_{\bm{s}}(\bm{w}^*(\bm{s})),
\end{eqnarray}
and thus this decoder satisfies Eq.\,(\ref{cond:optimal}).

When the diagnosis matrix $H_g$ is faithful, we can also prove a converse statement for the cases where a faithful diagnosis matrix $H_g$ is not decomposable (see Appendix A), arriving at the following lemma.
\begin{lemma}
	\label{lemma:decomposable}
If the diagnosis matrix $H_g$ is faithful and decomposable, there exists a map $\bm{r}^*(\bm{g},\bm{s})$ such that the decoder with $\bm{r}(\bm{s}) = \bm{r}^*(\bm{g}^{(\rm L2)} (\bm{s}), \bm{s})$ is optimal for arbitrary distribution $\{p_{\bm{e}}\}$. 
If the diagnosis matrix $H_g$ is faithful but not decomposable, no such map exists.
\end{lemma}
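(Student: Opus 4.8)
The plan is to prove the two directions separately. The sufficiency (``faithful and decomposable $\Rightarrow$ an optimal map $\bm{r}^*$ exists'') is essentially carried out in the discussion preceding the lemma, so I would only assemble that construction into an explicit, distribution-independent definition of $\bm{r}^*$. The substance lies in the converse, which I would establish by an information-loss argument based on the rank deficiency of $D$.

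For the forward direction, I would define $\bm{r}^*(\bm{g},\bm{s})$ explicitly and independently of $\{p_{\bm{e}}\}$: given $(\bm{g},\bm{s})$, compute $\bm{\delta}(\bm{s}) = H_g \Lambda \bm{t}(\bm{s})^{\rm T}$, form $\bm{q} = D^{-1}(\sigma_{\bm{\delta}(\bm{s})}(\bm{g}),1)^{\rm T}$ using a left inverse $D^{-1}$ (which exists since $H_g$ is decomposable), select any $\bm{w}^* \in \argmax_{\bm{w}} q(\bm{w})$, and output $\bm{w}^* G \oplus \bm{t}(\bm{s})$. Feeding $\bm{g} = \bm{g}^{(\rm L2)}(\bm{s})$ into this map recovers the true class probabilities $\bm{q}_{\bm{s}}$ exactly through (\ref{cond:L2inverse}) and (\ref{def:l2diagnosis}), so $\bm{w}^*$ is a genuine maximizer of $q_{\bm{s}}$ and the resulting decoder satisfies (\ref{cond:optimal}) for every distribution, exactly as shown in the paragraph above the lemma.

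For the converse, suppose $H_g$ is faithful but not decomposable, so $\mathrm{rank}(D) < 2^{2k}$ and there is a nonzero $\bm{u} \in \ker D$; because the bottom row of $D$ in (\ref{matD}) is all ones, $\sum_i u_i = 0$, whence $\bm{u}$ has a strictly positive maximal entry and a strictly negative minimal one, so $A_+ := \argmax_i u_i$ and $A_- := \argmin_i u_i$ are disjoint. I would then construct two conditional class distributions over a single syndrome $\bm{s}$, namely $\bm{q}^{(1)} = \bm{q}_c + \epsilon \bm{u}$ and $\bm{q}^{(2)} = \bm{q}_c - \epsilon \bm{u}$, where $\bm{q}_c$ is uniform and $\epsilon>0$ is small enough to keep all entries positive. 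Each is realized by the error model concentrated on the representatives $\bm{e}_{\bm{w}} = \bm{w}G \oplus \bm{t}(\bm{s})$ (all of which carry syndrome $\bm{s}$ and class $\bm{w}$ since $H_c \Lambda G^{\rm T}=0$) with weights $p_{\bm{e}_{\bm{w}}} = q^{(j)}(\bm{w})$, so both satisfy ${\rm Pr}[\bm{s}(\bm{e})=\bm{s}]=1>0$. Since $\bm{q}^{(1)} - \bm{q}^{(2)} = 2\epsilon\bm{u} \in \ker D$, the two models yield the same $D\bm{q}_{\bm{s}}$, and because $\sigma_{\bm{\delta}(\bm{s})}$ is an involutive bijection they yield the identical L2 diagnosis $\bm{g}^{(\rm L2)}(\bm{s})$ through the transformed form of (\ref{cond:L2inverse}).

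To finish, I would note that with $\bm{q}_c$ uniform one has $\argmax_{\bm{w}} q^{(1)}(\bm{w}) = A_+$ and $\argmax_{\bm{w}} q^{(2)}(\bm{w}) = A_-$, which are disjoint. Recalling from the analysis of (\ref{cond:optimal}) that a recovery operator succeeds on $\bm{s}$ exactly when its own class lies in $\argmax_{\bm{w}} q_{\bm{s}}(\bm{w})$, any fixed map produces the single output $\bm{r}(\bm{s}) = \bm{r}^*(\bm{g}^{(\rm L2)}(\bm{s}),\bm{s})$ on the common input, whose class can belong to at most one of $A_+, A_-$; hence $\bm{r}^*$ fails to be optimal for at least one of the two models, and no universally optimal map exists. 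The main obstacle, and the step I would treat most carefully, is ensuring the two conditional distributions are simultaneously (i) genuine, positive-probability error models, (ii) indistinguishable at the level of the diagnosis, and (iii) equipped with disjoint optimal classes; the zero-sum structure forced by the all-ones row of $D$ together with the uniform center is precisely what makes (ii) and (iii) compatible, and it lets me avoid any genericity argument to break ties in the $\argmax$.
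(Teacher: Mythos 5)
Your proposal is correct and follows essentially the same route as the paper: the forward direction is the explicit left-inverse construction given in the text preceding the lemma, and the converse exhibits two error distributions supported on errors of syndrome $\bm{s}$ that share the same L2 diagnosis vector but have disjoint sets of most-probable classes, forcing any fixed map to fail on one of them. The only cosmetic difference is that the paper builds the two distributions from the positive and negative parts of the affine dependence (giving disjoint supports on $\mathcal{W}$ and its complement), whereas you perturb the uniform distribution by $\pm\epsilon\bm{u}$ for $\bm{u}\in\ker D$ — the zero-sum property of $\bm{u}$ coming from the all-ones row of $D$ plays the identical role in both arguments.
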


We show a simple example of a faithful and decomposable matrix $H_g$ in the case of $k=1$. 
We choose vectors $\bm{l}_{01}$, $\bm{l}_{10}$, and $\bm{l}_{11}$ from $\mathcal{L}_{01}$, $\mathcal{L}_{10}$, and $\mathcal{L}_{11}$, respectively. 
We construct $H_g$ and generator $G$ as 
\begin{eqnarray}
H_g &=& \left( \begin{matrix} \bm{l}_{01} \\ \bm{l}_{10} \\ \bm{l}_{11} \end{matrix} \right), \\
G &=& \left( \begin{matrix} \bm{l}_{01} \\ \bm{l}_{10} \end{matrix} \right).
\end{eqnarray}
We see that ${\rm span}(\{(H_g)_i\}) = b(\mathcal{L})$, and thus $H_g$ is faithful.
A set $\{H_g \Lambda (\bm{w}G)^{\rm T} | \bm{w} \in \{00,01,10,11\}\}$ is 
\begin{eqnarray}
	\{(0,0,0)^{\rm T},(0,1,1)^{\rm T},(1,0,1)^{\rm T},(1,1,0)^{\rm T}\},
\end{eqnarray}
which is affinely independent, and thus $H_g$ is decomposable.
We can verify the same by checking the rank of 
\begin{eqnarray}
	D &=& 
	\left( \begin{matrix}
		\bm{g}(00) & \bm{g}(01) & \bm{g}(10) & \bm{g}(11) \\
		1 & 1 & 1 & 1
	\end{matrix} \right) \nonumber \\
	&=&\left( \begin{matrix}
		0 & 0 & 1 & 1\\
		0 & 1 & 0 & 1\\
		0 & 1 & 1 & 0\\
		1 & 1 & 1 & 1\\
	\end{matrix} \right)
\end{eqnarray}
to be $4$ in real vector space.
We can show that there always exists such a faithful and decomposable diagnosis matrix for all $k$ and $H_c$. See Appendix A for the proof.

\subsubsection{The neural decoder with the L2 loss function under a finite training data size}
In practical cases, the size of the training data set is limited, and hence the loss is not perfectly minimized.
This implies that the output diagnosis from the model deviates from the L2 diagnosis vector.
In such a case, it is desirable to construct a decoder such that its prediction is as robust against the deviations as possible.
We introduce a slight modification to the optimal decoding scheme in the last subsection, so that it should applicable to an output diagnosis deviated from the L2 diagnosis vector.

We denote the predicted diagnosis as $\bm{g}^{\rm P}(\bm{s}) \in \mathbb{R}^{L_g}$, which deviates from the L2 diagnosis vector.
Note that $\bm{g}^{\rm P}(\bm{s})$ cannot be represented as a linear combination of the faithful diagnosis vectors in general. 
In order to construct a decoding scheme which is robust to a small deviation, it is natural to extend the scheme employed in Sec.\,\ref{L2dec} such that we project $\bm{g}^{\rm P}(\bm{s})$ to the hyper-plane formed by affine combinations of the faithful diagnosis vectors, and then extract the coefficients $\bm{q}_{\bm{s}}^{\rm P}$ from the projected point. 
This projection and extraction is achieved as follows.
We perform QR decomposition for $D$, and obtain $D = QR$, where $Q$ is an orthogonal matrix, and $R$ is an upper-triangular matrix. 
We construct $D^{-1} = R^{-1} Q^{\rm T}$, which satisfies $D^{-1} D = I$. 
Then, we obtain a predicted vector $\bm{q}_{\bm{s}}^{\rm P}$ as 
\begin{eqnarray}
\label{decodingmethod1}
\bm{q}_{\bm{s}}^{\rm P} = D^{-1} \left( \begin{matrix} \sigma_{\bm{\delta(\bm{s})}}(\bm{g}^{\rm P}(\bm{s})) \\ 1 \end{matrix} \right),
\end{eqnarray}
where $\bm{\delta(\bm{s})} = H_g \Lambda \bm{t}(\bm{s})$.
We construct a recovery operator as 
\begin{eqnarray}
\label{decodingmethod2}
\bm{r}(\bm{s}) = \bm{w}^*(\bm{s}) G \oplus \bm{t}(\bm{s}),
\end{eqnarray}
where $\bm{w}^*(\bm{s})$ satisfies 
\begin{eqnarray}
\label{decodingmethod3}
q_{\bm{s}}^{\rm P}(\bm{w}^*(\bm{s})) = \max_{\bm{w}} q_{\bm{s}}^{\rm P}(\bm{w}).
\end{eqnarray}
Note that though elements of $\bm{q}_{\bm{s}}^{\rm P}$ may be out of $[0,1]$, the above procedure is still well-defined.
\\

\subsubsection{Criterion for diagnosis matrix}
In practice, the number of the training data set is far smaller than the total variation of syndrome vectors $\bm{s}$ when distance $d$ is larger than about $7$.
For example, according to the existing methods \cite{torlai2017neural,varsamopoulos2017decoding,baireuther2018machine,krastanov2017deep}, the size of the training data set is at most $10^9$.
On the other hand, the number of variations in the syndrome, $2^{n-k}$, becomes larger than $10^9$ at the distance $d=7$ for the [[$d^2,1,d$]] surface code. 
This implies that almost all the patterns of the syndrome vector $\bm{s}$ given in experiments are not found in the training data set.
The model should infer the L2 diagnosis vector $\bm{g}^{\rm (L2)}(\bm{s})$ of $\bm{s}$ where $\bm{s}$ is not included in the training data set.
The aim of this subsection is to propose a criterion for $H_g$ which we believe to reflect the robustness of the prediction when we use such a sparsely sampled training data set.

Since the problem is to estimate the vector-valued function $\bm{g}^{\rm (L2)}(\bm{s})$ from a sparsely sampled set of values, its difficulty should depend on how rapidly the function changes its output value as the input value $\bm{s}$ varies. 
From Eqs.\,(\ref{eq:faithful_prob}) and (\ref{def:l2diagnosis}), we see that the function is written as
\begin{eqnarray}
\bm{g}^{\rm (L2)}(\bm{s})= \mathbb{E}_{\bm{e} \sim \{p_{\bm{e}}\}} [ H_g \Lambda \bm{e}^{\rm T} |\bm{s}(\bm{e})=\bm{s}],
\end{eqnarray}
which shows that $\bm{g}^{\rm (L2)}(\bm{s})$ is implicitly determined from the two functions of errors, $\bm{g}(\bm{e}) = H_g \Lambda \bm{e}^{\rm T}$ and $\bm{s}(\bm{e})=H_c\Lambda \bm{e}^{\rm T}$.
In order to quantify how rapidly these function change, let us introduce a sensitivity $m(H)$ of a binary matrix $H$ as
\begin{eqnarray}
\label{sensitivity}
m(H) &:=& \max_{\substack{\bm{e},\bm{e}' \in \{0,1\}^{2n} \\ h(\bm{e}\oplus \bm{e}')=1} } || H \Lambda \bm{e}^{\rm T} - H \Lambda \bm{e}^{\prime {\rm T}} ||^2_2 \nonumber \\
&=& \max_{ \substack{ \bm{\bm{e}} \in \{0,1\}^{2n} \\ h(\bm{e})=1  } }  h(H \Lambda \bm{e}^{\rm T}).
\end{eqnarray}

Using the sensitivity, the variation of $\bm{s}(\bm{e})$ is bounded as
\begin{eqnarray}
||\bm{s}(\bm{e}) - \bm{s}(\bm{e}') ||^2_2 \leq m(H_c) h(\bm{e} \oplus \bm{e}'). 
\end{eqnarray}
In the case of topological codes, $m(H_c)$ is a small constant.
This is because each physical qubit is monitored by at most constant number of the stabilizer operators. 

Suppose that $\bm{g}^{\rm (L2)}(\bm{s})$ is close to one of the faithful diagnosis $\bm{g}_{\bm{s}}(\bm{w}^*)$, and let $S(\bm{s},\bm{w}^*;0)$ be the set of errors $\bm{e}$ satisfying $\bm{w}(\bm{e}) = \bm{w}^*$ and $\bm{s}(\bm{e})=\bm{s}$. 
We further define a set 
\begin{widetext}
\begin{eqnarray}
\label{eq:usefulset}
S(\bm{s},\bm{w}^*;h):= \{\bm{e}| \exists \bm{e}' \, \text{s.t.}\, \bm{e}' \in S(\bm{s},\bm{w}^*;0),  h(\bm{e} \oplus \bm{e}') \leq h \} 
\end{eqnarray}
\end{widetext}
We see that any $\bm{e} \in S(\bm{s},\bm{w}^*;h)$ produces a training data $(\bm{s}',\bm{g}')$ such that
\begin{eqnarray}
||\bm{s}'-\bm{s}||^2_2 &\le& m(H_c) h \\
\label{diagdistupper}
||\bm{g}'- \bm{g}_{\bm{s}}(\bm{w}^*)||^2_2 &\le& m(H_g) h.
\end{eqnarray}
The choice of $H_g$ also affects how precisely $\bm{g}^{\rm (L2)}(\bm{s})$ should be estimated in order to determine $\bm{w}^*$ correctly. 
To quantify this, we consider how far $\bm{g}^{\rm P}(\bm{s})$ can be deviated from a faithful diagnosis $\bm{g}_{\bm{s}}(\bm{w})$ without affecting the decoding method of Eqs.\,(\ref{decodingmethod1}) and (\ref{decodingmethod2}).
When the decoding result changes from $\bm{w}^*=\bm{w}$ to $\bm{w}^*=\bm{w}'$, the solution of Eq.\,(\ref{decodingmethod1}) should satisfy $q_{\bm{s}}^{\rm P}(\bm{w})=q_{\bm{s}}^{\rm P}(\bm{w}')$, namely, $\bm{g}^{\rm P}(\bm{s})$ should be written in the form
\begin{widetext}
\begin{eqnarray}
\label{distance}
\bm{g}^{\rm P} (\bm{s}) = \alpha (\bm{g}_{\bm{s}}(\bm{w}) + \bm{g}_{\bm{s}}(\bm{w}')) + \sum_{\bm{w}'' \neq \bm{w},\bm{w}'} \beta_{\bm{w}''} \bm{g}_{\bm{s}}(\bm{w}'').
\end{eqnarray}
\end{widetext}
We define the minimum boundary distance $M(H_g)$ so as to assure that $\bm{w}^*=\bm{w}$ as long as $||\bm{g}^{\rm P}(\bm{s})- \bm{g}_{\bm{s}}(\bm{w})||^2_2 \le M(H_g)$.
Hence $M(H_g)$ can be explicitly defined as 
\begin{widetext}
\begin{eqnarray}
\label{distancemin}
M(H_g) :=\min_{\bm{w},\bm{w}',\alpha,\{\beta_{\bm{w}''}\}} || (1-\alpha) \bm{g}(\bm{w}) - \alpha \bm{g}(\bm{w}') - \sum_{\bm{w}'' \neq \bm{w},\bm{w}'}  \beta_{\bm{w}''} \bm{g}(\bm{w}'') ||^2_2.
\end{eqnarray}
\end{widetext}
Note that the above definition is independent of $\bm{s}$, since the affine transformation $\sigma_{\bm{\delta}(\bm{s})}$ is isometric. 
$M(H_g)$ is nonzero if and only if $H_g$ is decomposable.

Regarding $M(H_g)$ as the relevant length scale, we define the following quantity to be used as a criterion for a better construction of $H_g$.
\begin{definition} {\it Normalized sensitivity ---}
	We define normalized sensitivity $N(H_g)$ of a faithful and decomposable matrix $H_g$ as
	\begin{eqnarray}
	N(H_g) := \frac{m(H_g)}{M(H_g)},
	\end{eqnarray}
	where $m(H_g)$ is a sensitivity of $H_g$ defined in Eq.\,(\ref{sensitivity}), and $M(H_g)$ is a minimum boundary distance of $H_g$ defined in Eq.\,(\ref{distancemin}).
\end{definition}

Eqs.\,(\ref{diagdistupper}) and (\ref{distancemin}) implies that an error belonging to $S(\bm{s},\bm{w}^*;h)$ with $h\sim (m(H_g)/M(H_g))^{-1}$ leads to a training data useful for estimation of $\bm{g}^{\rm (L2)}(\bm{s})$. 
We thus expect that the use of a diagnosis matrix $H_g$ with a small normalized sensitivity $N(H_g)$ enables high-performance prediction with a small training data set.

\subsubsection{Uniform data construction}
We propose specific constructions which minimize the normalized sensitivity up to the order of $d$ in the case of $k=1$.
We first consider a lower-bound of the normalized sensitivity. When a diagnosis matrix $H_g$ is faithful, each row vector of $H_g$ corresponds to a logical operator or a stabilizer operator. 
We denote the number of the logical operators in the rows of $H_g$ as $n_L$.
The minimum boundary distance $M(H_g)$ is upper-bounded by
\begin{eqnarray}
\label{eq:boundstart}
M(H_g) \leq \frac{n_L}{4}
\end{eqnarray}
Since any logical operator has at least $d$ of one-elements in its binary representation, there are at least $dn_L$ of one-elements in the diagnosis matrix. 
By denoting the number of the one-elements in the diagnosis matrix $H_g$ as $\chi(H_g)$, we have 
\begin{eqnarray}
dn_L \leq \chi(H_g).
\end{eqnarray}
Since there are $2n$ columns in $H_g$, we also have
\begin{eqnarray}
\chi(H_g) \leq 2n \max_i h( (H_g^{\rm T})_i ).
\end{eqnarray}
The sensitivity $m(H_g)$ is equal to the maximum hamming weight of the column vectors of the diagnosis matrix, namely, 
\begin{eqnarray}
\label{eq:boundend}
\max_i h( (H_g)^{\rm T}_i ) = m(H_g).
\end{eqnarray}
From Eqs.\,(\ref{eq:boundstart}) - (\ref{eq:boundend}), we obtain
\begin{eqnarray}
N(H_g) \geq \frac{2d}{n}
\end{eqnarray}
In particular, when we focus on the two-dimensional topological codes such that $n=\Theta(d^2)$ , the order of the normalized sensitivity is lower-bounded as 
\begin{eqnarray}
N(H_g) = \Omega(d^{-1}).
\end{eqnarray}
For surface codes and color codes with the single logical qubit, we found specific constructions of $H_g$ such that $N(H_g)$ scales as $\Theta(d^{-1})$. 
See appendix C for the specific constructions.
We named these constructions as {\it uniform data construction} of the data set, since logical operators corresponding to the rows of $H_g$ are chosen uniformly to cover all the physical qubits.

\subsection{Construction of data set and example}
Let us summarize the discussion in Sec \ref{LPF_sec}. 
Given the check matrix $H_c$ of the code and the error model $\{p_{\bm{e}}\}$, the whole protocol can be described as follows.
\begin{itemize}
%\item {\bf Given information: } The check matrix $H_c$ is defined by the chosen code. The probability distribution $\{p_{\bm{e}}\}$ is defined by the error model.
\item {\bf Preparation: } We construct a faithful and decomposable diagnosis matrix $H_g$ with a small normalized sensitivity, possibly $N(H_g) = \Theta(d/n)$. 
We choose a pure error $\bm{t}(\bm{s})$ and a generator matrix $G$. 
We perform QR decomposition to a matrix 
\begin{eqnarray}
	D := \left( \begin{matrix} 
		\bm{g}(0^{2k}) & \cdots & \bm{g}(1^{2k}) \\ 1 & \cdots & 1 
	\end{matrix} \right),
\end{eqnarray}
where $\bm{g}(\bm{w}) = H_g \Lambda (\bm{w}G)^{\rm T}$, and obtain $Q$ and $R$.
We calculate the left inverse matrix $D^{-1}$ as 
\begin{eqnarray}
D^{-1}:= R^{-1} Q^{\rm T}.
\end{eqnarray}

\item {\bf Data generation: } We generate a set of physical errors $\{\bm{e}_0, \bm{e}_1, \ldots \}$ with the probability distribution $\{p_{\bm{e}}\}$, and generate data set $\{(\bm{s}_0, \bm{g}_0), (\bm{s}_1, \bm{g}_1), \ldots \}$ from it, where $\bm{s}_i :=H_c \Lambda \bm{e}_i^{\rm T}$ and $\bm{g}_i := H_g \Lambda \bm{e}_i^{\rm T}$.

\item {\bf Training: } The model is trained so that it can predict $\bm{g}$ from $\bm{s}$. The loss of the prediction is defined as the L2 distance between $\bm{g}$ and $\bm{g}^{\rm P}$, where $\bm{g}^{\rm P}$ is a real-valued output vector of the model.

\item {\bf Prediction: } When an observed syndrome $\bm{s}$ is given to the trained model, it predicts $\bm{g}^{\rm P}(\bm{s})$.  
In parallel, we calculate $\bm{\delta}(\bm{s})$ given by 
\begin{eqnarray}
\bm{\delta}(\bm{s}) = H_g \Lambda \bm{t}(\bm{s})^{\rm T}.
\end{eqnarray}
We calculate vector $\bm{q}_{\bm{s}}$ defined in Eq.\,(\ref{eq:probarray}) as 
\begin{eqnarray}
\bm{q}_{\bm{s}}^{\rm P} =D^{-1} \left( \begin{matrix} \sigma_{\bm{\delta}(\bm{s})}(\bm{g}^{\rm P}(\bm{s})) \\ 1 \end{matrix} \right),
\end{eqnarray} 
where $\sigma_{\bm{\delta}(\bm{s})}$ is an affine transformation such that 
\begin{eqnarray}
(\sigma_{\bm{\delta}(\bm{s})}(\bm{v}))_i = \delta_i + (-1)^{\delta_i} v_i.
\end{eqnarray}

We choose $\bm{w}^{\rm P}$ that satisfies 
\begin{eqnarray}
q_{\bm{s}}^{\rm P}(\bm{w}^{\rm P}) = \max_{\bm{w} \in \{0,1\}^{2k}} q_{\bm{s}}^{\rm P}(\bm{w}),
\end{eqnarray}
where $\{ q_{\bm{s}}^{\rm P}(\bm{w}) \}$ is the elements of $\bm{q}_{\bm{s}}^{\rm P}$.
Then, we obtain an estimated recovery operator 
\begin{eqnarray}
\bm{r}(\bm{s}) = \bm{w}^{\rm P} G \oplus \bm{t}(\bm{s}).
\end{eqnarray}
\end{itemize}

We emphasize that the choice of $\bm{t}(\bm{s})$ and $G$ dose not affect the performance of the decoder, since the success of the estimation is independent of them. 
Only the construction of $H_g$ affects the performance of the decoder.

We show a specific example of the decoding scheme. 
For simplicity, we consider the case where there is only bit-flip errors in the [[$2d^2-2d+1,1,d$]] surface code. In this case, it is enough for QEC to consider the stabilizer operator with Pauli $Z$ operators. The simplified picture of the code is shown in Fig.\,\ref{fig:sample}. 
\begin{figure*}[tp]
 \centering
  \begin{minipage}[]{0.24\hsize}
 \includegraphics[clip, width=4cm]{./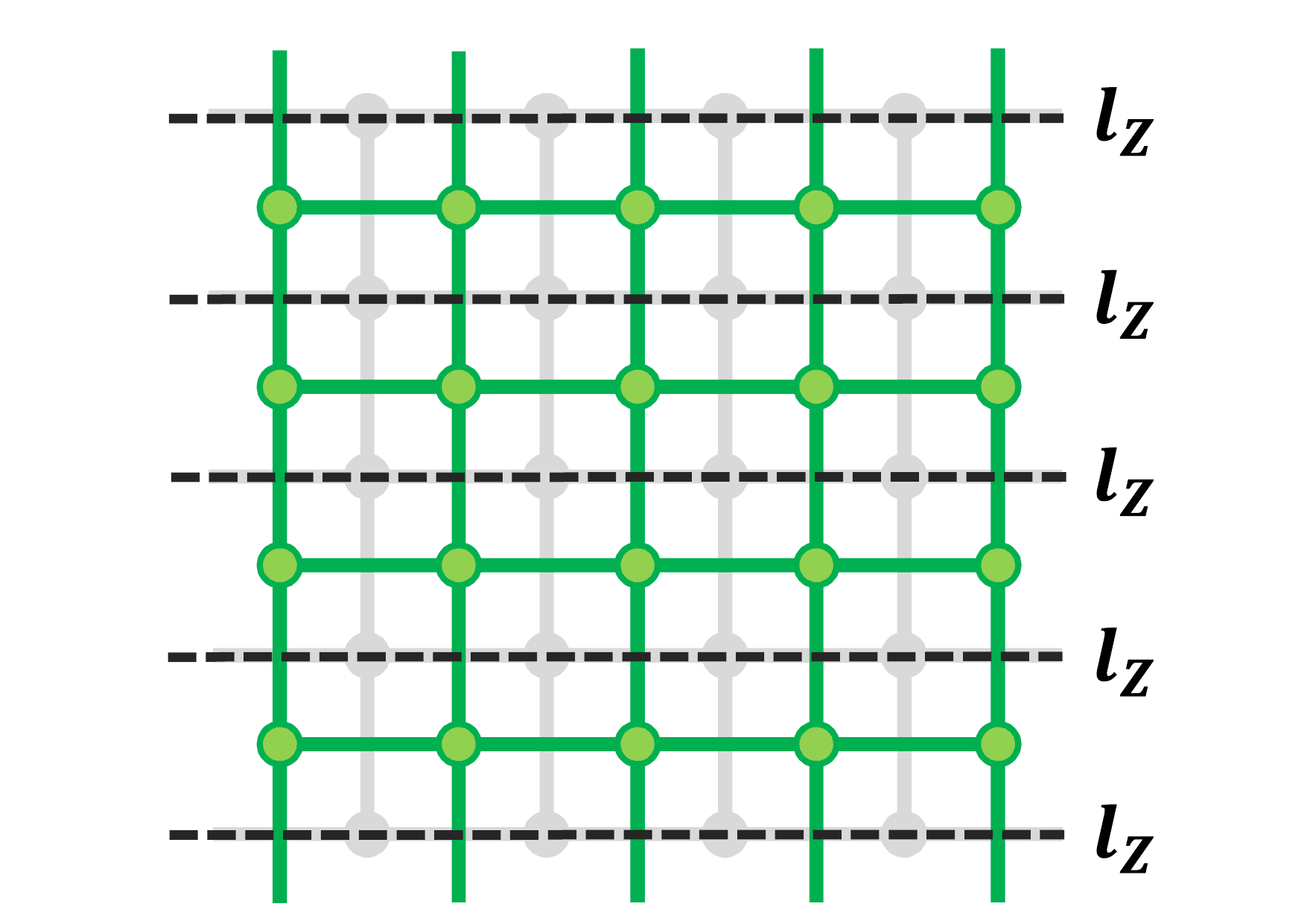}
 \subcaption{}
 \label{fig:sample1}
  \end{minipage}
  \begin{minipage}[]{0.24\hsize}
 \includegraphics[clip, width=4cm]{./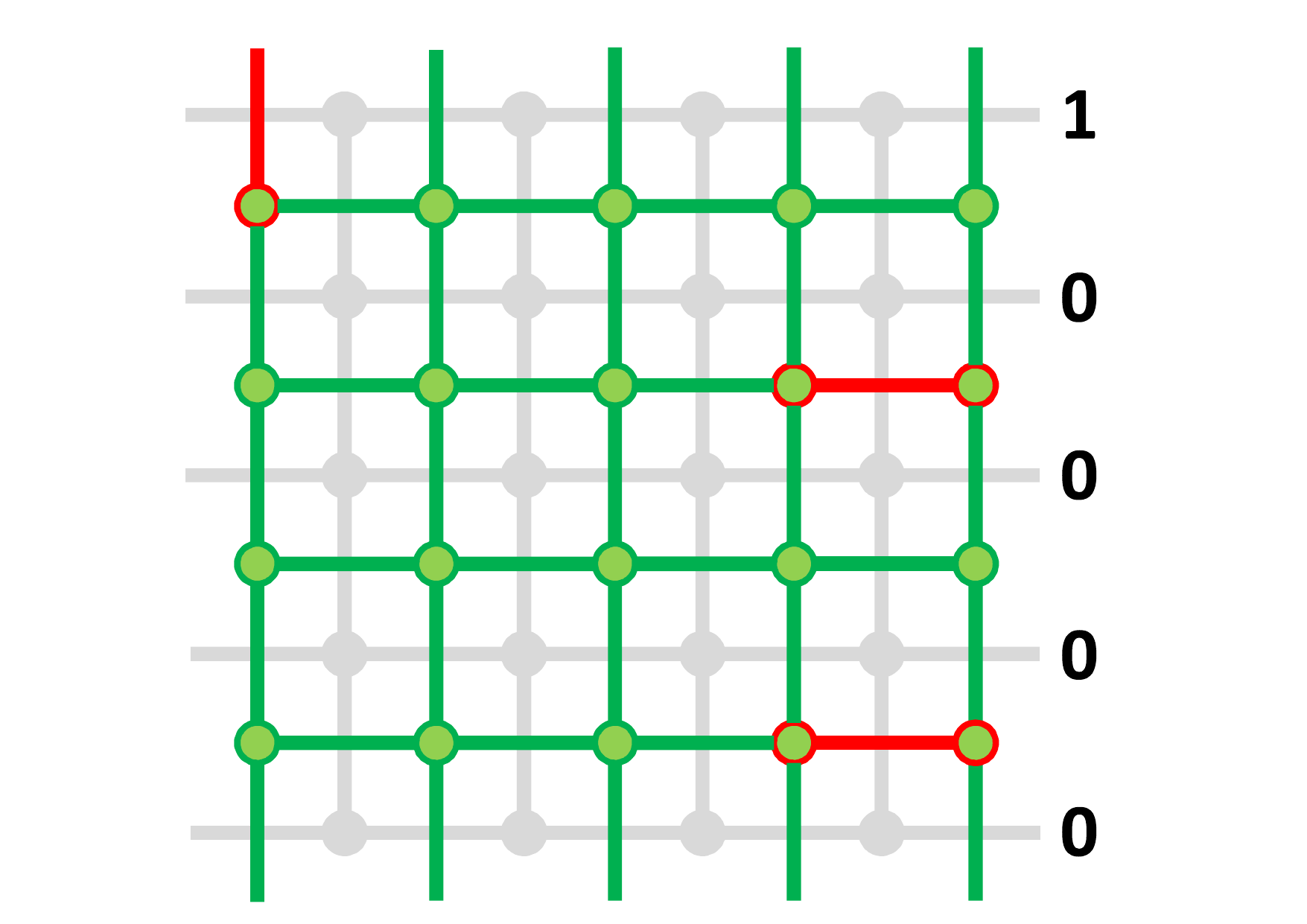}
 \subcaption{}
 \label{fig:sample2}
  \end{minipage}
  \begin{minipage}[]{0.24\hsize}
 \includegraphics[clip, width=4cm]{./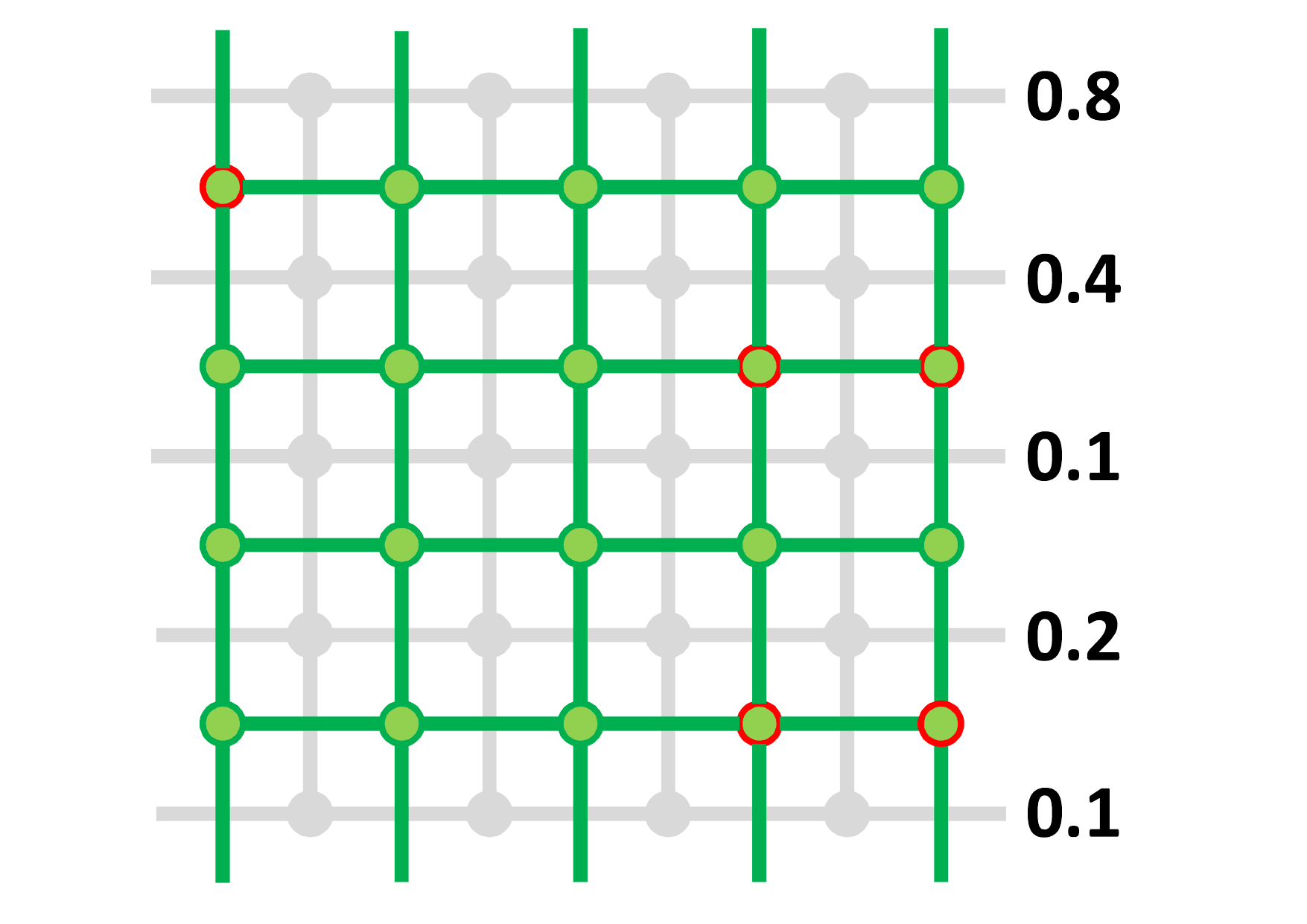}
 \subcaption{}
 \label{fig:sample3}
  \end{minipage}
  \begin{minipage}[]{0.24\hsize}
 \includegraphics[clip, width=4cm]{./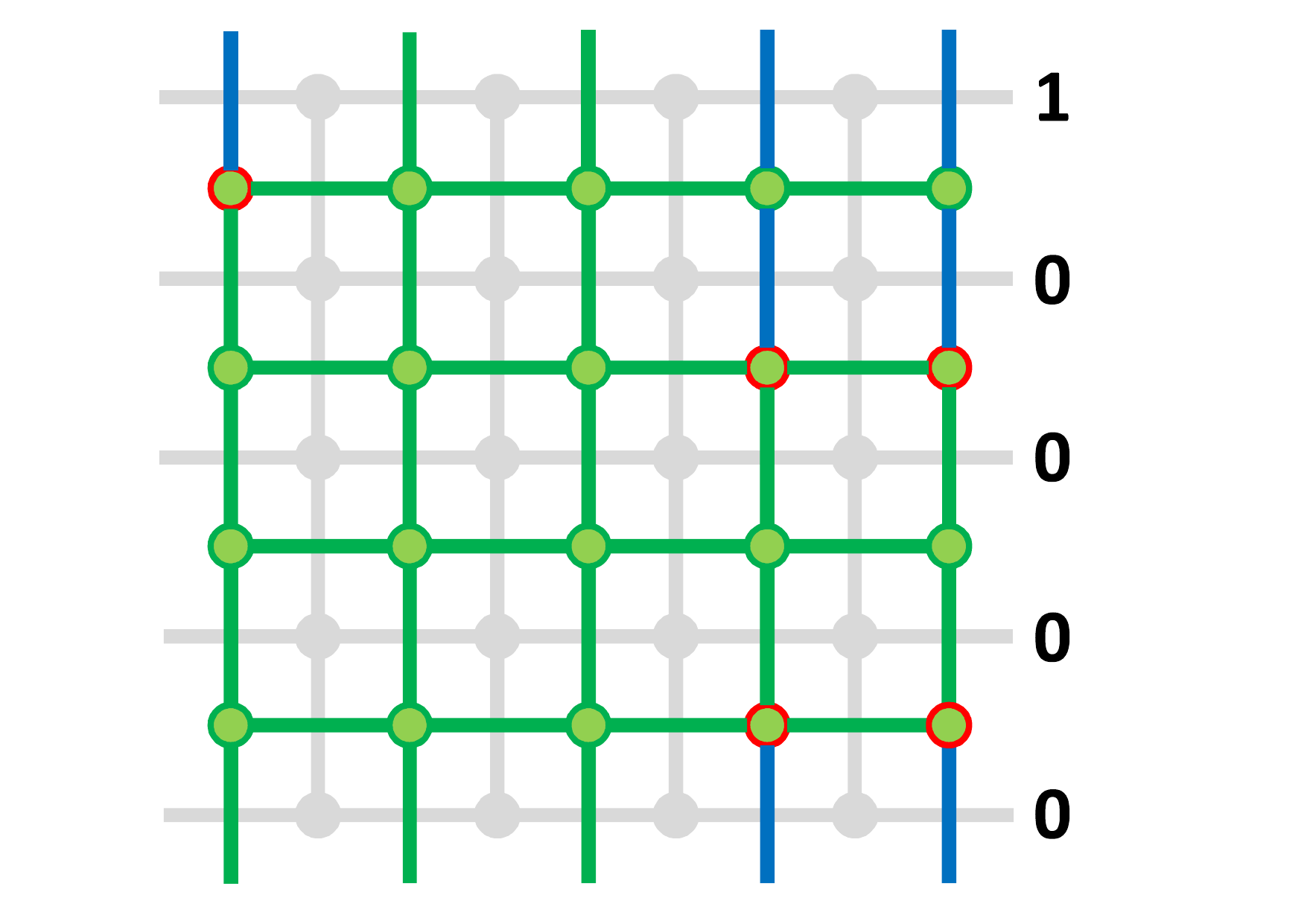}
 \subcaption{}
 \label{fig:sample4}
  \end{minipage}
  	
 \caption{The figures show the decoding process based on proposed scheme. Each picture shows only $Z$ lattice, of which the edge corresponds to whether there is a bit-flip error on the physical qubit or not, and the circle shows whether an error is detected through the syndrome measurement. (a) Five logical $Z$ operators which minimize the normalized sensitivity $\frac{m(H_g)}{M(H_g)}$.  (b) The actual physical error is drawn as red edges, and the detected syndromes as red circles. The binary numbers shown to the right is the diagnosis vector of the physical errors. The neural network learns the relation between the location of the detected syndromes and the diagnosis vector. (c) The real-valued diagnosis vector is predicted by the neural decoder. (d) With the syndrome pattern, faithful diagnosis vector is either $10000$ or $01111$. The chosen  faithful diagnosis vector is $10000$. Accordingly, we choose the recovery operator shown in the figure. In this case, the decoding succeeds. }
 \label{fig:sample}
\end{figure*}

In this picture, a bit-flip error on a physical qubit is represented by the color of the corresponding edge (green: no error, red: error), and the syndrome value is represented by the color of the circle (green: undetected, red: detected). As shown in Fig.\,\ref{fig:sample}\subref{fig:sample1}, The matrix $H_g$ is constructed with logical operators each of which is the product of the Pauli $Z$ operators on the edges crossing the dotted line.
In this case, we see $M(H_g)=O(d)$, $m(H_g)=O(1)$, and $\frac{m(H_g)}{M(H_g)} = O(d^{-1})$.

Suppose that bit-flip errors occur on a set of the physical qubits as shown in Fig.\,\ref{fig:sample}\subref{fig:sample2}. The physical error is detected with the syndrome values as shown in the same figure. The diagnosis vector is calculated as the commutation relation of the chosen logical operators and the physical error. We show the calculated diagnosis on the right side of the lattice. In the training phase, the model learns the relation between the positions of the red circles and the values of the diagnosis vector.
In the prediction phase, only the positions of the red circles are given.  
The trained neural network outputs a real-valued prediction of the diagnosis vector as shown in Fig.\,\ref{fig:sample}\subref{fig:sample3}, for example. 
From this information, we extract the probabilities of the faithful diagnosis, and we choose the faithful diagnosis which is expected to be the most probable, as shown in Fig.\,\ref{fig:sample}\subref{fig:sample4}. 
Since the chosen diagnosis vector is equivalent to the diagnosis vector generated by the actual physical error, this decoding trial is a success.

\subsection{Relation to the existing methods}
\label{section_relation_works}
In this subsection, we explain how the existing methods \cite{torlai2017neural,varsamopoulos2017decoding,baireuther2018machine,krastanov2017deep} can be treated in the linear prediction framework.
The method proposed by Varsamopoulos {\it et al.} \cite{varsamopoulos2017decoding} used an approach similar to the example shown in Sec.\,\ref{L2dec} in the case of $k=1$. 
In this method, a linear map is used for the pure error, which is called a simple decoder.
The pure error is then written in the form $\bm{t}(\bm{s})^{\rm T} = T \bm{s}$, where $T$ is a $2n \times (n-k)$ matrix satisfying $H_c \Lambda T = I$.
The label vector used in this method can essentially be regarded as being generated by a diagnosis matrix defined by 
\begin{eqnarray}
H_g = \left( \begin{matrix} \bm{l}_{01} \\ \bm{l}_{10} \\ \bm{l}_{11} \end{matrix} \right) ( I \oplus \Lambda TH_c).
\end{eqnarray}
We see this is faithful and decomposable constructions.
Let a generator matrix $G$ be 
\begin{eqnarray}
G = \left( \begin{matrix} \bm{l}_{01} \\ \bm{l}_{10} \end{matrix} \right).
\end{eqnarray}
Then, a diagnosis generated from the diagnosis matrix is 
\begin{eqnarray}
\bm{g} = H_g \Lambda \bm{e} = \left( \begin{matrix} w(\bm{e})_1 \\ w(\bm{e})_0 \\ w(\bm{e})_0 \oplus w(\bm{e})_1 \end{matrix} \right),
\end{eqnarray}
where $\bm{w}(\bm{e}) = (w(\bm{e})_0, w(\bm{e})_1)$. 
The method in Ref.\,\cite{varsamopoulos2017decoding} uses a different set of label vectors $\bm{g}'$ called one-hot representation, which has a one-to-one correspondence with $\bm{g}$ as 
\begin{eqnarray}
\bm{g} = (0,0,0)^{\rm T} & \mapsto & \bm{g}'=(1,0,0,0)^{\rm T} \\
\bm{g} = (0,1,1)^{\rm T} & \mapsto & \bm{g}'=(0,1,0,0)^{\rm T} \\
\bm{g} = (1,0,1)^{\rm T} & \mapsto & \bm{g}'=(0,0,1,0)^{\rm T} \\
\bm{g} = (1,1,0)^{\rm T} & \mapsto & \bm{g}'=(0,0,0,1)^{\rm T}.
\end{eqnarray} 
The above relation as real vectors can be written as 
\begin{eqnarray}
\bm{g}' = \frac{1}{2} \left( \begin{matrix} -1 & -1 &-1 & 1 \\ -1 & 1 & 1 & 1 \\ 1 & -1 & 1 & 1 \\ 1 & 1 & -1 & 1 \end{matrix} \right) \left( \begin{matrix} \bm{g} \\ 0 \end{matrix} \right) + \left( \begin{matrix} 1 \\ 0 \\ 0 \\ 0 \end{matrix} \right).
\end{eqnarray}
Since it is an isometric affine transformation, we expect that this transformation has little effect on the performance of the supervised machine learning. 
The matrix $H_g$ is faithful and decomposable, but its normalized sensitivity is $O(1)$. We thus expect that this decoder becomes near-optimal when the training is ideally performed, but the prediction is not robust when the size of the training data set is small.

The method proposed by Baireuther {\it et al.} \cite{baireuther2018machine} mainly focuses on a model applicable to quantum error correction when we perform various counts of repetitive stabilizer measurements by utilizing recurrent neural network. 
They use the commutation relation between the physical error and a logical $Z$ operator as the label, since they only concerned about the logical bit-flip probability with the fixed initial state in the logical space. 
We can thus consider this method as a case of the linear prediction framework.

Torlai {\it et al.} \cite{torlai2017neural}, Krastanov {\it et al.} \cite{krastanov2017deep}, and Breuckmann {\it et al.} \cite{breuckmann2018scalable} took a different approach from the above two \cite{varsamopoulos2017decoding,baireuther2018machine}. 
They used the binary representation of the physical error as the label vector.
In the linear prediction framework, it corresponds to a choice of $H_g=\Lambda$ leading to 
\begin{eqnarray}
\bm{g} = H_g \Lambda \bm{e}^{\rm T} = \bm{e}^{\rm T}
\end{eqnarray}

Since $H_g$ is not faithful, it cannot constitute an optimal decoder even with the delta diagnosis decoder.
Interestingly, the delta diagnosis decoder with this choice of $H_g$ works as an MD decoder, which can be shown by the following lemma.
\begin{lemma}
\label{lemma_physical_decoder}
If the matrix $H_{cg}$ has rank $2n$ in GF(2), there exists a map $\bm{r}^*(\bm{g},\bm{s})$ such that the decoder with $\bm{r}(\bm{s}) = \bm{r}^*(\bm{g}^{(\delta)}(\bm{s}),\bm{s})$ works as an MD decoder for arbitrary distribution $\{p_{\bm{e}}\}$.
If $H_{cg}$ does not have rank $2n$, no such map exists.
\end{lemma}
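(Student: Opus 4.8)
The plan is to reduce the behavior of the delta diagnosis decoder to a rank statement about the single GF(2) map $\phi(\bm{e}) := H_{cg}\Lambda\bm{e}^{\rm T}$, whose two blocks are exactly $\bm{s}(\bm{e})$ and $\bm{g}(\bm{e})$. Because $\Lambda$ is invertible ($\Lambda^2 = I$), we have $\mathrm{rank}(\phi) = \mathrm{rank}(H_{cg})$, so ``$H_{cg}$ has rank $2n$'' is equivalent to ``$\phi$ is injective.'' The starting point is that Eqs.~(\ref{eq:delta_loss}) and~(\ref{eq:delta_minimized_loss}) characterize the delta diagnosis vector as $\bm{g}^{(\delta)}(\bm{s}) = \argmax_{\bm{g}} {\rm Pr}_{\bm{e}\sim\{p_{\bm{e}}\}}[\bm{g}(\bm{e}) = \bm{g}\mid \bm{s}(\bm{e}) = \bm{s}]$; that is, the delta decoder reports the conditionally most probable diagnosis.

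For the forward direction I would argue that injectivity of $\phi$ means each admissible pair $(\bm{s},\bm{g})$ has a unique preimage. Hence, for fixed $\bm{s}$, the conditional law over diagnoses mirrors the conditional law over errors compatible with $\bm{s}$: ${\rm Pr}[\bm{g}(\bm{e})=\bm{g}\mid \bm{s}] = p_{\phi^{-1}(\bm{s},\bm{g})}/{\rm Pr}[\bm{s}(\bm{e})=\bm{s}]$ when the preimage exists and $0$ otherwise. Maximizing over $\bm{g}$ therefore selects precisely the most probable error consistent with $\bm{s}$, which is the defining property of an MD choice. I would then set $\bm{r}^*(\bm{g},\bm{s})$ to be the unique solution of $H_{cg}\Lambda\bm{r}^{\rm T} = \phi\text{-stack}(\bm{s},\bm{g})$ (an arbitrary fixed value on inadmissible pairs), so that $\bm{r}(\bm{s}) = \bm{r}^*(\bm{g}^{(\delta)}(\bm{s}),\bm{s})$ returns $\bm{e}^*(\bm{s})$ for every $\bm{s}$ of positive probability, independently of $\{p_{\bm{e}}\}$.

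For the converse I would use a two-distribution indistinguishability argument. If $\mathrm{rank}(H_{cg}) < 2n$, the kernel of $\phi$ contains a nonzero $\bm{\nu}$, giving distinct errors $\bm{e}_1$ and $\bm{e}_2 := \bm{e}_1\oplus\bm{\nu}$ that share the same syndrome $\bm{s}_0$ and diagnosis $\bm{g}_0$. I would build two distributions that place all syndrome-$\bm{s}_0$ weight on these two errors, assigning $(p_{\bm{e}_1},p_{\bm{e}_2})=(2\epsilon,\epsilon)$ in one and $(\epsilon,2\epsilon)$ in the other, and agreeing on every other error. Since $\bm{e}_1$ and $\bm{e}_2$ feed into the \emph{same} diagnosis $\bm{g}_0$, swapping their weights leaves every conditional diagnosis probability, hence $\bm{g}^{(\delta)}(\bm{s})$, unchanged for all $\bm{s}$; yet the unique MD error at $\bm{s}_0$ is $\bm{e}_1$ for the first distribution and $\bm{e}_2$ for the second. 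Any admissible $\bm{r}^*$ receives the identical argument $(\bm{g}^{(\delta)}(\bm{s}_0),\bm{s}_0)$ in both cases, so its single output cannot equal the two distinct MD errors, proving no such map exists.

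The main obstacle I anticipate is this converse, and within it the point needing the most care is the intended meaning of ``works as an MD decoder.'' The construction above shows that no fixed $\bm{r}^*$ reproduces the MD \emph{physical} error for all distributions, which settles the claim under the literal definition of the MD decoder as a map returning a most probable error. If one instead demands only agreement at the level of logical classes, the argument should be strengthened by choosing a kernel vector $\bm{\nu}\notin\mathcal{L}_0$ so that $\bm{w}(\bm{e}_1)\neq\bm{w}(\bm{e}_2)$, while the complementary case $\ker\phi\subseteq\mathcal{L}_0$ must be handled separately; verifying that the indistinguishability construction still forces a genuine discrepancy there is the delicate step.
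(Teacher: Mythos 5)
Your proof is correct and follows essentially the same route as the paper's: the forward direction inverts the injective map $\bm{e}\mapsto H_{cg}\Lambda\bm{e}^{\rm T}$ exactly as the paper does via a left inverse $H_{cg}^{-1}$, and your two-distribution argument is a rigorous rendering of the paper's one-line converse (``there exist two physical errors which generate the same pair of syndrome and diagnosis; we cannot determine which is more probable''). The subtlety you flag about logical-class equivalence when the kernel lies inside $\mathcal{L}_0$ is a fair observation but is immaterial under the paper's literal definition of the MD decoder as one returning the most probable physical error, which is the reading the paper adopts.
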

\begin{proof}
If $H_{cg}$ has rank $2n$, there exists a left inverse binary matrix $H_{cg}^{-1}$ such that $H_{cg}^{-1}H_{cg} = I$.
Then, we can obtain the physical error $\bm{e}$ as 
\begin{eqnarray}
\Lambda H_{cg}^{-1} \left( \begin{matrix} \bm{s} \\ \bm{g} \end{matrix} \right) = \bm{e}^{T}.
\end{eqnarray} 
Thus, we can obtain the most probable physical error $\bm{e}^*(\bm{s})$ from the most probable diagnosis. 

If $H_{cg}$ does not have rank $2n$ in GF(2), there exist two physical errors which generate the same pair of syndrome and diagnosis. We cannot determine which is more probable. Thus, we cannot perform MD decoding when $H_{cg}$ does not have rank $2n$.
\end{proof}

A drawback in this approach is difficulty arising when we replace a loss function with a practical one such as L2 distance. 
In order to satisfy a decomposable property in MD decoding, the length of the diagnosis must be no shorter than $2^{n+k}$ since there are $2^{n+k}$ possible candidates of the most probable physical error. This is not practical when the distance is large, and thus it requires heuristics such as repetitive sampling.

\if0
A possible solution to this problem is to choose $L_g = {\rm poly}(n)$, and expect that we can heuristically estimate the most possible physical error from the prediction. 
The methods of Refs.\,\cite{torlai2017neural,krastanov2017deep} heuristically sample a physical error as follows.
They consider the $i$-th real value of the predicted diagnosis vector as the marginal probability with which the $i$-th bit of the diagnosis is one, and assumed that each bit of the physical error can be independently sampled.
With this assumption, we can efficiently sample a binary vector from the diagnosis vector. 
There is another problem in this heuristic method, since a sampled binary diagnosis vector is not necessarily regenerates an observed syndrome in general. 
Such a physical error is obviously not an actual physical error.
We consider a sampled diagnosis as $\bm{g}(\bm{e}^*(\bm{s})) \oplus \bm{\delta}$, where $\bm{g}(\bm{e}^*(\bm{s}))$ is a diagnosis vector of the most possible physical error, and $\bm{\delta}$ is a probabilistic binary vector which represents deviation due to the sampling and deviation by L2 loss function.
We can represent a reconstructed physical error as 
\begin{eqnarray}
\bm{e}_r &=& \Lambda H_{cg}^{-1} \left( \begin{matrix} \bm{s} \\ \bm{g}(\bm{e}^*(\bm{s})) \oplus \bm{\delta} \end{matrix}\right) \\
&=& \bm{e}^*(\bm{s}) \oplus \Lambda H_{cg}^{-1} \left( \begin{matrix} 0 \\ \bm{\delta} \end{matrix}\right)
\end{eqnarray}
This reconstructed error $\bm{e}_r$ generates the following syndrome vector.
\begin{eqnarray}
H_c \Lambda \bm{e}_r^{\rm T} = \bm{s} \oplus H_c H_{cg}^{-1} \left( \begin{matrix} 0 \\ \bm{\delta} \end{matrix}\right)
\end{eqnarray}
Thus, $\bm{e}_r \in b(\mathcal{L})$, which is a necessary condition for $\bm{e}_r$ to be the most probable physical error, if and only if $\bm{s}_{\bm{\delta}}:=H_c H_{cg}^{-1} \left( \begin{matrix} 0 \\ \bm{\delta} \end{matrix}\right)=0$.
In the method \cite{torlai2017neural}, $\bm{\delta}$ is repeatedly sampled until $\bm{s}_{\bm{\delta}} = 0$. 
In the method \cite{krastanov2017deep}, they improve the efficiency using hard message passing. 
They numerically showed that by using the above approach, we can achieve logical error probabilities comparable with the MD decoder in small distances.
However, the number of the repetition is expected to grow exponentially to $n$, and thus it requires exponential time to obtain a recovery operation.

The problem of this inefficient sampling can be avoided by choosing $H_g$ as $(n+k) \times 2n$ matrix and $H_{cg}$ as full-rank.
In this case, there are $2^{n+k}$ possible diagnosis vectors, and each of them uniquely corresponds to $2^{n+k}$ possible physical errors which regenerates $\bm{s}$.
Such an $(n+k) \times 2n$ diagnosis matrix $H_g$ can be efficiently generated using recursive matrix decomposition. 
We numerically checked the performance of the neural decoder with this construction.
As far as we tried, the performance of the neural decoder with this construction is far worse than that of any other constructions including the strategy with $H_g = \Lambda$ even in the case of $d=3$. 
We speculate this is because the sensitivity of $H_{cg}^{-1}$ is increased according to the distance.
This means that a resultant physical error may be largely changed with a small fluctuation of $\bm{\delta}$.
Note that in the case with $H_g=\Lambda$, the normalized sensitivity is $\frac{m(H_g)}{M(H_g)}=O(1)$, and the sensitivity of its inverse is $m(H_{cg}^{-1})=1$.
This fact implies that constructing a decoder so that it becomes robust to a deviation of predictions is essential to achieve near-optimal performance. 
\fi

\subsection{Numerical result}
We numerically show that the uniform data construction improves the performance of the neural decoder in the case of $k=1$. 
We trained an MLP model with the uniform data construction, and compare it with other data constructions of the neural decoders. 
We also make a comparison with known decoders such as the MD decoder and the MWPM decoder. 
We choose the [[$d^2,1,d$]] surface code for the comparison, since most of the existing methods were benchmarked with this code. 
We calculated the performance for two types of error models, the bit-flip noise and depolarizing noise. 
The probability distribution of the bit-flip noise is described as follows.
\begin{eqnarray}
p_{\bm{e}} = \begin{cases} p^{w(\bm{e})}(1-p)^{n-w(\bm{e})} & \forall i>n, e_i=0 \\ 0 & \text{otherwise} \end{cases},
\end{eqnarray}
where $p$ is an error probability per physical qubit, and $w(\bm{e})$ is a weight of physical error $\bm{e}$ defined in Eq.\,(\ref{eq:weightdef}).
The probability distribution of the depolarizing noise is described as follows.
\begin{eqnarray}
p_{\bm{e}} = (p/3)^{w(\bm{e})}(1-p)^{n-w(\bm{e})}.
\end{eqnarray}
Note that the occurrences of the bit-flip and phase-flip errors are correlated in the depolarizing noise. 
We first calculated the performance when the physical error probability is around the error threshold, namely, $p=0.1$ for the bit-flip noise and $p=0.15$ for the depolarizing noise. 
The tunable hyper-parameters of the neural network, such as number of layers in network, number of neurons in each layer, and learning rate, are optimized with a grid search for each noise model and for each size of the training data set. 
See Appendix B for the details of the parameter optimization and implementation.

The performance of the neural decoder under the bit-flip noise is shown in Fig.\,\ref{fig:tr}\subref{fig:tr1}. 
The solid lines are the performance of the neural decoder with the uniform data construction. 
The bottom dashed lines represent the logical error probability achievable with the MD decoder. 
The colors red, green, blue, and cyan correspond to distances 5, 7, 9, 11, respectively.

\begin{figure*}[tp]
	\centering
	\begin{minipage}[]{0.49\hsize}
		\includegraphics[clip, width=7.5cm]{./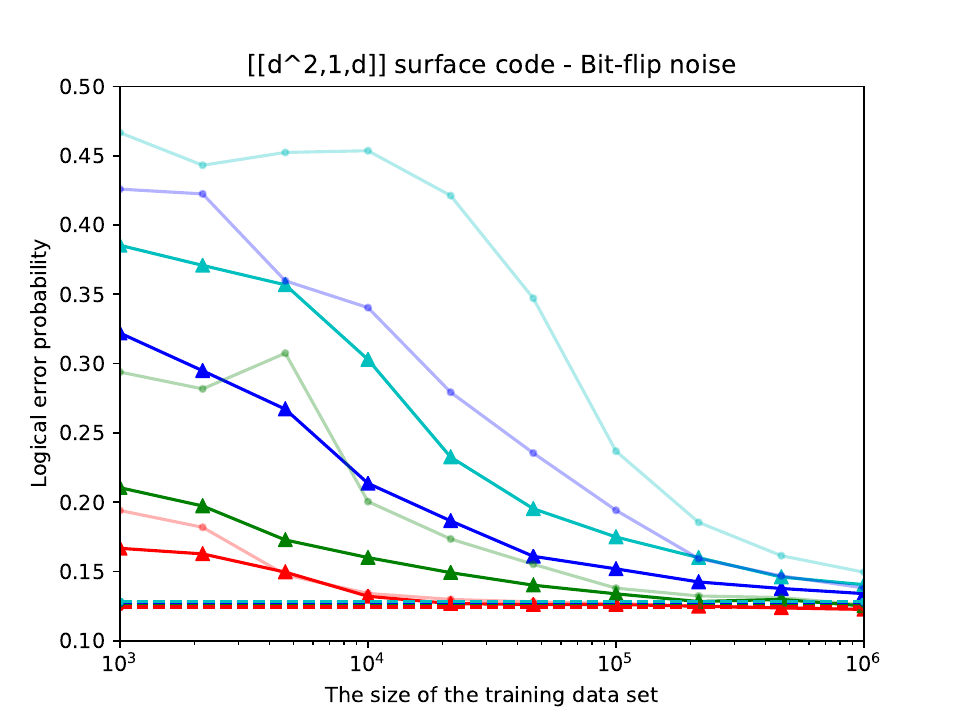}
		\subcaption{}
		\label{fig:tr1}
	\end{minipage}
	\begin{minipage}[]{0.49\hsize}
		\includegraphics[clip, width=7.5cm]{./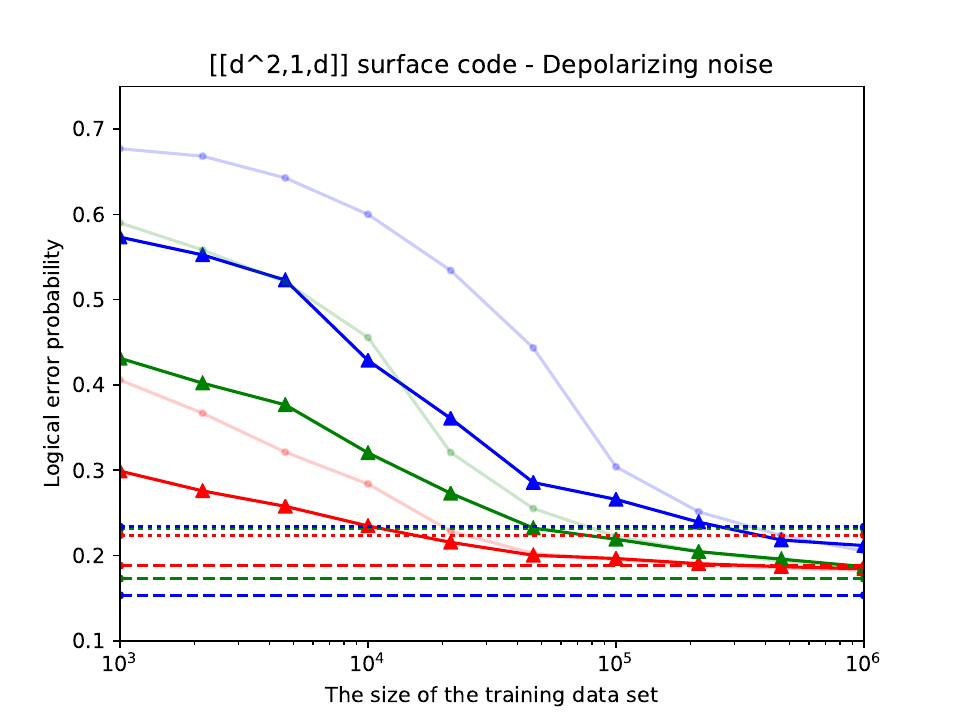}
		\subcaption{}
		\label{fig:tr2}
	\end{minipage}
	\caption{The performance comparison between the neural decoder with the uniform construction (solid lines) and that with short diagnosis construction (pale lines), the MD decoder (dashed lines), and the MWPM decoder (dotted lines) in the case of the [[$d^2,1,d$]] surface code. The logical error probabilities are plotted against of the sizes of the training data set with the fixed physical error probability $p$. We calculated the performance for distances $d=5$ (red), $7$ (green), $9$ (blue), and $11$ (cyan). (a) The case for the bit-flip noise with $p=0.1$. Note that there are no lines of MWPM decoder since the MWPM decoder is equivalent to the MD decoder in this setting. (b) The case for the depolarizing noise with $p=0.15$.}
	\label{fig:tr}
\end{figure*}

Comparing these two types of decoders, we see that the logical error probability of the neural decoder is near-optimal with $10^6$ data set at distance 11. 
On the other hand, there are gaps between the converged logical error probabilities of the neural decoder and that of the MD decoder when the distance is large. 
We speculate that these gaps are caused by imperfect learning of the spatial information of the topological codes, since it is partially improved with the network construction discussed in the next section.

We also implemented the neural decoder with short diagnosis, i.e. the construction with $N_{01}=N_{10}=N_{11}=1$, where $N_{\bm{w}}$ is a number of logical operators in the rows of $H_g$ corresponding to the class $\bm{w}$. 
This is equivalent to the construction which we showed as an example in Sec.\,\ref{L2dec}.
We call this construction, with the normalized sensitivity of $O(1)$, as short diagnosis construction, which is shown as the pale plots in Fig.\,\ref{fig:tr}\subref{fig:tr1}. 
%In this case, the normalized sensitivity $\frac{m(H_g)}{M(H_g)}$ is $O(1)$. 
Note that the performance of this decoder depends on the choice of the logical operators. 
We have tried this construction with various choice of the logical operators. 
The plotted data is the best among our trials. 
Although both constructions become near-optimal in the limit of large training data size, we see that the performance with the uniform data construction achieves smaller logical error probability than that with the short diagnosis construction for any size of the training data set. 
%We also see that when the size of the training data set becomes large, the logical error probability reaches near-optimal.
%On the other hand, when the performance of the neural decoder with the uniform data construction reaches that of the MD decoder, the advantage of the uniform data construction becomes small. We speculate the reason of this behavior as follows. When the probabilities of the faithful diagnosis $\{q_{\bm{s}}(\bm{w})\}$ are comparable, it is essentially hard to find which of them is the largest one. Though such a discrimination is not important when the logical error probability is far from the optimal value, it becomes vital when the logical error probability is close to the optimal value. In such a region, the neural network cannot find the delta diagnosis vector with a simple rule obtained with the small size of the training data set.
%
%\textcolor{red}{compare clearly, or clarify it is not shown}
We have also confirmed that the performance of the neural decoder degrades when the row vectors of $H_g$ consist of the same $O(d)$ logical operators of $X$, $Y$ and $Z$. 
In this case, while the number of the rows in $H_g$ is the same as that of the uniform data construction, the sensitivity $m(H_g)$ becomes $O(d)$, which makes the normalized sensitivity $\frac{m(H_g)}{M(H_g)}$ to be $O(1)$. 
Though these results are not plotted, the performance of this construction is almost the same as the short diagnosis construction. 
These results support our argument that it is essential for the performance of the neural decoder to minimize the normalized sensitivity.

The results with the depolarizing noise are shown in Fig.\,\ref{fig:tr}\subref{fig:tr2}. 
Note that for the surface code under correlated noise such as the depolarizing noise, it is not known how an efficient MD decoder can be constructed. 
We see that the performance of the neural decoder becomes near-optimal, and is superior to that of the MWPM decoder with $10^6$ training samples at $d=5,7,9$. 

We also calculated the logical error probability in terms of the physical error probability. 
The plots in the vicinity of the threshold value are shown in Fig.\,\ref{fig:thrmlp}. 
We chose the size of the training data set as $10^6$, and calculated the performance for the distance $d=5,7,9,11$ and for the bit-flip and depolarizing noises. 
We used hyper-parameters which was the best in the calculation of Fig.\,\ref{fig:tr} when the size of the training data set is $10^6$. 
For both of the noise models, the performance is near-optimal when the distance is small. On the other hand, when the distance becomes large, the logical error probability becomes larger than that of the MWPM decoder. 
The error threshold is usually estimated with the cross point of the performance in terms of the distance. 
We see that the error threshold based on the distance is worse than that of the MWPM decoder, though the logical error probability is smaller than that of the MWPM decoder.
\begin{figure*}[tp]
	\centering
	\begin{minipage}[]{0.49\hsize}
		\includegraphics[clip, width=7.5cm]{./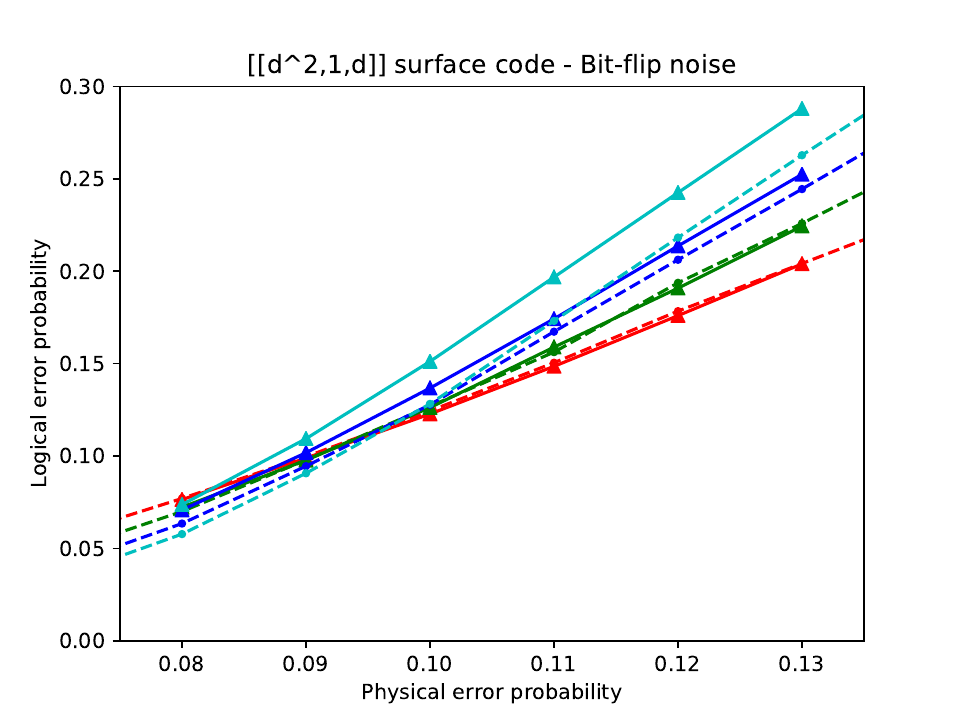}
		\subcaption{}
		\label{fig:thrmlp1}
	\end{minipage}
	\begin{minipage}[]{0.49\hsize}
		\includegraphics[clip, width=7.5cm]{./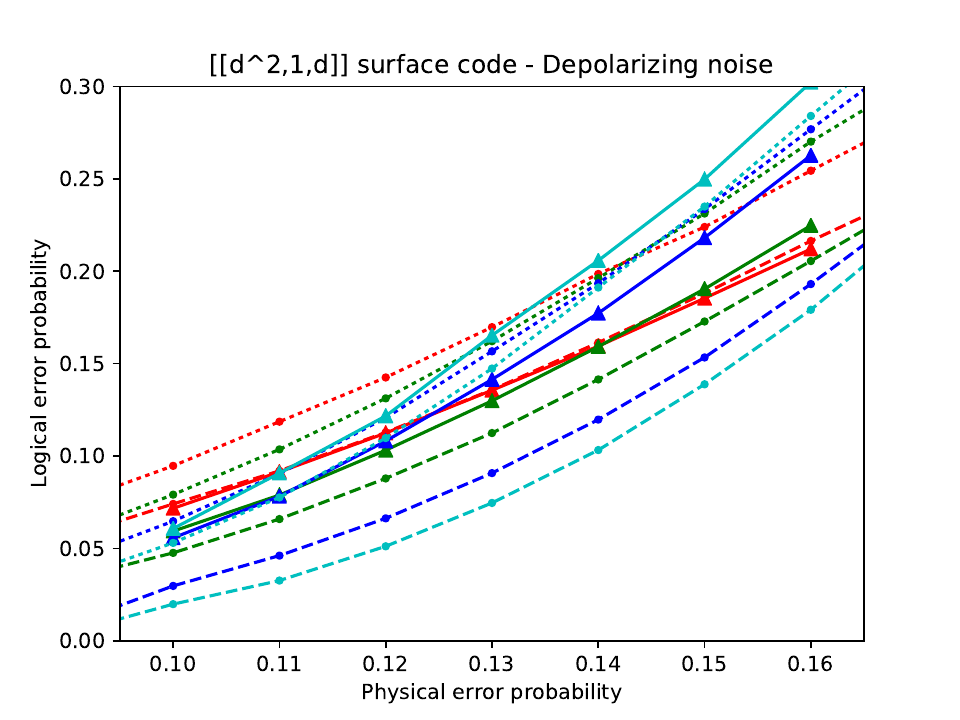}
		\subcaption{}
		\label{fig:thrmlp2}
	\end{minipage}
	\caption{The performance comparison between the neural decoder with the uniform construction (solid lines), the MD decoder (dashed lines), and the MWPM decoder (dotted lines) in the case of the [[$d^2,1,d$]] surface code. We calculated the performance for distances $d=5$ (red), $7$ (green), $9$ (blue), and $11$ (cyan) with the same $10^6$ training data set. (a) The case of the bit-flip noise. (b) The case of the depolarizing noise.}
	\label{fig:thrmlp}
\end{figure*}

The actual experiment is expected to be performed with a physical error probability sufficiently smaller than the threshold value. 
Therefore, we calculated the performance of the decoder with a small physical error probability.
The numerical results are shown in Fig.\,\ref{fig:thrmlplowp}. 
Since the training data set generated with a small value of $p$ is highly imbalanced, we trained the model with $p = 0.08$ for the bit-flip noise model, and with $p=0.11$ for the depolarizing noise model. Then, we tested the trained model with the data set generated with $p \leq 0.1$. 
We see that the logical error probability is smaller than the MWPM decoder in this region, for all the distances except $d=11$.
\begin{figure*}[tp]
	\centering
	\begin{minipage}[]{0.49\hsize}
		\includegraphics[clip, width=7.5cm]{./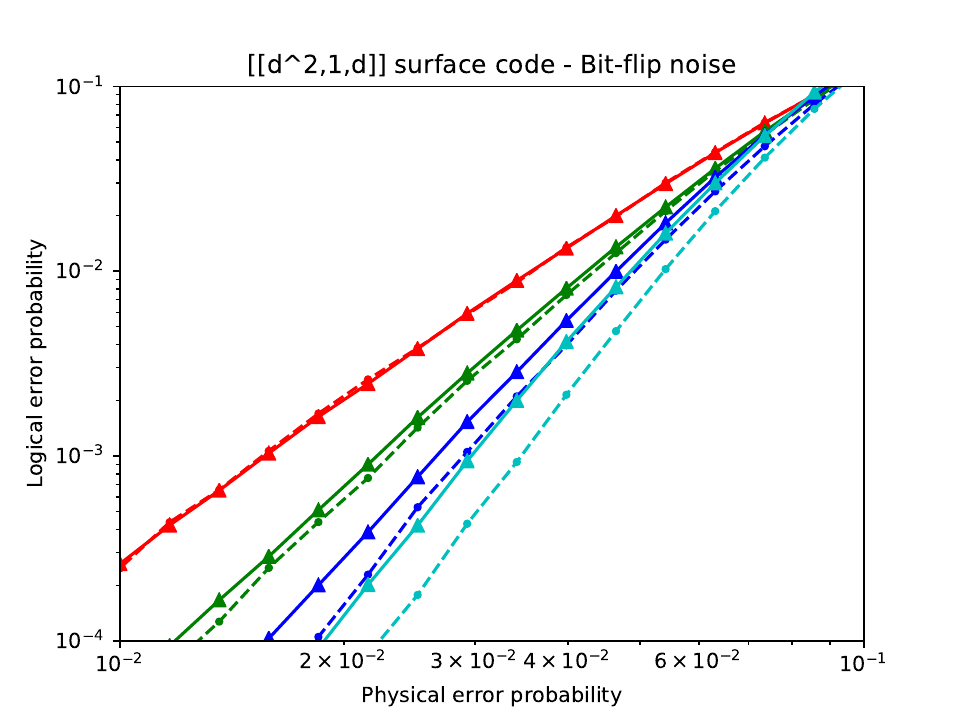}
		\subcaption{}
		\label{fig:thrmlplowp1}
	\end{minipage}
	\begin{minipage}[]{0.49\hsize}
		\includegraphics[clip, width=7.5cm]{./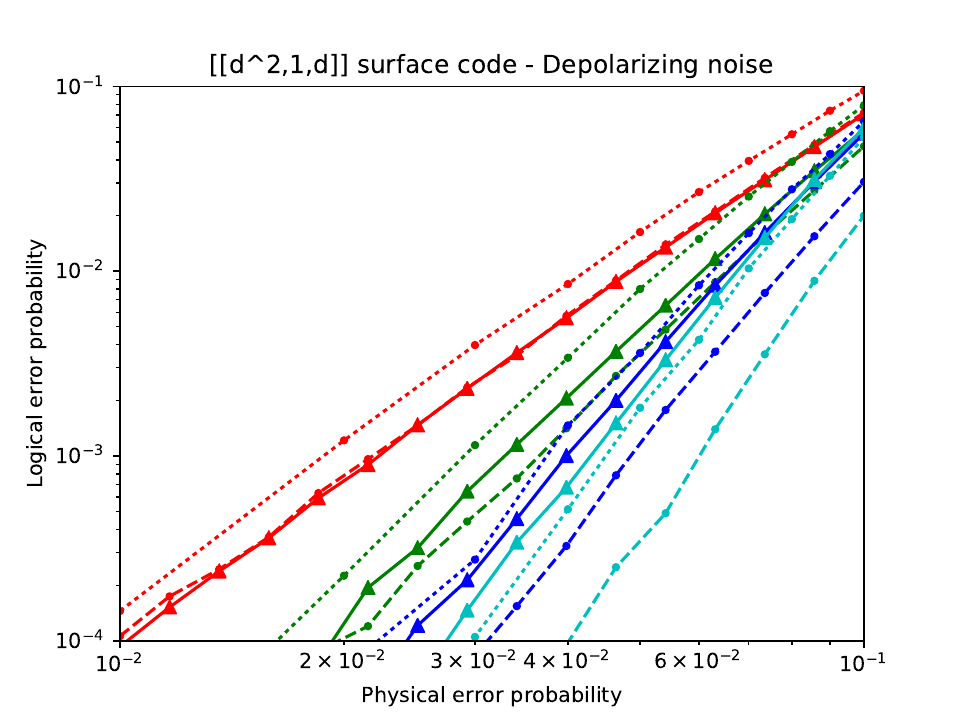}
		\subcaption{}
		\label{fig:thrmlplowp2}
	\end{minipage}
	\caption{The performance comparison between the neural decoder with the uniform construction (solid lines), the MD decoder (dashed lines), and the MWPM decoder (dotted lines) in the case of the [[$d^2,1,d$]] surface code. The neural decoder is trained with the $10^6$ training data set. We calculated the performance for distances $d=5$ (red), $7$ (green), $9$ (blue), and $11$ (cyan).  (a) The case of the bit-flip noise. The training data set is generated at the physical error probability $p=0.08$. (b) The case of the depolarizing noise. The training data set is generated at the physical error probability $p=0.11$.}
	\label{fig:thrmlplowp}
\end{figure*}

We also calculated the performance of the neural decoder for two types of color codes. 
We chose the size of training data set as $10^6$, and calculated the logical error probability for the distance $d=3,5,7,9$. 
Note that we cannot construct an efficient MD decoder in the color code even under independent bit-flip and phase-flip noise. 
The plots of the logical error probability to the physical error probability $p$ are shown in Fig.\,\ref{fig:ccplot}.
\begin{figure*}[tp]
	\centering
	\begin{minipage}[]{0.49\hsize}
		\includegraphics[clip, width=7.5cm]{./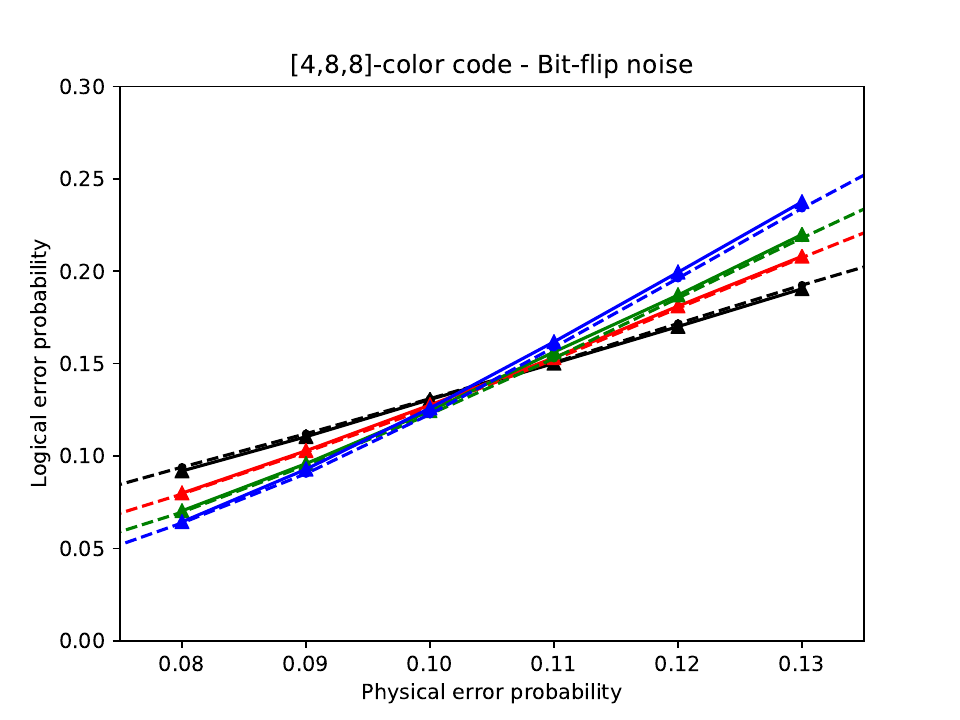}
		\subcaption{}
		\label{fig:cc1}
	\end{minipage}
	\begin{minipage}[]{0.49\hsize}
		\includegraphics[clip, width=7.5cm]{./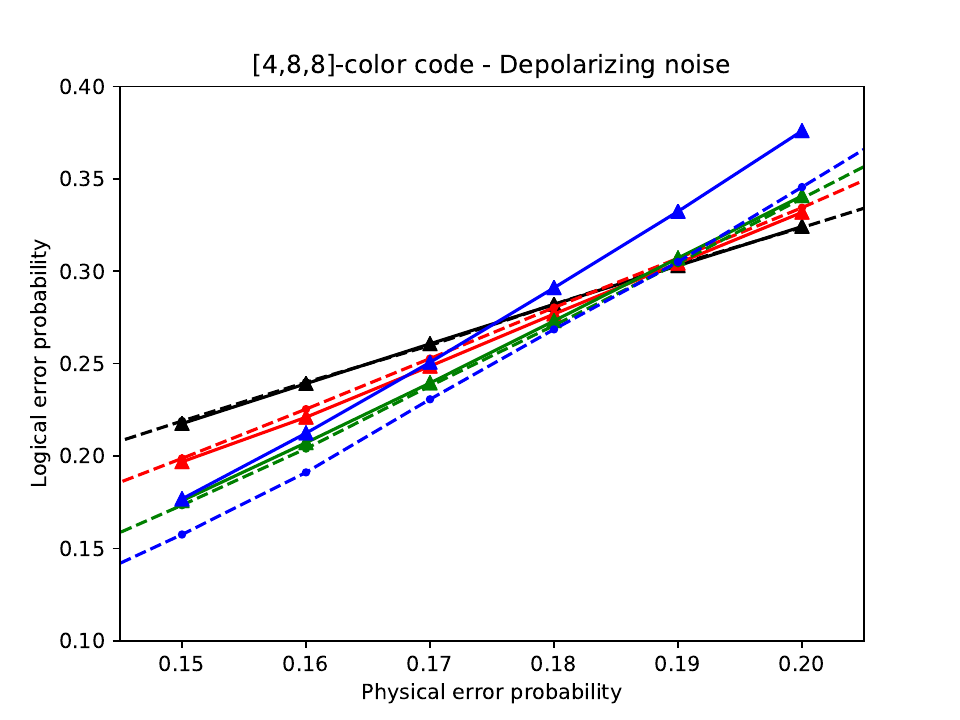}
		\subcaption{}
		\label{fig:cc2}
	\end{minipage}
	\vspace{0.01cm}
	\begin{minipage}[]{0.49\hsize}
		\includegraphics[clip, width=7.5cm]{./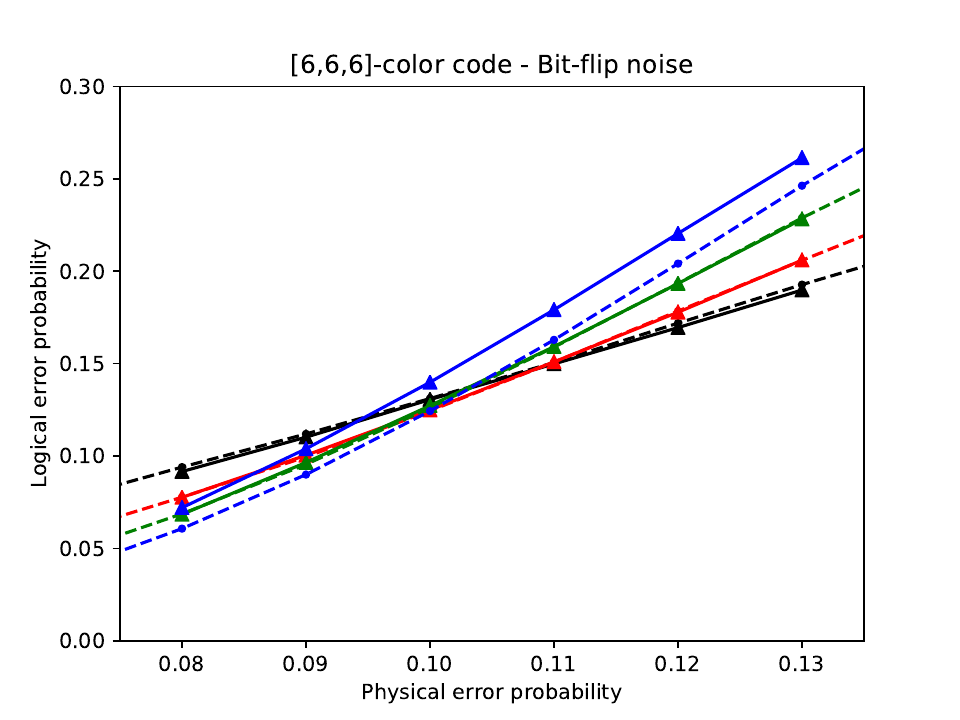}
		\subcaption{}
		\label{fig:cc3}
	\end{minipage}
	\begin{minipage}[]{0.49\hsize}
		\includegraphics[clip, width=7.5cm]{./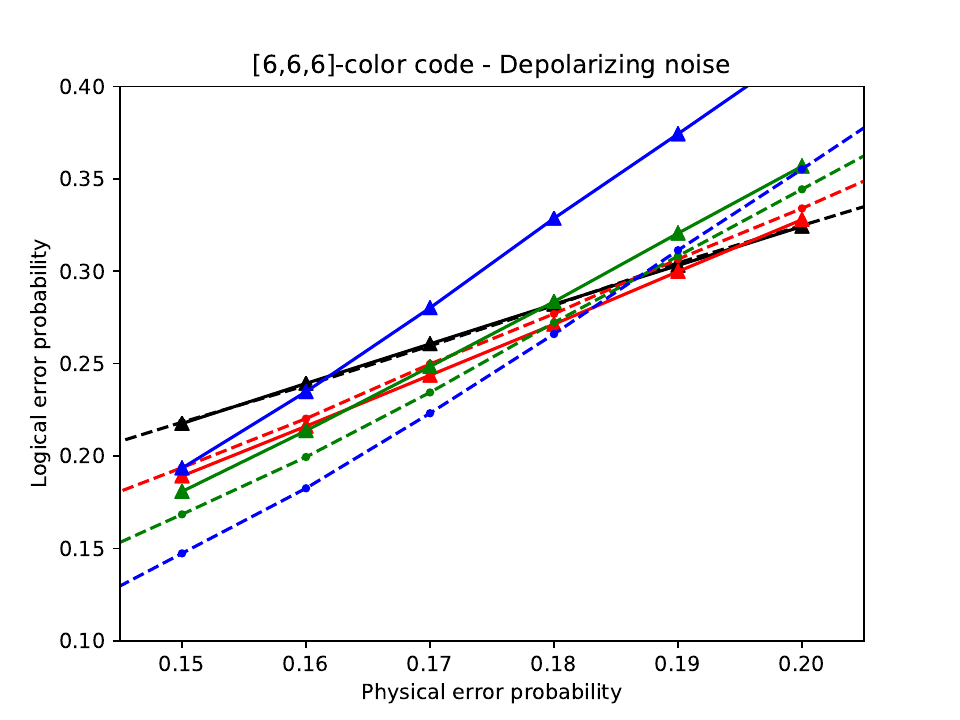}
		\subcaption{}
		\label{fig:cc4}
	\end{minipage}
	\caption{The performance comparison between the neural decoder with the uniform construction (solid lines), and the MD decoder (dashed lines) in the color codes. We calculated the performance for distances $d=3$ (black), $5$ (red), $7$ (green), and $9$ (blue)  with the $10^6$ training data set. (a) The case of the bit-flip noise in the [4,8,8]-color code. (b) The case of the depolarizing noise in the [4,8,8]-color code. (c) The case of the bit-flip noise in the [6,6,6]-color code. (d) The case of the depolarizing noise in the [6,6,6]-color code. }
	\label{fig:ccplot}
\end{figure*}
The configurations of the plots and lines are the same as that for the surface code.
In the case of the bit-flip noise, the near-optimal performance is achieved.
The performance is also near-optimal in the case of the depolarizing noise at distances except $d=9$.
We also see that the performance of the [4,8,8]-color code is better than that of [6,6,6]-color code. 
We speculate that this is because the number of the physical qubits required in the [4,8,8]-color codes is smaller than that of the [6,6,6]-color code at the same distance.
These results suggest that the neural decoder with the uniform data construction is effective also for the color codes.

\section{Utilizing spatial information}

In this section, we describe the construction of the neural network with convolutional layers.
We first discuss how the required size of the data set is expected to be suppressed if the model can utilize the spatial information of the two-dimensional quantum codes. 
Then, we introduce a construction of the neural network with convolutional layers to utilize the spatial information of the topological codes. 
We finally show the numerical results, and show that the performance of the neural decoder is improved. 

\subsection{Importance of the spatial information}
In this section, we utilize spatial information of the syndrome by using Convolutional Neural Network (CNN) as a prediction model. 
When we use MLP model, each layer is represented as a vector of neurons, and the neurons are densely connected from layer to layer. 
On the other hand, each layer of CNN is matrix-shaped, and each element of the next layer is calculated only from the local region of the previous layer using a map called a filter. 
The local filtering with the same filters can be considered as a convolution.
For the mathematical formulation, see the next subsection.

Though the CNN model is frequently used for the image recognition, this can be used for the recognition of a feature where spatial information of the feature is essential for the task. 
For example, the CNN model is expected to be effective for the classification of the phase of spin-glass \cite{carrasquilla2017machine}. 
In such a task, the local correlations of the spins are essential for prediction. Furthermore, since the patterns of local spins are translational invariant, filters which extract the feature in a local region are expected to be reused in the other regions. 
These properties match the premise of CNN, and we can expect that the performance of the CNN model is improved compared with other models such as MLP model for a fixed number of the training data set. 

In this subsection, we explain why CNN is also expected to be effective for decoding in the topological codes.
In the case of the two-dimensional topological codes, the syndrome values have natural two-dimensional arrangement. 
By carefully reshaping syndrome values as a matrix-shaped arrangement of the feature vector elements, we can explicitly let the model use local correlations of the observed syndromes using the CNN model. 
In the topological codes, a flip of a single physical qubit invokes at most a constant number (2 in the surface code, 3 in the color code) of local bit-flips in the syndrome value. 
This implies that whether two (or three) flipped syndrome bits are found in a local region or not is useful for predicting the property of the physical errors. 
For intuitive understanding, we elaborate the reason through examples. 
We consider the surface code under bit-flip errors. 
Suppose that a syndrome vector $\bm{s}$ is given in the prediction phase, and the model has encountered slightly different syndrome vectors $\bm{s}_A$ and  $\bm{s}_B$, where the difference from $\bm{s}$ are shown in Fig.\,\ref{fig:surface_diff_small}, in the training phase. 
The representation is the same as that of Fig.\,\ref{fig:sample}. 
\begin{figure}[tp]
 \centering
 \includegraphics[clip, width=7.5cm]{./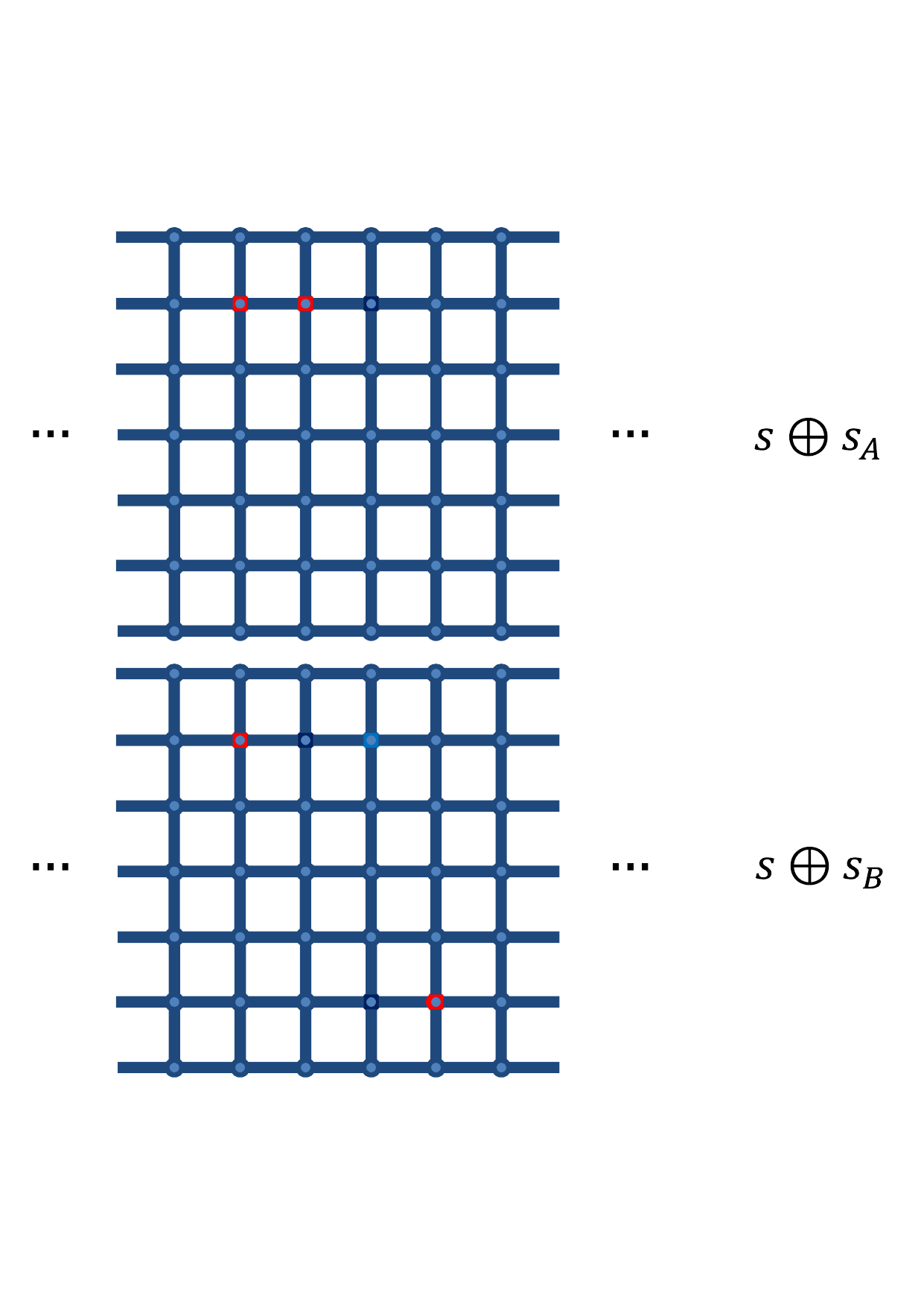}
 \caption{Example of the difference of the syndrome values. Each node in the figure corresponds to each syndrome value, and edge corresponding to the error status of each physical qubit. The color of the circle corresponds to whether the syndrome measurement detects an error.}
 \label{fig:surface_diff_small}
\end{figure}
We ignore boundary effects of the topological codes for simplicity. 
In both cases, the syndrome is two hamming distance away from the original syndrome vector $\bm{s}$, namely, $h(\bm{s}\oplus \bm{s}_A) = h(\bm{s}\oplus \bm{s}_B) = 2$.
On the other hand, there is a difference between the two syndromes in light of whether it helps the prediction of the diagnosis for the observed syndrome $\bm{s}$.
In Eq.\,(\ref{eq:usefulset}), we introduced a set of physical errors $S(\bm{s},\bm{w}^*;h)$ such that any $\bm{e} \in S(\bm{s},\bm{w}^*;h)$ with $h \sim N(H_g)^{-1}$ produces a training data useful for estimating the L2 diagnosis vector of $\bm{s}$. 
For a given $\bm{s}$ and $\bm{s}'$, if there is no vector $\bm{e}_{\delta}$ such that $H_c \Lambda \bm{e}_{\delta}^{\rm T} = \bm{s} \oplus \bm{s}'$ and $h(\bm{e}_{\delta}) \lesssim N(H_g)^{-1}$, we see no errors $\bm{e}$ with $\bm{s}(\bm{e})=\bm{s}'$ are contained in $\cup_{\bm{w} \in \{0,1\}^{2k}} S(s,\bm{w};N(H_g)^{-1})$.
In the case of $\bm{s}_A$ and $\bm{s}_B$, there is such a physical error $\bm{e}_{\delta}$ with a small hamming weight for $\bm{s}_A$, but not for $\bm{s}_B$.
Thus, if the prediction model can distinguish the samples with $\bm{s}_A$ from those with $\bm{s}_B$, it can recognize that the samples with $\bm{s}_B$ in the training data set are not relevant to the prediction for $\bm{s}$. 
The CNN model can distinguish it since it naturally utilizes the spatial information of the syndrome values. 
On the other hand, the MLP model cannot easily distinguish them since the model is not provided with the relevant spatial structure before training. 
%On the other hand, if syndrome values are provided as a binary vector without any information about spatial structure, we cannot determine it.
This discussion implies the logical error probability under the fixed number of the training data set is expected to be improved with the use of a CNN model. 

\subsection{Construction of the network}
A convolutional neural network extracts patterns from image data through trainable filters that activate (produce a high value) when there are specific local patterns in the input data. 
The network usually consists of multiple convolutional layers $C^{(n)}$ each of which consists of different filtered versions of the image data $C^{(n)}_p$, indexed by a channel number $p$. The $(n-1)$-th layer with $Q$ channels are filtered to the $n$-th layer with $P$ channels with $Q\times P$ filters which we can be represented by a matrix  $f^{(n-1,n)}$.
We can describe this relation as follows. 
\begin{align}
C_{i,j,p}^{(n)}=A(\sum_{d_x}\sum_{d_y}\sum_{q}f_{d_x,d_y,q,p}^{(n-1,n)}C_{i+d_x,j+d_y,q}^{(n-1)}+b^{(n)}_p),
\end{align}
where $C_{i,j,p}^{(n)}$ is the $(i,j)$ element of the $p$-th channel in the $n$-th convolutional layer, and $f_{d_x,d_y,q,p}^{(n-1,n)}$ is the $(d_x,d_y)$ element in the $(q,p)$-th filter from the $(n-1)$-th layer to the $n$-th layer. 
Parameter $b^{(n)}_p$ is the bias added to the $p$-th channel of the $n$-th layer. A simple example is shown in Fig.\,\ref{fig:CNNlayer} where one layer has three channels and the next layer has two channels.

To use a CNN in our decoding task, 
%We treat two-dimensional matrix representations of the syndrome values as an image data set, and attempt to extract its local features with CNN decoder. 
%Therefore, 
we have to express the syndrome vector $\bm{s}$ with an appropriate  matrix representation.
We reallocate the syndrome vector for the [[$2d^2-2d+1,1,d$]] and [[$d^2,1,d$]] codes as shown in Fig.\,\ref{fig:divide}. For the [[$2d^2-2d+1,1,d$]] surface code, $\bm{s}$ is converted into two $d \times (d-1)$ matrices for the $X$ syndrome and the $Z$ syndrome. Similarly, for the [[$d^2,1,d$]] surface code, $\bm{s}$ is converted into two $(d-1) \times (d+1)/2$ matrices. We have not tried it on color codes, since it is hard to interpret the allocation of the syndromes in color codes with rectangular shapes.

Our CNN decoder consists of three convolutional layers followed by a single fully-connected hidden layer as shown in Fig.\,\ref{fig:CNNmodel}. 
At the last convolutional layer, the output channel is flattened to a single one-dimensional vector. 
Then it is used as an input for the subsequent fully-connected hidden layer. 
For each convolutional layer, the channel number is chosen to be $10 d$ for the first two layers and $5d$ for the last layer. 
Details about the model architecture is described in Appendix B. 
It is worth noting that we used the same filters for decoding of both $X$ and $Z$ flip errors, and max-pooling is not used as it is observed to reduce the performance of the decoder.

\begin{figure}[tp]
	\centering
	\includegraphics[clip, width=9.5cm]{./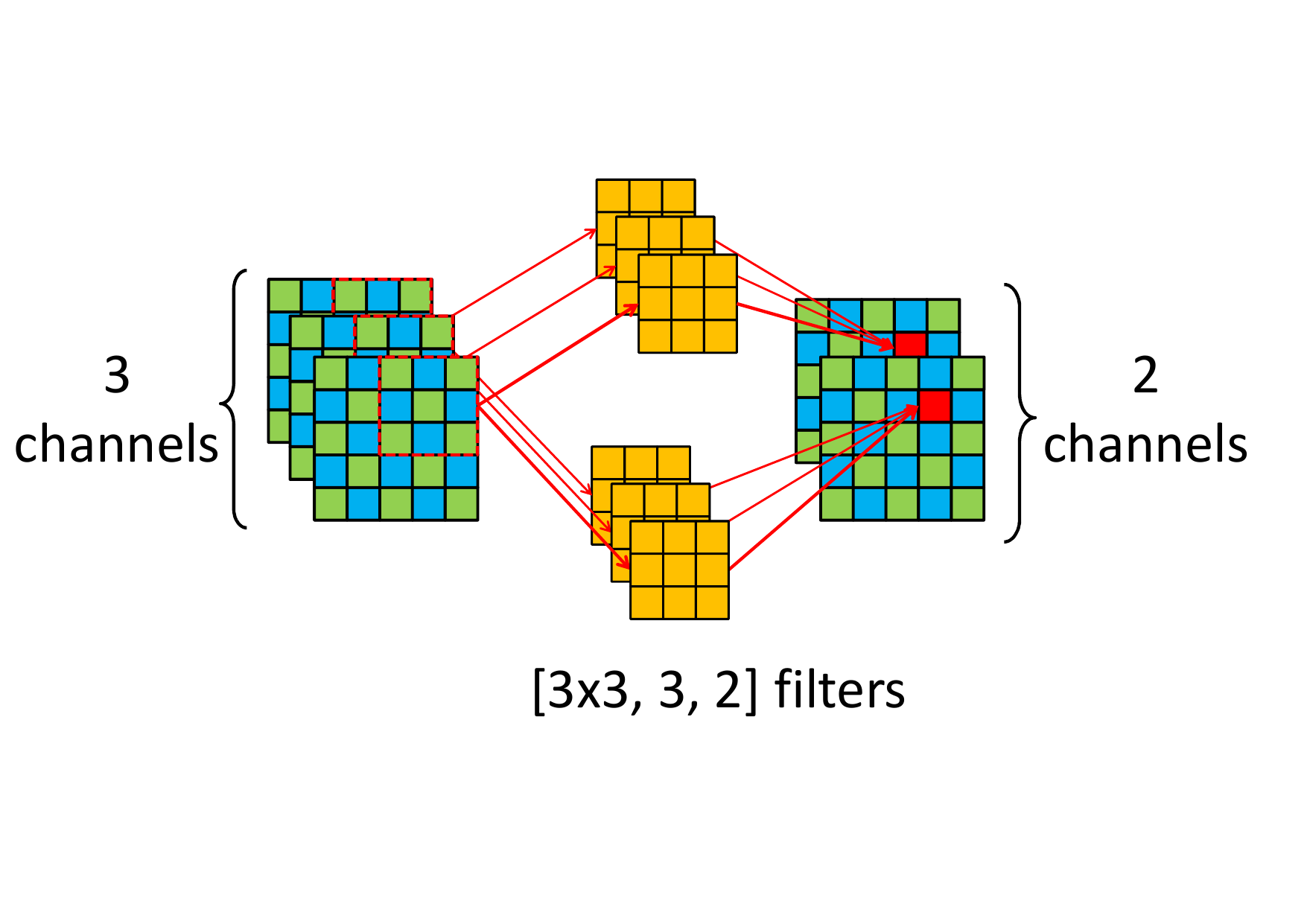}
	\caption{A simple case of convolutional layer where the input channel is three and the output channel is two.} \label{fig:CNNlayer}
\end{figure}

\begin{figure*}[tp]
	\centering
	\includegraphics[clip, width=12cm]{./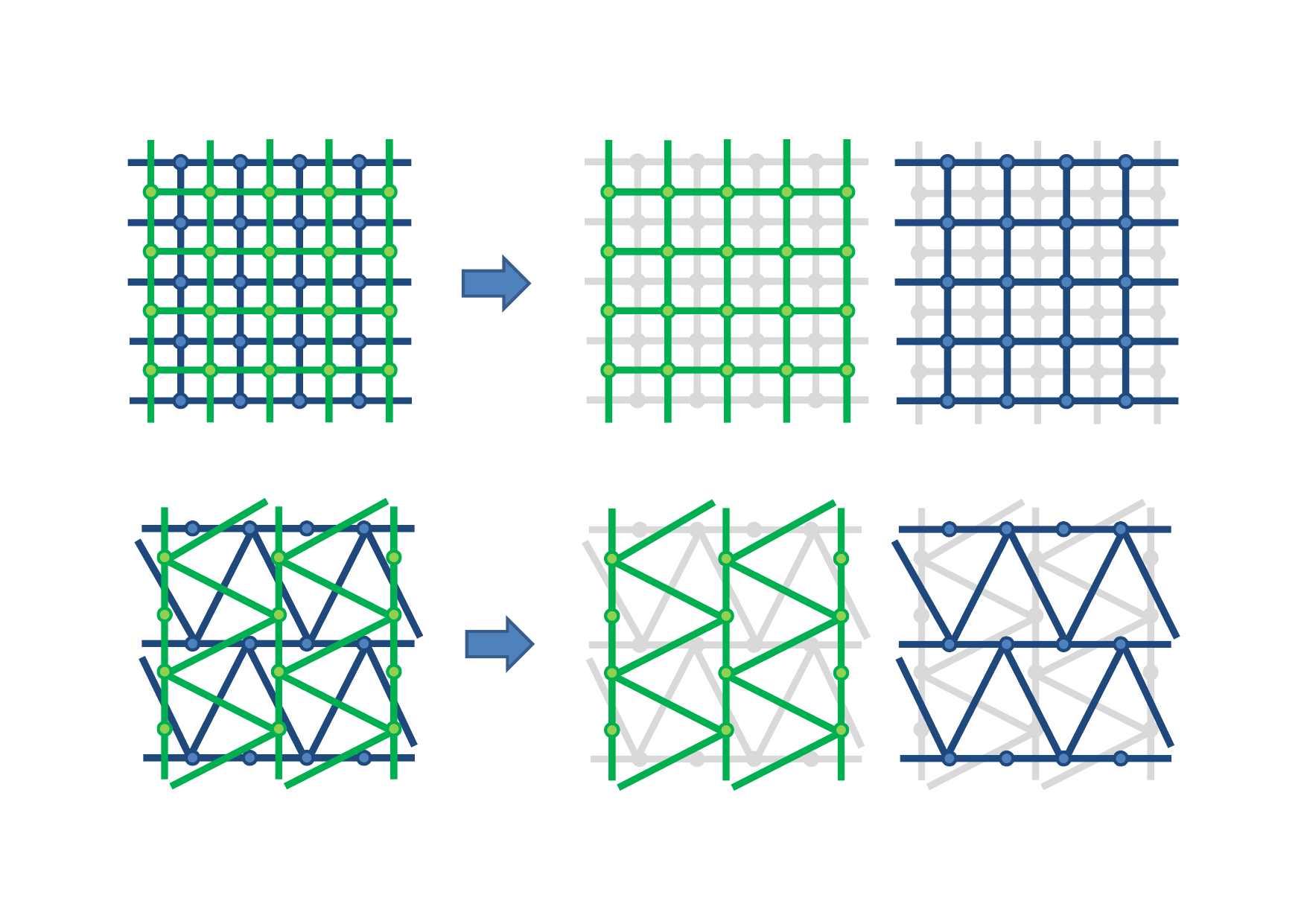}
	\caption{The above figure shows how to split and reallocate the syndrome vectors to the two input layer of the neural network. In the case of the [[$2d^2 -2d + 1, 1, d$]]-code, the lattice is split into a $(d-1) \times d$ array of syndrome, and $\pi/4$ rotated one. We input two $d \times (d-1)$ matrix as the first layer of the neural network. In the case of the [[$d^2,1,d$]]-code, we split the syndromes into two $(d-1) \times \frac{(d+1)}{2}$ arrays.  } 
	\label{fig:divide}
\end{figure*}

\begin{figure}[tp]
	\centering
	\includegraphics[clip, width=9.cm]{./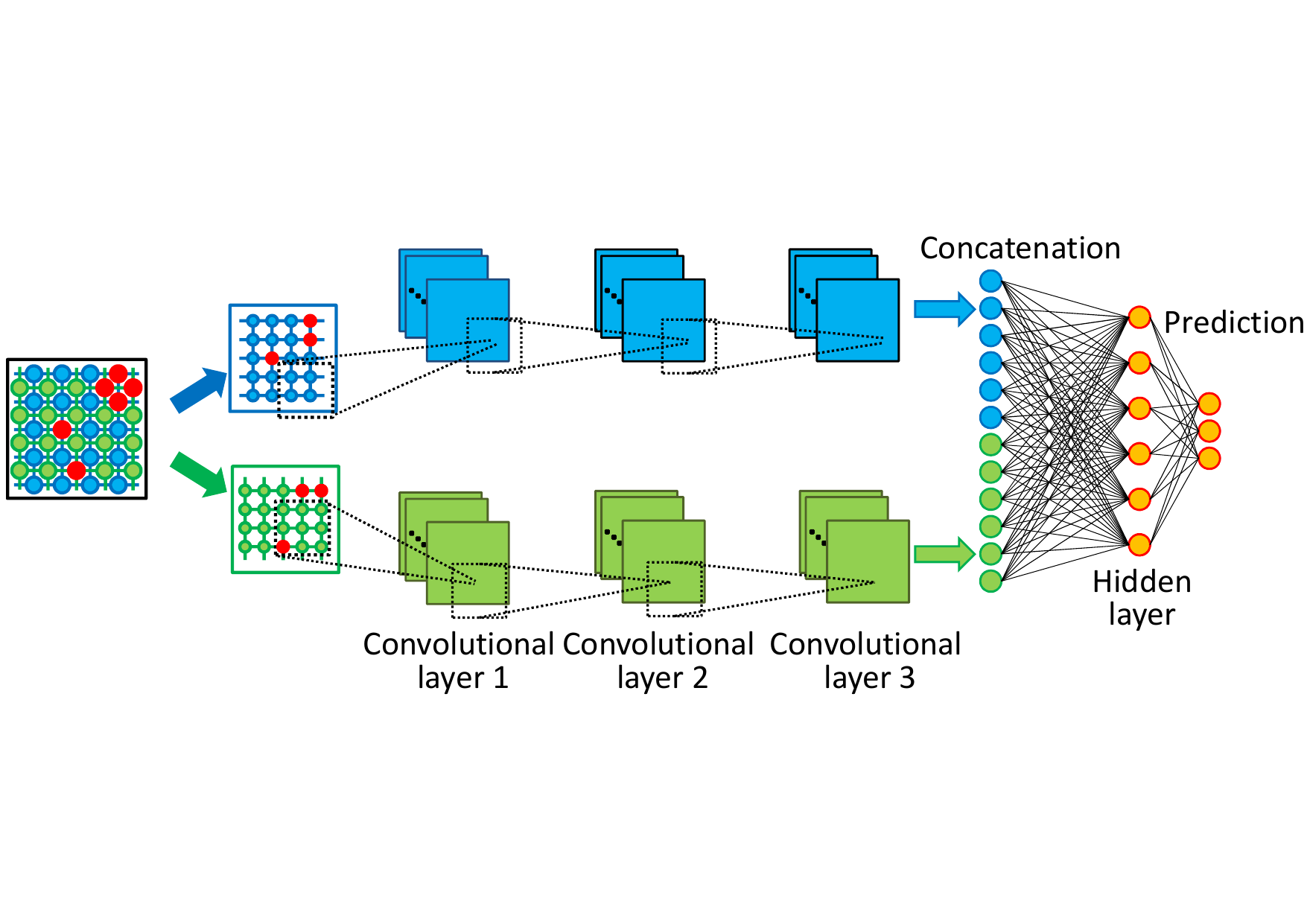}
	\caption{CNN decoder architecture used in our work. We separately pass X and Z syndrome values through the same convolutional layers, and concatenate them before feeding to the following fully-connected hidden layer.} \label{fig:CNNmodel}
\end{figure}

\subsection{Numerical result}
We call a neural decoder with the MLP model as a MLP decoder, and one with the CNN model as a CNN decoder.
We compare the performance of the CNN decoder with those of the MLP decoder, MD decoder, and MWPM decoder. 
Note that the training data set is generated with the uniform data construction. 

First, we compare the performance of the CNN decoder and that of the MLP decoder in the case of the surface codes. 
The numerical results are shown in Fig.\,\ref{fig:mlp_cnn}. 
\begin{figure*}[tp]
	\centering
	\begin{minipage}[]{0.49\hsize}
		\includegraphics[clip, width=8.0cm]{./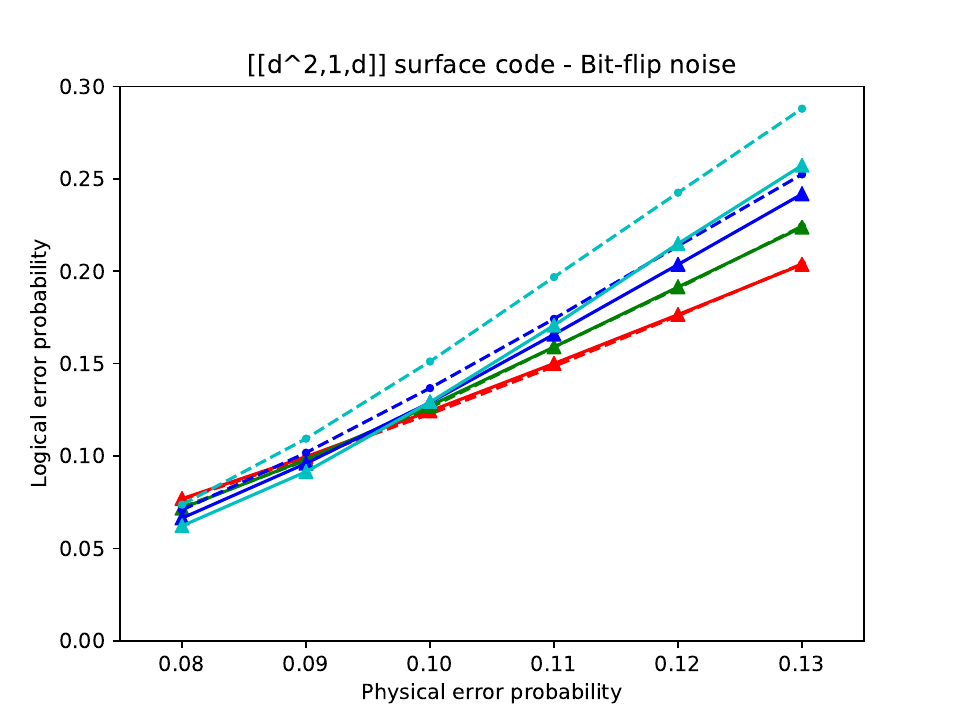}
   \subcaption{}
	\end{minipage}
	\begin{minipage}[]{0.49\hsize}
		\includegraphics[clip, width=8.0cm]{./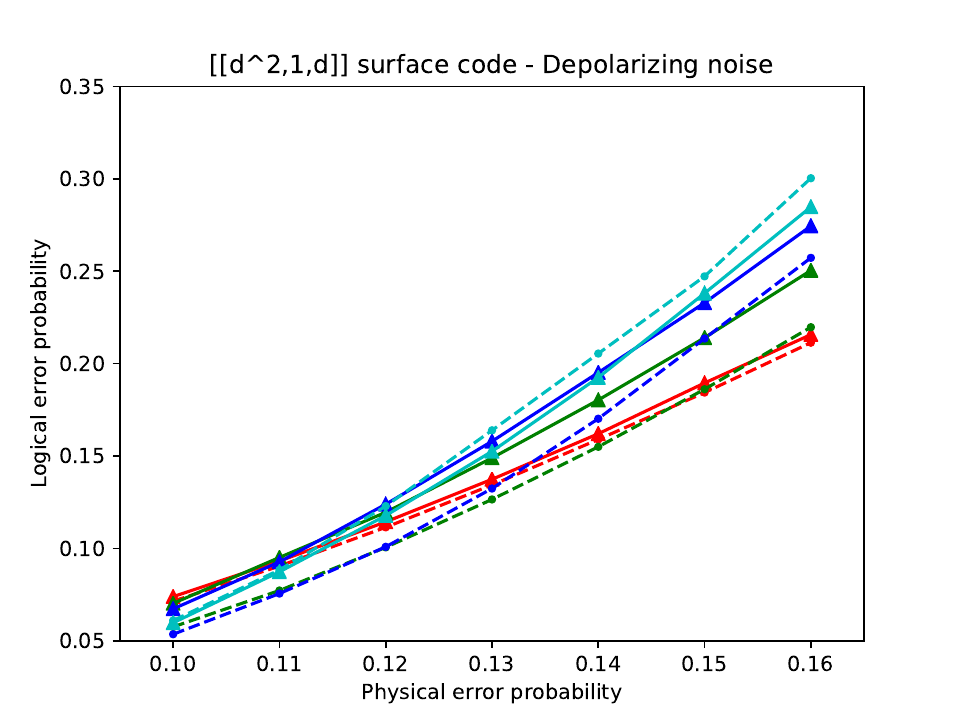}
   \subcaption{}
	\end{minipage}
	\vspace{1cm}
	\begin{minipage}[]{0.49\hsize}
		\includegraphics[clip, width=8.0cm]{./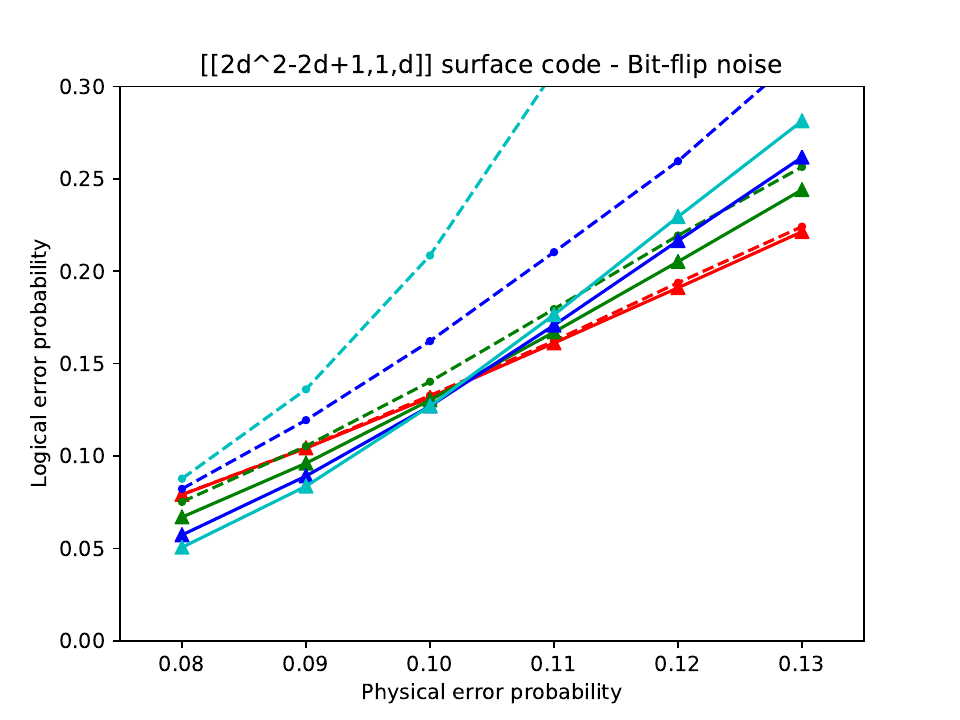}
   \subcaption{}
	\end{minipage}
	\begin{minipage}[]{0.49\hsize}
		\includegraphics[clip, width=8.0cm]{./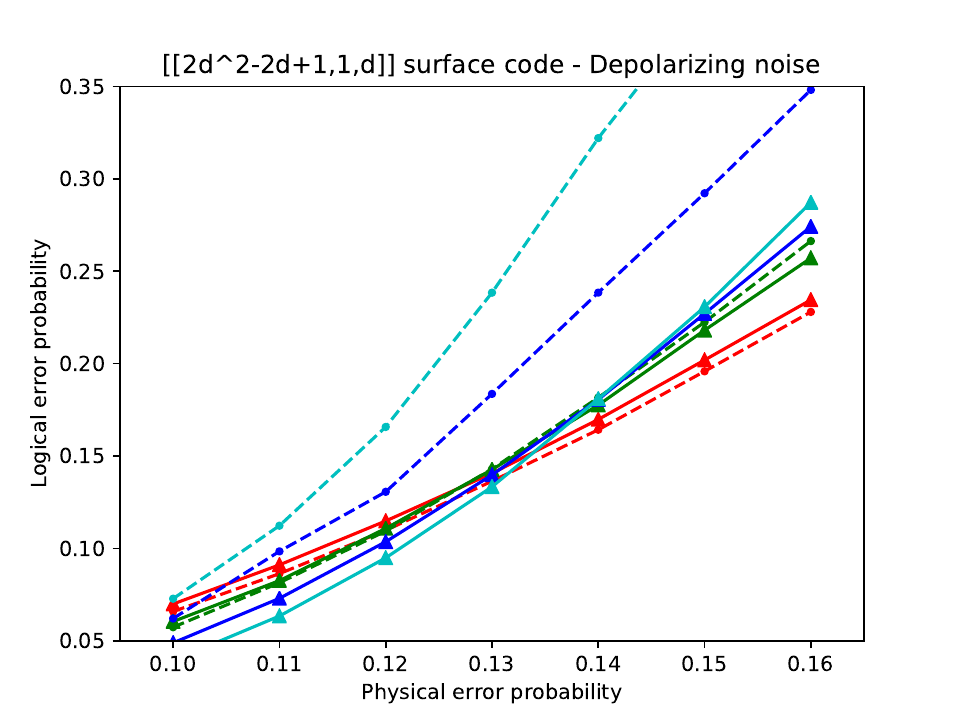}
   \subcaption{}
	\end{minipage}
	\caption{ The performance comparison between the CNN decoder (solid lines) and the MLP decoder (dashed lines) in the case of the surface code. We calculated the performance for distance $d=5$ (red), $7$ (green), $9$ (blue), and $11$ (cyan). (a) The bit-flip noise in the [[$d^2,1,d$]] code. (b) the depolarizing noise in [[$d^2,1,d$]] code. (c) the bit-flip noise in [[$2d^2-2d+1,1,d$] code. (d) the depolarizing noise in [[$2d^2-2d+1,1,d$]] code.  }
	\label{fig:mlp_cnn}
\end{figure*}
In this figure, the solid lines and dashed lines are the logical error probability for the CNN decoder and the MLP decoder, respectively. 
The colors red, green, blue, and cyan correspond to distances $d=5$, $7$, $9$ and $11$, respectively.
For both types of the surface codes, the CNN decoder shows superior performances to that of the MLP decoder at large distances. 
In particular, in the case of the [[$2d^2-2d+1,1,d$]] surface code, the CNN decoder shows significant improvement of a logical error probability. 
We see that the CNN model is effective for improving the performance of the neural decoder at large distances.
On the other hand, we see that the CNN decoder shows inferior performances to the MLP decoder at a small distance. 
We speculate the reason of this as follows.
The CNN model assumes that the local features can be extracted by using the same filter everywhere. 
Such an assumption is not necessarily true when the distance is small, since almost all the filtered local regions, of size $3\times 3$ for example, are on or near to the boundary of the two-dimensional codes.
Note that we tried to avoid this problem by padding the boundaries with various values, such as $0.5$ or $-1$, but the performance in the small distance did not improve. 

Next, we compared the performance of the CNN decoder with those of the MD decoder and the MWPM decoder. 
The results are shown in Fig.\,\ref{fig:res_cnn}.
The solid lines, the dashed lines, and the dotted lines are the logical error probability for the CNN decoder, the MD decoder, and the MWPM decoder, respectively. 
The colors red, green, blue, and cyan correspond to distances $d=5$, $7$, $9$ and $11$, respectively. 
In the case of the bit-flip noise, we see that the logical error probabilities of the CNN decoder is equal to or slightly better than that of the MD decoder.
In the case of the depolarizing noise, though there are gaps between the performances of the CNN decoder and the MD decoder, the performance of the CNN decoder is superior or comparable to that of the MWPM decoder even at the distance $d=11$. 

\begin{figure*}[tp]
	\centering
  \begin{minipage}[]{0.49\hsize}
	\includegraphics[clip, width=8.0cm]{./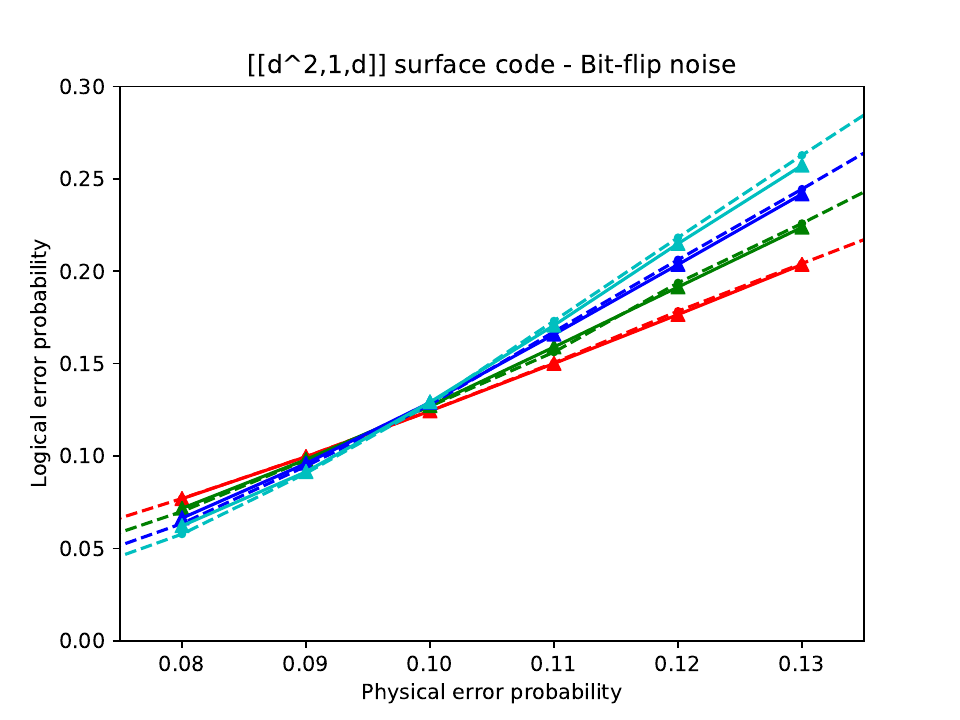}
   \subcaption{}
  \end{minipage}
  \begin{minipage}[]{0.49\hsize}
	\includegraphics[clip, width=8.0cm]{./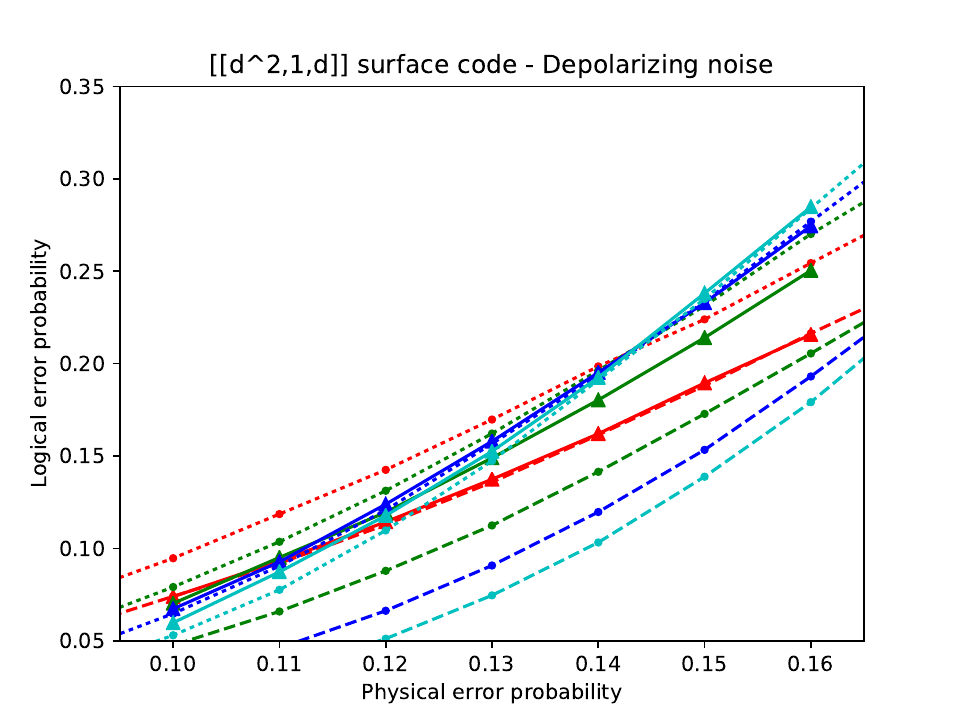}
   \subcaption{}
  \end{minipage}
	\vspace{1cm}
  \begin{minipage}[]{0.49\hsize}
	\includegraphics[clip, width=8.0cm]{./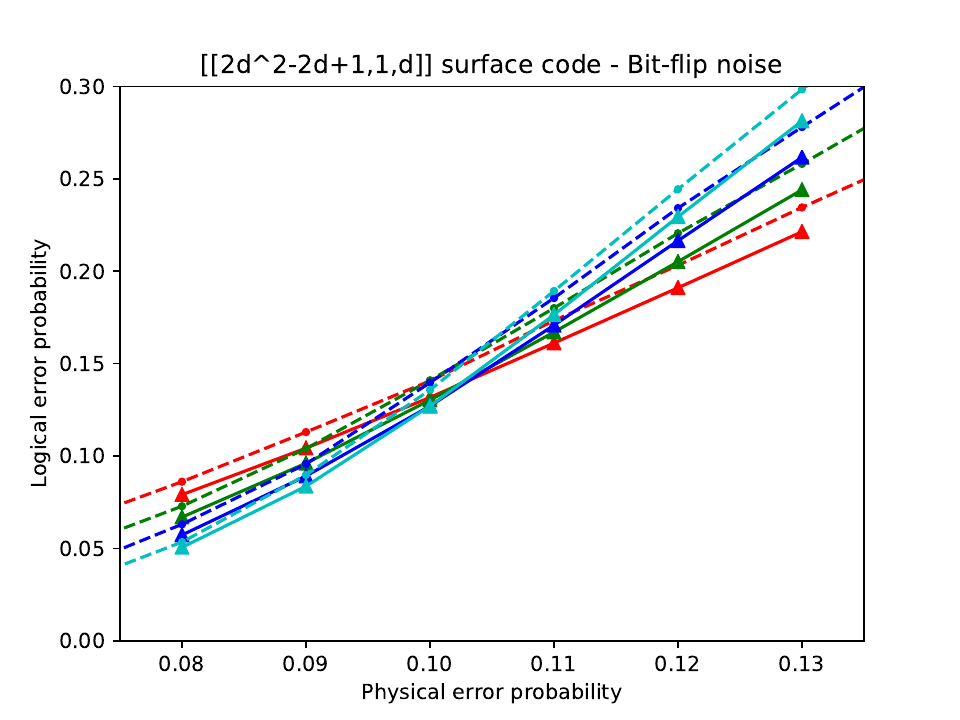}
   \subcaption{}
  \end{minipage}
  \begin{minipage}[]{0.49\hsize}
	\includegraphics[clip, width=8.0cm]{./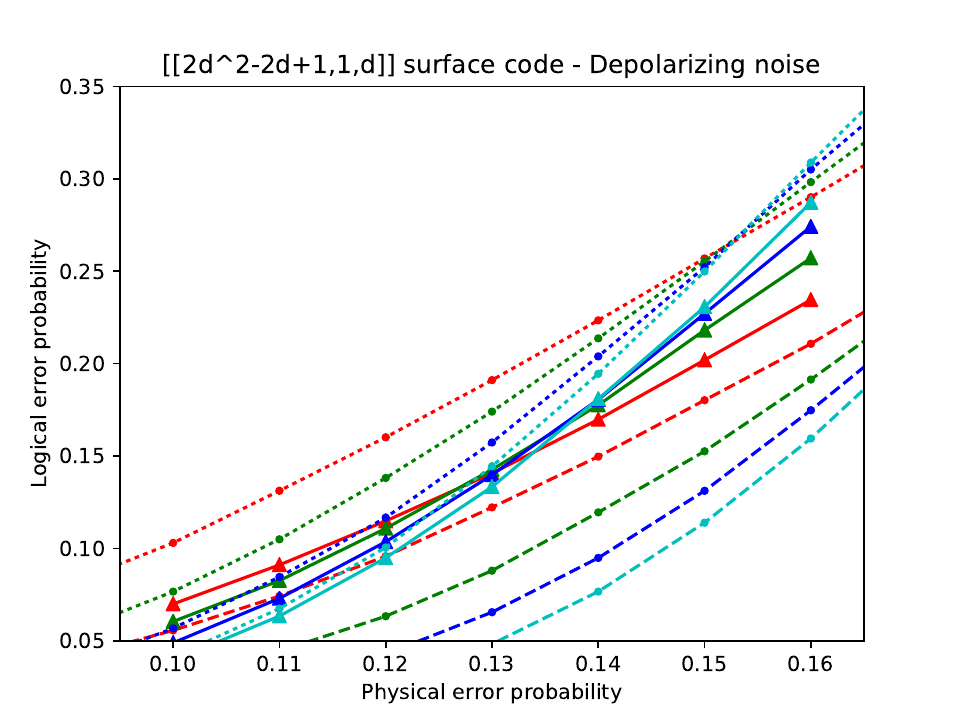}
   \subcaption{}
  \end{minipage}
	\caption{ The performance comparison between the CNN decoder (solid lines), the MD decoder (dashed lines), and the MWPM decoder (dotted lines) in the surface codes. We calculated the performance for distance $d=5$ (red), $7$ (green), $9$ (blue), and $11$ (cyan). (a) the bit-flip noise in the [[$d^2,1,d$]] code. (b) the depolarizing noise in [[$d^2,1,d$]] code. (c) the bit-flip noise in [[$2d^2-2d+1,1,d$]] code. (d) the depolarizing noise in [[$2d^2-2d+1,1,d$]] code.}
	\label{fig:res_cnn}
\end{figure*}

We also calculated the logical error probability of the CNN decoder at a small physical error probability $p$ in the case of the $[[2d^2-2d+1,1,d]]$ surface code.
We trained the CNN decoder at $p=0.08$ for the bit-flip noise model, and at $p=0.11$ for the depolarizing noise model. Then, the decoder is tested with the data set generated with small physical error probabilities.
The plots are shown in Fig.\,\ref{fig:cnn_lowp}. 
\begin{figure*}[tp]
	\centering
	\begin{minipage}[]{0.49\hsize}
		\includegraphics[clip, width=8.0cm]{./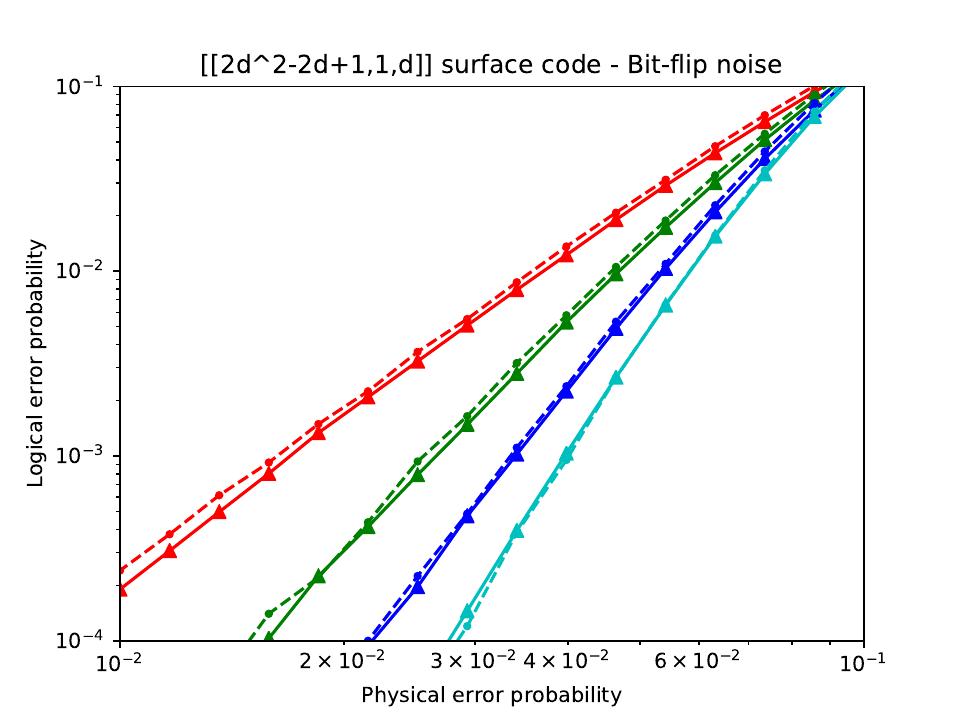}
   \subcaption{}
	\end{minipage}
	\begin{minipage}[]{0.49\hsize}
		\includegraphics[clip, width=8.0cm]{./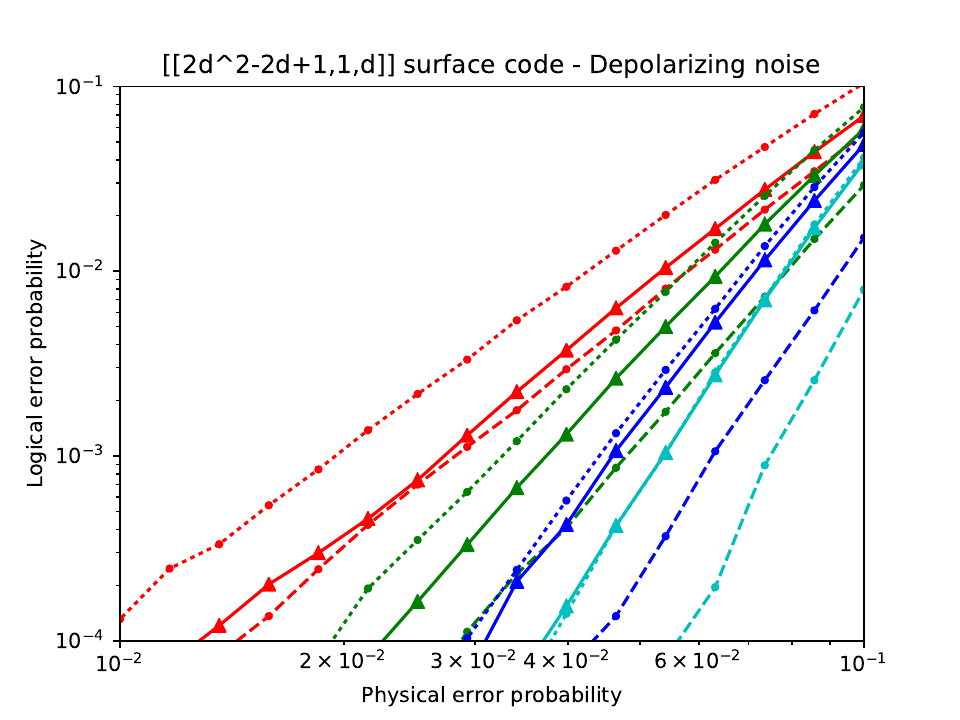}
   \subcaption{}
	\end{minipage}
	\caption{ The performance comparison between the CNN decoder (solid lines) and the MD decoder (dashed lines), and the MWPM decoder (dotted lines) in the case of the [[$2d^2-2d+1,1,d$]] surface code, where the decoders are trained with the training data set generated at the fixed error rate. We calculated the performance for distance $d=5$ (red), $7$ (green), $9$ (blue), and $11$ (cyan). (a) The case of the bit-flip noise. The training data set is generated at the physical error probability $p=0.08$. (b) The case of the depolarizing noise. The training data set is generated at the physical error probability $p=0.11$.}
	\label{fig:cnn_lowp}
\end{figure*}
In the case of the bit-flip noise, the CNN decoder achieves the performance close to the MD decoder also at small physical error probabilities.
In the case of the depolarizing noise, the performance of the neural decoder with CNN decoder is superior to that of the MWPM decoder at $d=9$, and comparable at $d=11$.
We can say that the CNN model is effective also for a use of neural decoders at small physical error probabilities.

%\textcolor{red}{With these results, we expect that the CNN decoder works as an efficient and generic decoder which shows performance higher than or comparable to the MWPM decoder. for near-term quantum devices at physical error probability at both lower than and around threshold values.}

\if0
Finally, we show filtered output images of the convolutional layers in the trained CNN model.
When we use MLP models, visualizing and interpreting what MLP models have learned from the training data set are difficult tasks. 
In the case of the CNN model, we can take output of the convolutional layers as filtered images, which may inform us how the model tries to extract useful information from matrix-shaped input feature.
Filtered images output from the last convolutional layer and corresponding input images are depicted in Fig.\,\ref{fig:filter}.
\begin{figure}[tp]
	\centering
	\includegraphics[clip, width=7.5cm]{./filter.pdf}
	\caption{The left five figures show the position of the detected syndromes as the black blocks in the matrix representation. When we input each left image to convolutional layers, the right 25 images are output. The white blacks represent zero elements, and the black blocks represent maximum value among the elements in each image. } 
	\label{fig:filter}
\end{figure}
In each figure, white pixels mean small values, and black ones mean larger values. 
These gradations are normalized for each image. 
Note that every element is larger than or equal to $0$, since we used ReLU function ($f(x)=\max(0,x)$) as an activation function.
We see that the most of the images are flat, and a few images have small number of significant values.
Unfortunately, it seems hard to interpret a behavior of the prediction model from these images, but we can expect that these black pixels represent extracted patterns of the input images.
Further inspections may provide us useful information how we should extract information from syndromes.
\fi

\section{Conclusion}
In this paper, we theoretically analyzed mechanism of machine-learning-based decoders for QEC, and proposed a general direction to construct the data set and the neural network. Then, we have numerically shown that our direction is effective compared with the existing works.

Since the formalism of the machine learning is flexible, there are many possible ways to reduce the decoding problem in QEC to the task of the machine learning. 
In order to clarify what is the best way of reduction, we introduced the linear prediction framework. 
This framework essentially includes the existing methods as specific cases, and enables us to discuss conditions for satisfying natural requirements for a good decoder for QEC. 
In particular, we have derived the condition to perform the optimal decoding in the limit of a large training data size. 
We also introduced a measure, normalized sensitivity, which represents a properly-scaled bound on the deviation in the prediction target resulting from a small change in the physical error pattern. 
We proposed to use this measure as a criterion for constructing a better decoder.
We then proposed a general direction for constructing the data set, uniform data construction, which can be applicable to general topological codes.
We numerically confirmed that the performance of the neural decoder is improved with the uniform data construction. 
Our decoder was found to be superior to known efficient decoders, such as neural decoders proposed in the existing methods and the decoder based on the reduction to minimum-weight perfect matching. 
We also confirmed that the performance of our neural decoder is near-optimal in various situations by comparing it with the minimum-distance decoder, which is known to be near-optimal but not efficient in general. 
We also confirmed that the neural decoder can achieve near-optimal performance not only for surface codes but also for color codes.
%We also validated that the proposed neural decoder is also superior to the known decoder in the various situations.  We also confirmed that the near-optimal performance is achieve when the physical error probability is much smaller than the threshold value. 

Another important factor of the neural decoder is construction of the neural network. 
We discussed the importance of the spatial information of the syndrome measurement in order to let the prediction model recognize useful samples from a given training data set. 
To utilize the spatial information, we proposed a neural decoder with the convolutional neural network. We numerically observed that the performance of the neural decoder is further improved with this network construction in the surface code.
In particular, we showed that the proposed neural decoder achieves a smaller logical error probability than that of the decoder based on minimum-weight perfect matching even at distance $d=11$ with a training data set size of $10^6$. 

Since using machine learning for QEC is an emergent field, there are still many possible extensions and directions of the neural decoders. 
As we detailed in Appendix B, the prediction time of the neural decoders is smaller than that of the MD decoder, but larger than that of the MWPM decoder in our desktop PC. Since the prediction of the neural decoders can be done with simple matrix multiplications, the time for prediction can be further made short by using an optimized hardware such as field-programmable gate array (FPGA), which is popularly used in experiments.
While we have discussed only a label linearly generated in GF(2), the performance may be more improved by allowing labels nonlinearly generated from the physical error. 
For example, the relation between the syndrome values and the weight of the physical error, which cannot be generated linearly in GF(2), can be trained and predicted independently with a neural network. 
Then, the recovery map can be predicted with the syndrome values and the predicted weight with another neural network.
The linear prediction framework also limits the sample in the training data set to that is sampled from the assumed physical error distribution. 
However, the distribution which is the best for the training is not necessarily the same as the actual distribution. For example, we saw that the prediction model trained at the physical error probability around the threshold value shows high-performance also at low physical error probabilities.
There can be a more artificial way to construct the training data set to achieve the performance with a smaller size of the training data set. 
In the numerical investigation, we observed that the required amount of the data set becomes exponentially large in terms of the distance. 
This may be suppressed by renormalizing the matrix representation of the syndrome with trained filters as done in the renormalization group decoder \cite{duclos2010fast}. 
We expect that CNN is also applicable to the color codes by using non-rectangle filters.
When the stabilizer measurements themselves suffer from noise, stabilizer measurements are often repetitively performed during QEC. In such a case, the length of the syndrome data is not fixed. In our construction, we need to train the neural network again whenever the length of the syndrome data changes. 
The studies of Refs.\,\cite{baireuther2018machine,breuckmann2018scalable} focused on removing this drawback by utilizing recurrent neural network and convolutional neural network. 
Using the technique proposed in Refs.\,\cite{baireuther2018machine,breuckmann2018scalable}, our neural decoder may be also applicable to the cases when we perform repetitive stabilizer measurements.

\section*{Acknowledgements}
This work is supported by KAKENHI Grant No. 16H02211; PRESTO, JST, No. JPMJPR1668; CREST, JST, Grants No. JPMJCR1671, No. JPMJCR1673; ERATO, JST, Grant No. JPMJER1601; and Photon Frontier Network Program, MEXT.
Y.S. is supported by Advanced Leading Graduate Course for Photon Science.
D.A. and Y.S. contributed equally to this work. 
Y.S. contributes to the construction of the data set in Sec.\,III, and A.D. to the construction of the network in Sec.\,IV. K.F. and M.K. motivate and supervise the idea and discussion of this paper.

\section*{Appendix A: Proof of the lemmas}

\subsection*{Proof of the converse part in Lemma \ref{lemma_logical_decoder}}
Here we prove the last statement of Lemma \ref{lemma_logical_decoder}. 
When Eq.\,(\ref{eq:faithfuldef}) does not hold, either (i) there exists $\bm{e}_1$ such that 
\begin{eqnarray}
\bm{e}_1 &\notin& \mathcal{L}_0, \\
H_{cg} \Lambda \bm{e}_1^{\rm T} &=& 0,
\end{eqnarray}
or (ii) there exists $\bm{e}_1$ such that 
\begin{eqnarray}
\bm{e}_1 &\in& \mathcal{L}_0, \\
H_{cg} \Lambda \bm{e}_1^{\rm T} &\neq& 0.
\end{eqnarray}

For (i), consider two probability distributions $\{p_{\bm{e}}\}$ and $\{p'_{\bm{e}}\}$ such that 
\begin{eqnarray}
{\rm Pr}_{\bm{e} \sim \{p_{\bm{e}}\}  }  \left[ \bm{e} = 0 | \bm{s}(\bm{e}) = 0 \right] &=& 0.75 ,\\
{\rm Pr}_{\bm{e} \sim \{p_{\bm{e}}\}  }  \left[ \bm{e} = \bm{e}_1 | \bm{s}(\bm{e}) = 0 \right] &=& 0.25 ,
\end{eqnarray}
and
\begin{eqnarray}
{\rm Pr}_{\bm{e} \sim \{p'_{\bm{e}}\}  }  \left[ \bm{e} = 0 | \bm{s}(\bm{e}) = 0 \right] &=& 0.25 ,\\
{\rm Pr}_{\bm{e} \sim \{p'_{\bm{e}}\}  }  \left[ \bm{e} = \bm{e}_1 | \bm{s}(\bm{e}) = 0 \right] &=& 0.75.
\end{eqnarray}
An optimal decoder for each case succeeds with probability $0.75$ given $\bm{s}=0$.
On the other hand, since $\bm{g}^{(\delta)}(0) =0$ in both cases, only the value of $\bm{r}^*(0,0)$ is relevant.
Since $\bm{w}(0) \neq \bm{w}(\bm{e}_1)$, any choice of $\bm{r}^*(0,0)$ leads to a success probability no greater than 0.25 for at least one of the cases.

For (ii), choose $\bm{w} \neq 0$, and if $H_g \Lambda (\bm{w}G)^{\rm T} \neq 0$, define 
\begin{eqnarray}
\bm{e}_2:= \bm{w}G.
\end{eqnarray}
Otherwise, define 
\begin{eqnarray}
\bm{e}_2:= \bm{e}_1 \oplus \bm{w}G.
\end{eqnarray}
It ensures that $\bm{s}(\bm{e}_2)=0$ and $\bm{g}_2:=H_g \Lambda \bm{e}_2 \neq 0$.
Consider two probability distributions $\{p_{\bm{e}}\}$ and $\{p'_{\bm{e}}\}$ such that 
\begin{eqnarray}
{\rm Pr}_{\bm{e} \sim \{p_{\bm{e}}\}  }  \left[ \bm{e} = 0 | \bm{s}(\bm{e}) = 0 \right] &=& 0.4 ,\\
{\rm Pr}_{\bm{e} \sim \{p_{\bm{e}}\}  }  \left[ \bm{e} = \bm{e}_1 | \bm{s}(\bm{e}) = 0 \right] &=& 0.0 , \\
{\rm Pr}_{\bm{e} \sim \{p_{\bm{e}}\}  }  \left[ \bm{e} = \bm{e}_2 | \bm{s}(\bm{e}) = 0 \right] &=& 0.6 ,
\end{eqnarray}
and
\begin{eqnarray}
{\rm Pr}_{\bm{e} \sim \{p'_{\bm{e}}\}  }  \left[ \bm{e} = 0 | \bm{s}(\bm{e}) = 0 \right] &=& 0.3 ,\\
{\rm Pr}_{\bm{e} \sim \{p'_{\bm{e}}\}  }  \left[ \bm{e} = \bm{e}_1 | \bm{s}(\bm{e}) = 0 \right] &=& 0.3 , \\
{\rm Pr}_{\bm{e} \sim \{p'_{\bm{e}}\}  }  \left[ \bm{e} = \bm{e}_2 | \bm{s}(\bm{e}) = 0 \right] &=& 0.4.
\end{eqnarray}
An optimal decoder for each case succeeds with probability $0.6$ given $\bm{s}=0$. On the other hand, since $\bm{g}^{(\delta)}(0)=\bm{g}_2$ in both cases, only the value of $\bm{r}^*(\bm{g}_2,0)$ is relevant. Since $\bm{w}(0) \neq \bm{w}(\bm{e}_2)$, any choice of $\bm{r}^*(\bm{g}_2,0)$ leads to a success probability no greater than $0.4$ for at least one of the cases.

\subsection*{Proof of the converse part in Lemma\,\ref{lemma:decomposable}}
When the diagnosis matrix is not decomposable, 
%a set of the faithful diagnosis vectors with additional one-element $\left\{ \left( \begin{matrix} \bm{g}_{\bm{s}}(\bm{w}) \\ 1 \end{matrix} \right) \middle|_{\bm{w}\in\{0,1\}^{2k}}\right\}$ is dependent in the real-valued space for an arbitrary $\bm{s}$.
%In such a case, 
there exists a non-empty subset $\mathcal{W} \subset \{0,1\}^{2k}$ such that 
\begin{eqnarray}
\label{eq:split}
	\sum_{\bm{w} \in \mathcal{W}} \alpha_{\bm{w}}\bm{g}_{0}(\bm{w}) = \sum_{\bm{w} \in  \{0,1\}^{2k} \setminus \mathcal{W} } \beta_{\bm{w}} \bm{g}_{0}(\bm{w}),
\end{eqnarray}
where $\alpha_{\bm{w}}, \beta_{\bm{w}} \geq 0$ and 
\begin{eqnarray}
\label{eq:partition}
\Gamma:= \sum_{\bm{w} \in \mathcal{W}} \alpha_{\bm{w}} = \sum_{\bm{w} \in \{0,1\}^{2k} \setminus \mathcal{W}} \beta_{\bm{w}} > 0.
\end{eqnarray}
Consider two probability distributions $\{p_{\bm{e}}\}$ and $\{p'_{\bm{e}}\}$ such that 
\begin{widetext}	
\begin{eqnarray}
	{\rm Pr}_{\bm{e} \sim \{p_{\bm{e}}\}_A} \left[ \bm{w}(\bm{e}) = \bm{w}, \bm{l}(\bm{e}) = \bm{l} | \bm{s}(\bm{e}) = 0 \right] &=& \begin{cases} \alpha_{\bm{w}}/\Gamma & \bm{w} \in \mathcal{W}, \bm{l}=0 \\ 0  & \text{otherwise} \end{cases}, \\
	{\rm Pr}_{\bm{e} \sim \{p'_{\bm{e}}\}_B} \left[ \bm{w}(\bm{e}) = \bm{w}, \bm{l}(\bm{e}) = \bm{l} | \bm{s}(\bm{e}) = 0 \right] &=& \begin{cases} \beta_{\bm{w}}/\Gamma  & \bm{w} \notin \mathcal{W}, \bm{l}=0 \\ 0 & \text{otherwise} \end{cases}.
\end{eqnarray}
\end{widetext}

From Eq.\,(\ref{eq:split}), the L2 diagnosis vector $\bm{g}^{(\rm L2)}(0)$ is identical for the two distributions. On the other hand, 
%\begin{eqnarray}
%\hat{\bm{g}}^{({\rm L2})}(0) &:=& \sum_{\bm{w} \in \mathcal{W}} \alpha_{\bm{w}} \bm{g}_{0}(\bm{w}) \\
%&=& \sum_{\bm{w} \in \{0,1\}^{2k} \setminus \mathcal{W}} \beta_{\bm{w}} \bm{g}_{0}(\bm{w}).
%\end{eqnarray}
the most probable class $\bm{w}$ is different for the two probability distributions.
This means that a single decoder cannot perform the optimal decoding for both of the two distributions.

\subsection*{Proof of the existence of faithful and decomposable diagnosis matrices}
In the main text, we showed that a diagnosis matrix should be faithful and decomposable for performing optimal decoding in the ideal limit of the training process, and showed an example for the case $k=1$.
On the other hand, it is not trivial that there exists a faithful and decomposable construction of a diagnosis matrix for an arbitrary stabilizer code. 
We show that diagnosis matrix $H_g = WG$, where $W$ is a $2^{2k} \times 2k$ binary matrix of which the $i$-th row is a $2k$-bit binary representation of an integer $i$, is always faithful and decomposable for an arbitrary stabilizer code and for an arbitrary number of logical qubits $k$.
Since row vectors of $H_g$ contains all the logical operators, it is trivial that ${\rm span}(\{(H_{cg})_i\})$ is equivalent to the logical space $\mathcal{L}$, and $H_g$ is faithful.
The condition for decomposability is equivalent to the condition that $\{\bm{g}(\bm{w}) | \bm{w} \in \{0,1\}^{2k}\}$, where $\bm{g}(\bm{w}) := H_g \Lambda (\bm{w}G)^{\rm T}$, is affinely independent in real vector space. 
To show the latter, we first prove that for any pair of binary vectors $\bm{w}, \bm{w'} \in \{0,1\}^{2k}$ such that $\bm{w} \neq \bm{w}'$, the weight of $\bm{g}(\bm{w}) \oplus \bm{g}(\bm{w}')$ is $2^{2k-1}$.
A $2^k$-bit sequence $\bm{g}(\bm{w}) \oplus \bm{g}(\bm{w}')$ is given by
\begin{eqnarray}
\bm{g}(\bm{w}) \oplus \bm{g}(\bm{w}') = W G\Lambda G^{\rm T} (\bm{w}\oplus \bm{w}')^{\rm T}.
\end{eqnarray}
Since matrix $G \Lambda G^{\rm T}$ is invertible and since $\bm{w}\oplus \bm{w}' \neq 0$, we have $G\Lambda G^{\rm T} (\bm{w}\oplus \bm{w}')^{\rm T} \neq 0$. 
Since matrix $W$ contains all the possible $2k$-bit sequence, the half elements in the sequence $\bm{g}(\bm{w}) \oplus \bm{g}(\bm{w}')$ are $1$, and the others are $0$.
Thus, the weight of $\bm{g}(\bm{w}) \oplus \bm{g}(\bm{w}')$ is $2^{2k-1}$.

Let $\bm{v} := (1,\ldots, 1)^{\rm T}$ be a real vector of order $2^{2k}$. We define a set of vectors $\bm{h}(\bm{w}) := 2 \bm{g}(\bm{w}) - \bm{v}$ for $\bm{w} \in \{0,1\}^{2k}$, where this calculation is done in real vector space.
Note that this map is equivalent to replace $0$ and $1$ to $1$ and $-1$, respectively.
Since this map from $\bm{g}(\bm{w})$ to $\bm{h}(\bm{w})$ is affine, $\{\bm{g}(\bm{w})\}$ is affinely independent if $\{\bm{h}(\bm{w})\}$ is linearly independent.
The inner product $\bm{h}(\bm{w}) \bm{h}(\bm{w}')^{\rm T}$ for $\bm{w} \neq \bm{w}'$ can be calculated as 
\begin{eqnarray}
\label{eqbinary}
\bm{h}(\bm{w}) \bm{h}(\bm{w}')^{\rm T} &=& \sum_{i} h(\bm{w})_i h(\bm{w}')_i \nonumber \\
&=& 2^{2k} - 2w(\bm{g}(\bm{w}) \oplus \bm{g}(\bm{w}')) \nonumber \\
&=& 0.
\end{eqnarray}
We used the fact that $h(\bm{w})_i h(\bm{w}')_i$ is $1$ if $g(\bm{w})_i = g(\bm{w}')_i$, and $-1$ otherwise.
Thus, the set of vector $\{\bm{h}(\bm{w})\}$ is linearly independent in the real vector space, and the set of vectors $\{\bm{g}(\bm{w})\}$ is affinely independent.
This means that $H_g = W G$ is faithful and decomposable for an arbitrary stabilizer code.

\section*{Appendix B : Additional information for the implementation of the decoders}

%\subsection*{Techniques of the network construction and training}

We describe the detail of the implementation of our model, training process, and decoders for the reference. 
We chose rectified linear units (ReLU(x)=max(0,x)) and a sigmoid function (S(x)=$1/(1+e^{-x})$) as the activation function for the hidden layer and that for the final output layer, respectively. 
Batch normalization was deployed in all of our models and was found to be effective. 
We also used L2 regularization to avoid over-fitting of the model. 
In the training phase, the Adam optimization method \cite{kingma2014adam} was used. The learning rate was exponentially decreased, and its schedule was optimized by hand.
The network was built with the tensorflow v1.2 platform.

\subsection*{Details about the multilayer perceptron}

We optimized the following parameters of multilayer perceptron using grid-search: number of neurons per layer (\#unit), number of hidden layers (\#layer), size of the batch (\#batch), and coefficients of the L2 regularization ($\beta$). 
The parameters were searched in the range ${\rm \#unit} \in \{d^2,d^3,d^4\}$, $\beta \in \{0,0.01,0.1\} $, $\#batch \in \{100,500\}$, and ${\rm \#layer} \in \{2,3,4\}$. 
Note that in the case of $d=11$, we tuned ${\rm \#unit}$ by hand since we cannot choose ${\rm \#unit}=d^4$ due to the memory limit of GPU. 
We started the training with learning rate $10^{-3}$, and it was decreased to $10^{-5}$ according to a schedule which was optimized by hand.
We optimized these parameters for each construction of the diagnosis matrix, distance, physical error probability, error model, and size of the training data set. 
We chose the configuration which achieves the smallest logical error probability for an independently generated a validation data set of size $10^5$.
Then, the logical error probability is calculated using another $10^6$ test data set.

\subsection*{Details about the Convolutional Neural Network}

Our CNN model consists of three convolutional layers on top of a single fully-connected hidden layer. For each convolutional layer, the channel number was chosen to be $10d$ for the first two layers and $5d$ for the last layer. 
%Note that in some cases, the initial layer filter stride was set to two in the horizontal direction. 
We chose batch size as 100 in the training of the CNN model. 

The network architecture was the same for both bit-flip and depolarizing noise models in the [[$2d^2-2d+1,1,d$]] surface code, and is described in TABLE\,\ref{table1}. 
%The filter stride was set to 1 in all directions. 
\begin{table}
	\centering
	\begin{tabular}{ |p{1.7cm}|p{2.5cm}|p{1.8cm}|p{1.7cm}|  }
		\hline
		Distance & Filter size &Channel number& Neuron number \\
		\hline
		5   &[2x2],[3x3],[3x3]&50, 50, 25& 1000\\
		7 &[2x2],[3x3],[4x4]& 70, 70, 35  &3000\\
		9 &[3x3],[4x4],[5x5] & 90, 90, 45&  5000\\
		11 &[4x4],[5x5],[6x6] & 110, 110, 55&  7000\\
		\hline
	\end{tabular}
	\caption{Network architecture of the [[$2d^2-2d+1,1,d$]] surface code.}
	\label{table1}
\end{table}
As for the [[$d^2,1,d$]] code with the bit-flip and depolarizing noise models, we used the network architecture described in TABLE\,\ref{table2}. 
%Here, for the code distances 7, 9, and 11, the initial layer filter stride was set to 1 in the vertical direction and 2 in the horizontal direction. For the code distance 5, stride was set to 1 in all directions. 
The filter stride was set to 1 in all directions.
\begin{table}
	\centering
	\begin{tabular}{ |p{1.7cm}|p{2.5cm}|p{1.8cm}|p{1.7cm}|  }
		\hline
		Distance & Filter size &Channel number& Neuron number \\
		\hline
		5   &[2x2],[3x3],[3x3]&50, 50, 25& 1000\\
		7 &[2x2],[3x3],[3x4]& 70, 70, 35  &3000\\
		9 &[2x3],[3x4],[4x5]& 90, 90, 45&  5000\\
		11 &[2x4],[3x5],[4x6]& 110, 110, 55&  7000\\
		\hline
	\end{tabular}
	\caption{Network architecture of the [[$d^2,1,d$]] surface code.}
	\label{table2}
\end{table}

\subsection*{Implementation of the minimum-distance decoder}
The minimum-distance decoder of the surface code under the bit-flip noise can be implemented by reducing the problem into the minimum-weight perfect matching. The minimum-weight perfect matching can be efficiently solved with Blossom algorithm \cite{edmonds1965paths}. We used Kolmogorov's implementation of Blossom algorithm \cite{kolmogorov2009blossom}. 
In the other cases, we reduced the problem into the following instance of integer programming. 
\begin{eqnarray}
{\rm Minimize} \: w(\bm{e}) \: {\rm s.t.} \: H_c \Lambda \bm{e}^{\rm T} = s
\end{eqnarray}
This problem was solved with IBM ILOG CPLEX.
We obtained at least $10^5$ samples for each plot. 
In all the cases, the solver reached the optimal solution. 

\subsection*{Time for single prediction, implementation and environment}
We measured the time for single decoding on the [[$2d^2-2d+1,1,d$]] surface code with $d=11$ and $p=0.15$ under the depolarizing noise for the MD decoder, MWPM decoder, and the proposed neural decoders with the MLP and CNN models. Note that the times of the MD decoder and the MWPM decoder depend on the physical error probability. 

We used IBM ILOG CPLEX via python-wrapper for constructing the MD decoder. The program was executed on Intel Xeon E5-2687W v4 with default settings. The MD decoder takes about 330 milliseconds per decoding. Note that the time may be improved by optimizing the settings of CPLEX.

The Kolmogorov's implementation of Blossom algorithm \cite{edmonds1965paths,kolmogorov2009blossom} was used for the MWPM decoder. We compiled the codes with Microsoft Visual C++ 2015 and with O2 option. The program was executed on Intel Core i7-6700 without parallelization. The MWPM decoder took about 56 microseconds per decoding. 

The proposed neural decoders were implemented with python and tensorflow. We measured the time for single prediction when we set batch size as 1, the number of layer as 2, the number of units per layer as 7000 for the MLP model. The configuration of the CNN model is shown in TABLE\,\ref{table1}. The computation was performed using Intel Core i7-6700 and  GeForce GTX 1060 6GB. The proposed neural decoders with the MLP and CNN models took 2.2 milliseconds and 7 milliseconds, respectively, for feed-forwarding the input data and finding the most probable class $\bm{w}$. 
Since the prediction of the neural decoders can be done with simple matrix multiplications, we expect that the time for single prediction of the neural decoder can be made shortened by using an optimized hardware, such as FPGA, for example.

\section*{Appendix C : The specific choices of the uniform data construction}

We have introduced the uniform data construction in Sec.\,III. 
In this appendix, we show specific uniform data construction for the surface and color codes. 

%In the case of the surface codes, red faces monitor the parity of $Z$-Pauli errors on physical qubits at the vertices, and blue faces monitor that of $X$-Pauli errors.
%
We choose $3d$ logical operators for the [[$2d^2-2d+1,1,d$]] surface code by using two patterns as shown in Fig.\,\ref{fig:sc_assign}.
\begin{figure*}[tp]
	\centering
  \begin{minipage}[]{0.49\hsize}
		\includegraphics[clip,width=7.5cm]{./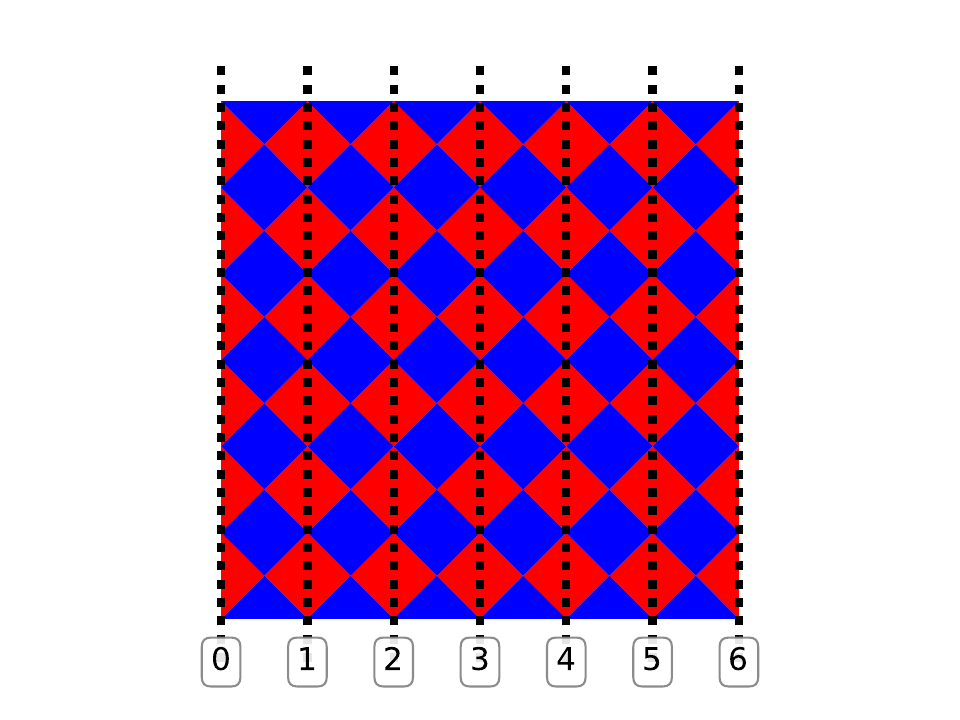}
		\subcaption{Pattern 1}
  \end{minipage}
  \begin{minipage}[]{0.49\hsize}
		\includegraphics[clip,width=7.5cm]{./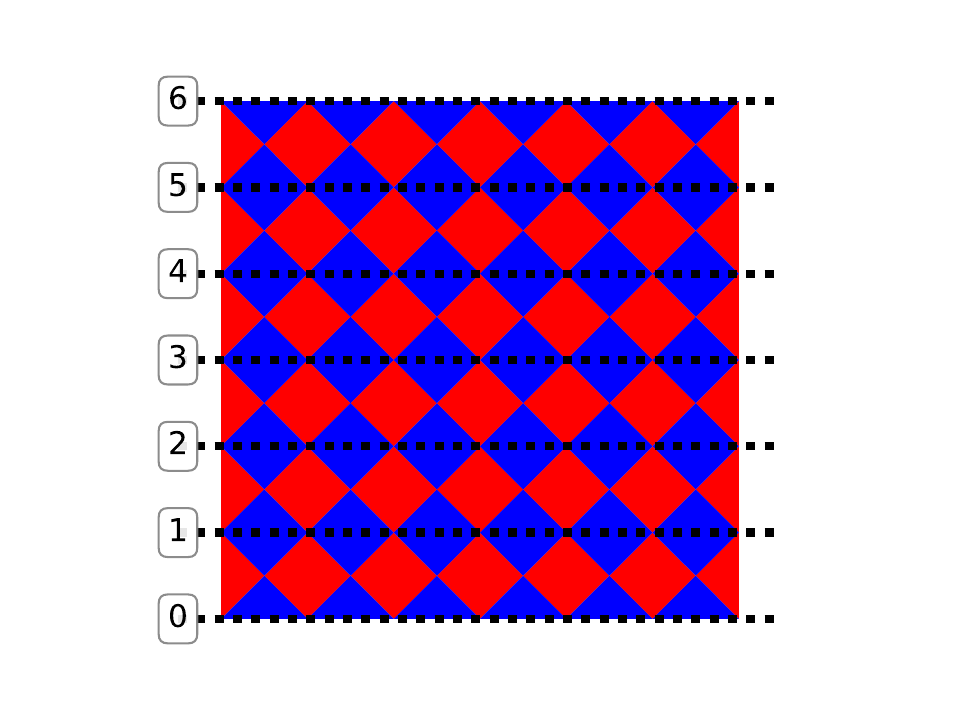}
		\subcaption{Pattern 2}
  \end{minipage}
	\caption{Logical operators used for the construction of a diagnosis matrix for the surface codes [[$2d^2-2d+1,1,d$]]. Each dotted black line corresponds to a chosen logical operator.}
	\label{fig:sc_assign}
\end{figure*} 
For pattern 1, each dotted line corresponds to a logical $X$ operator, which is the product of the Pauli $Z$ operators on the vertices on the line. 
For pattern 2, each dotted line corresponds to a logical $Z$ operator, which is the product of the Pauli $X$ operators on the vertices on the line.
We choose $d$ logical $Y$ operators written as the product of the $i$-th logical $X$ operator and the $i$-th logical $Z$ operator for $i=0,\ldots, d-1$.
We choose $3d$ logical operators for the [[$d^2,1,d$]] surface code with two patterns as shown in Fig.\,\ref{fig:scr_assign}. 
\begin{figure*}[tp]
	\centering
  \begin{minipage}[]{0.49\hsize}
		\includegraphics[clip,width=7.5cm]{./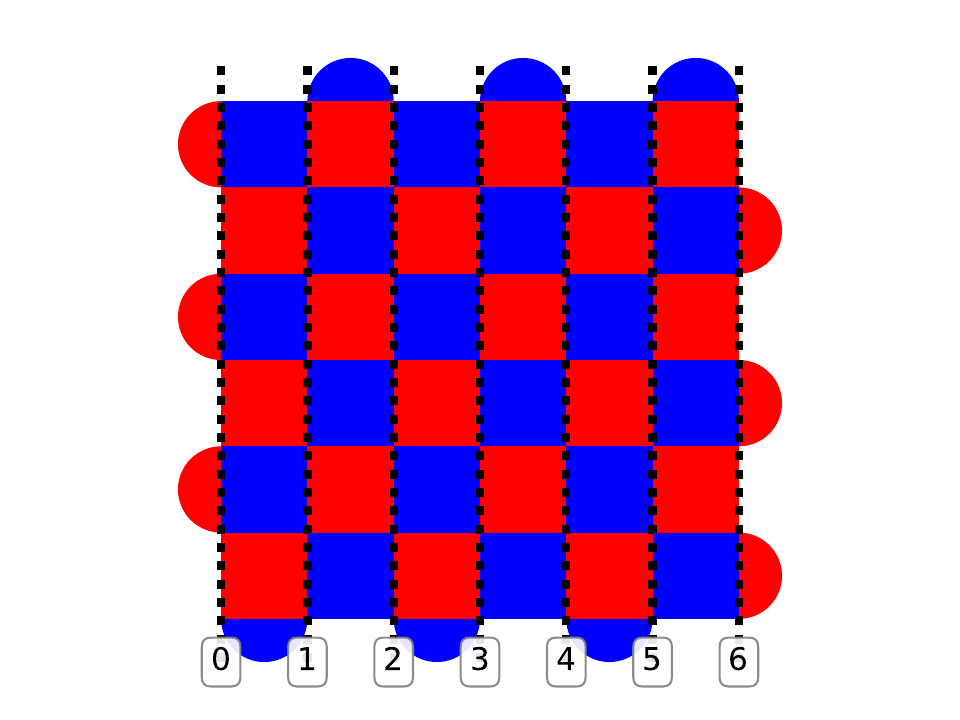}
		\subcaption{Pattern 1}
  \end{minipage}
  \begin{minipage}[]{0.49\hsize}
		\includegraphics[clip,width=7.5cm]{./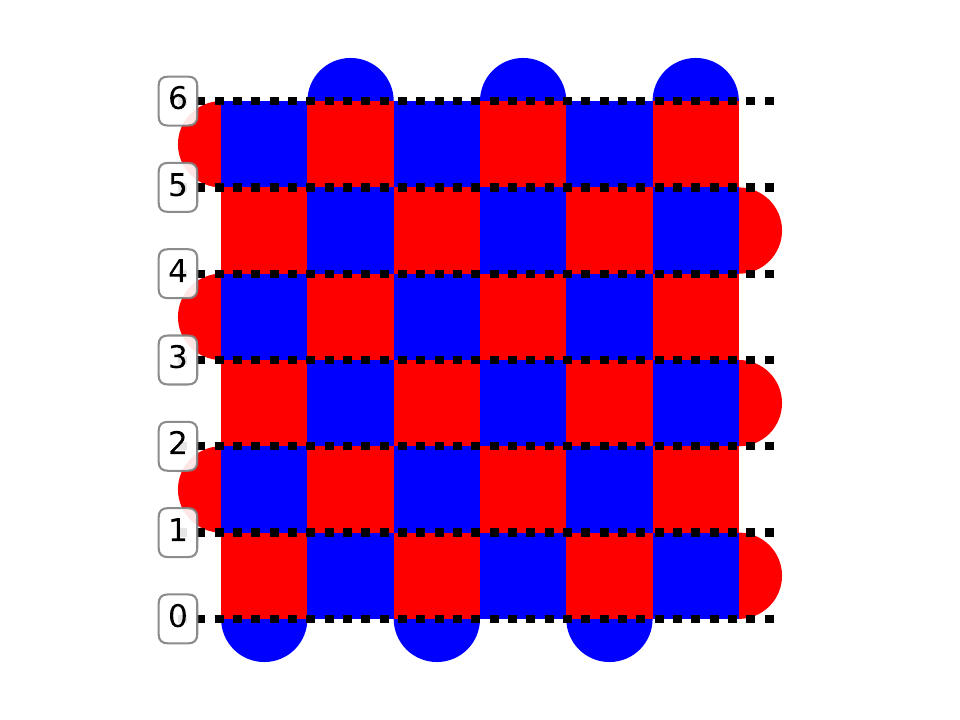}
		\subcaption{Pattern 2}
  \end{minipage}
	\caption{Logical operators used for the construction of a diagnosis matrix for the surface codes [[$d^2,1,d$]]. Each dotted black line corresponds to a chosen logical operator.}
	\label{fig:scr_assign}
\end{figure*}
The rule of choice is the same as that of the [[$d^2,1,d$]] surface code.

%In the case of the color codes, every colored face monitors the both parity of $Z$- and $X$-Pauli error of physical qubits at the vertices.
%
We choose $\frac{9}{2}(d+1)$ logical operators for the [6,6,6]-color code as shown in Fig.\,\ref{fig:cc666_assign}.
\begin{figure*}[tp]
	\centering
  \begin{minipage}[]{0.32\hsize}
		\includegraphics[clip,width=5cm]{./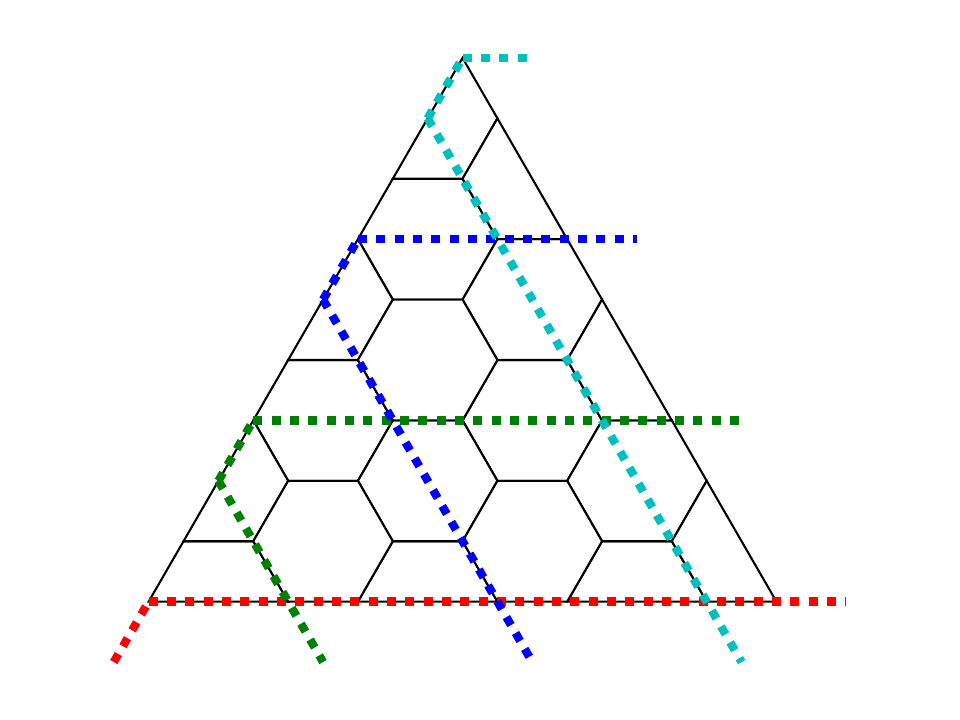}
		\subcaption{Pattern 1}
  \end{minipage}
  \begin{minipage}[]{0.32\hsize}
		\includegraphics[clip,width=5cm]{./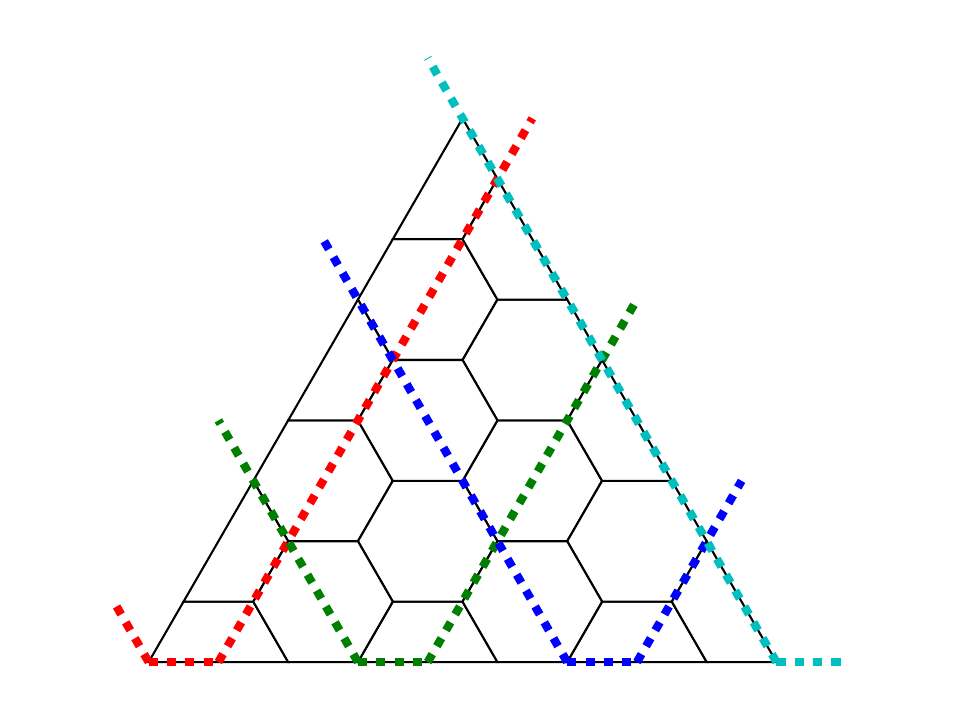}
		\subcaption{Pattern 2}
  \end{minipage}
  \begin{minipage}[]{0.32\hsize}
		\includegraphics[clip,width=5cm]{./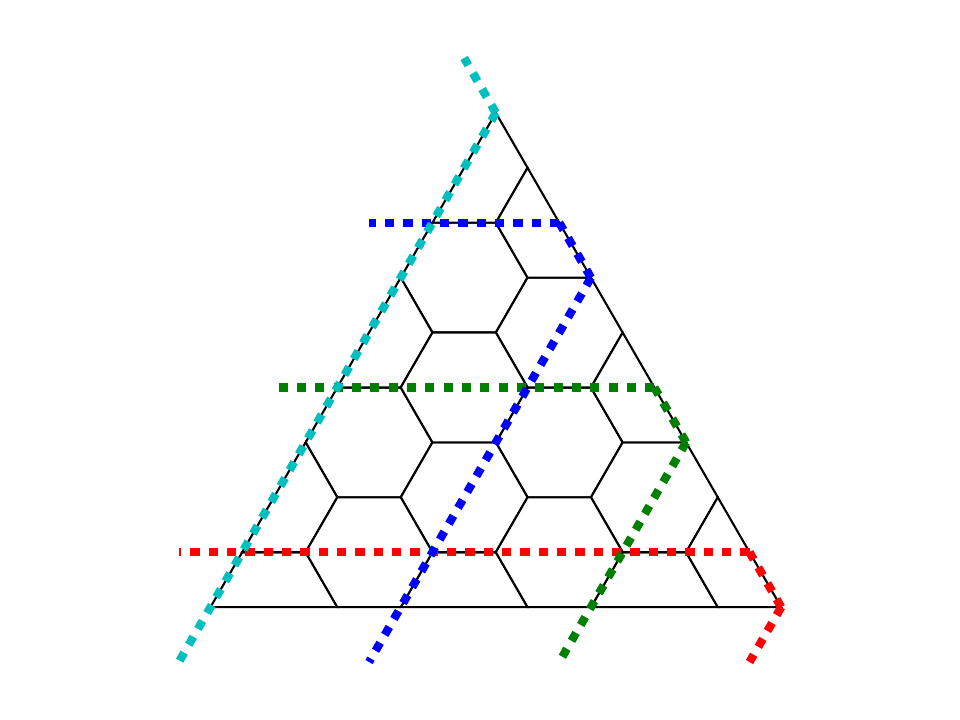}
		\subcaption{Pattern 3}
  \end{minipage}
	\caption{Logical operators used for the construction of a diagnosis matrix for the [6,6,6]-color codes. Each colored line corresponds to chosen logical operators. The lines are colored only for visibility, and are not related to the colors of color codes.}
	\label{fig:cc666_assign}
\end{figure*}
There are $\frac{1}{2}(d+1)$ lines for each pattern.
In all of the three patterns, each line corresponds to the logical $X$-, $Z$-, and $Y$-operators on the physical qubits on the line.
We choose $6(d+1)$ logical operators for the [4,8,8]-color code as shown in Fig.\,\ref{fig:cc488_assign}.
\begin{figure*}[tp]
	\centering
  \begin{minipage}[]{0.49\hsize}
		\includegraphics[clip,width=7.5cm]{./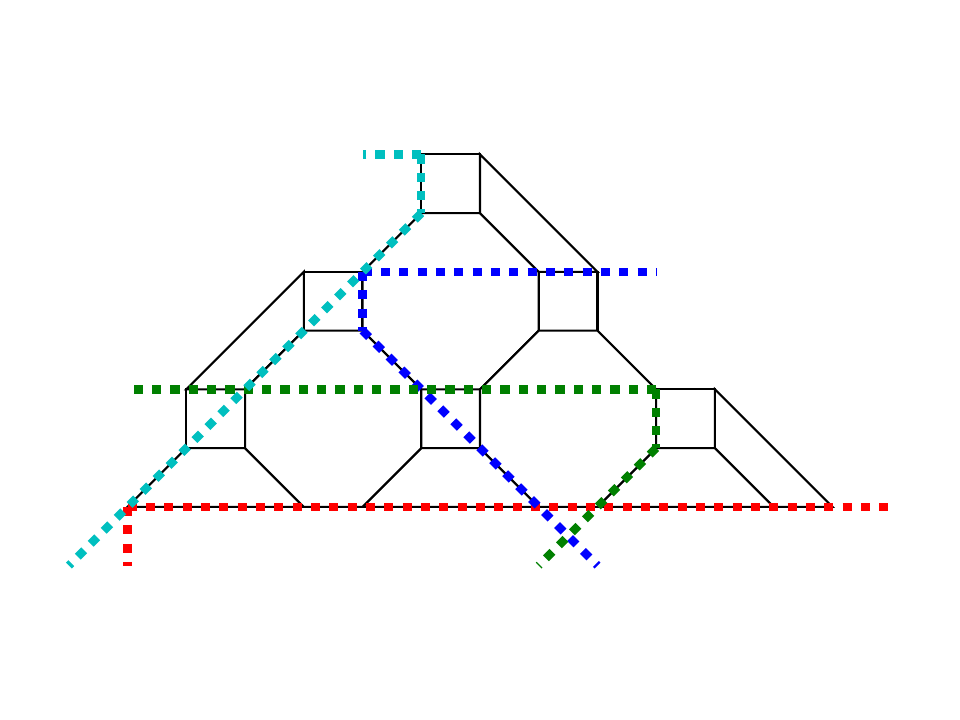}
		\subcaption{Pattern 1}
  \end{minipage}
  \begin{minipage}[]{0.49\hsize}
		\includegraphics[clip,width=7.5cm]{./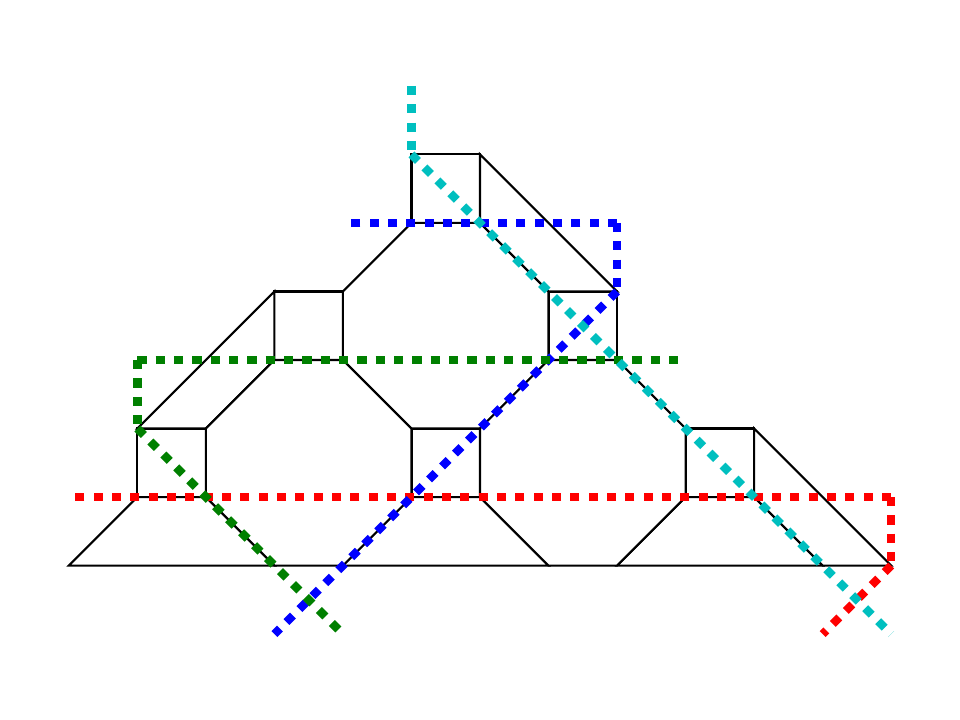}
		\subcaption{Pattern 2}
  \end{minipage}
	\vspace{1mm}
  \begin{minipage}[]{0.49\hsize}
		\includegraphics[clip,width=7.5cm]{./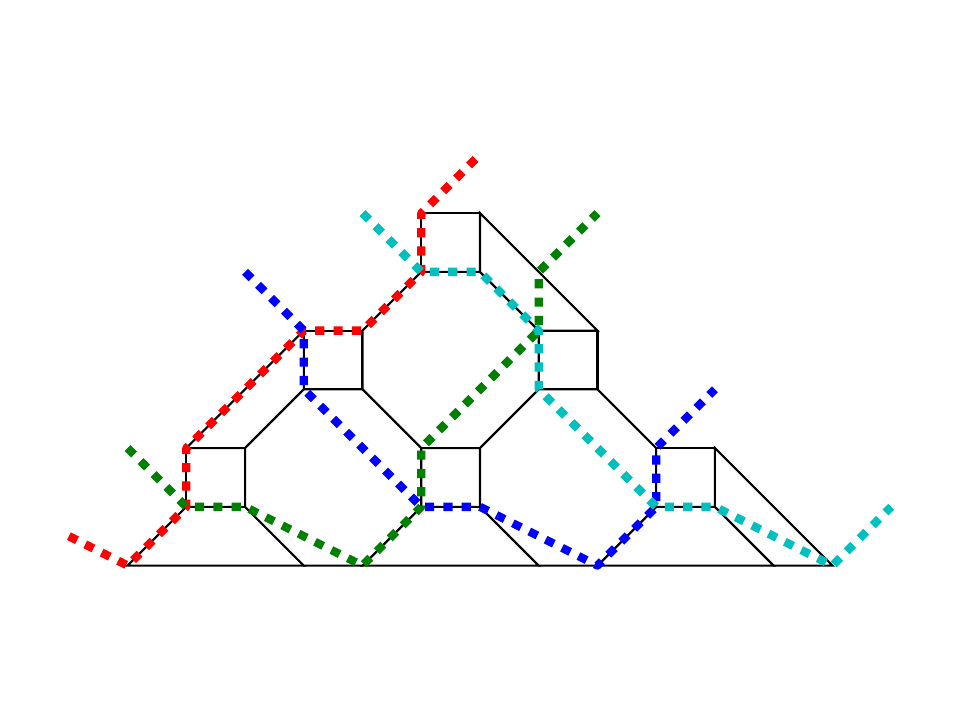}
		\subcaption{Pattern 3}
  \end{minipage}
  \begin{minipage}[]{0.49\hsize}
		\includegraphics[clip,width=7.5cm]{./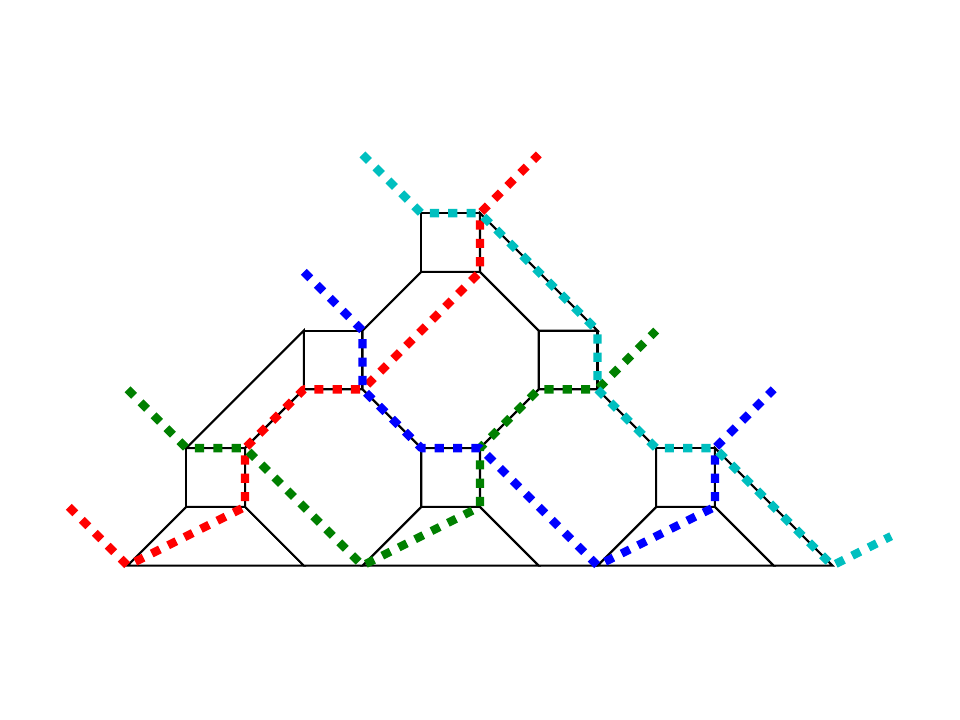}
		\subcaption{Pattern 4}
  \end{minipage}
	\caption{Logical operators used for the construction of a diagnosis matrix for the [4,8,8]-color codes. Each colored line corresponds to chosen logical operators. The lines are colored only for visibility, and are not related to the colors of color codes.}
	\label{fig:cc488_assign}
\end{figure*}
There are $\frac{1}{2}(d+1)$ lines for each pattern.
The choice of the logical operators is the same as that of the [6,6,6]-color codes.

In all the patterned choice of the logical operators, we can verify that the sensitivity is constant, since every physical qubit is measured by at most constant number of logical operators. 
On the other hand, the minimum boundary distance is scaled as $O(d)$, since the same number $O(d)$ of logical $X$-, $Y$-, and $Z$-operators are used.
Thus, the normalized sensitivity is scaled as $O(d^{-1})$ with these choices.

\bibliographystyle{apsrev4-1}
\bibliography{cite1,cite2}

\end{document}